\title{Finite-State Relative Dimension, dimensions of A. P. subsequences
  and a Finite-State van Lambalgen's theorem} 
\author[1]{Satyadev Nandakumar}
\author[1]{Subin Pulari}
\author[1]{Akhil S}
\affil[1]{
  Department of Computer Science and Engineering\\
  Indian Institute of Technology Kanpur,
  Kanpur, Uttar Pradesh, India.
}
\affil[]{\{satyadev,subinp,akhis\}@cse.iitk.ac.in}
\newtheorem{theorem}{Theorem} 
\newtheorem{lemma}{Lemma} 
\newtheorem{corollary}{Corollary}
\newtheorem{definition}{Definition}
\newtheorem*{theorem*}{Theorem}
\newcommand{\Mod}{\text{ }\mathrm{mod}\text{ }}
\newcommand{\bet}{\text{}P'}
\newcommand{\dspace}{\;\;\;\;\;\;\;\;}
\newcommand{\PROOF}{\begin{proof}}
\newcommand{\QED}{\end{proof}}
\newcommand{\N}{\mathbb{N}}
\renewcommand{\dim}{{\mathrm{dim}}}
\newcommand{\Dim}{{\mathrm{Dim}}}
\def\<{\left\langle}
\def\>{\right\rangle}
\begin{document}

\maketitle
\begin{abstract}
Finite-state dimension (Dai, Lathrop, Lutz, and Mayordomo (2004))
quantifies the information rate in an infinite sequence as measured by
finite-state automata. In this paper, we define a relative version of
finite-state dimension. The finite-state relative dimension
$\dim_{FS}^Y(X)$ of a sequence $X$ relative to $Y$ is the finite-state
dimension of $X$ measured using the class of finite-state gamblers
with an oracle access to $Y$. We show its mathematical robustness by
equivalently characterizing this notion using the relative block
entropy rate of $X$ conditioned on $Y$.

We derive inequalities relating the dimension of a sequence to the
relative dimension of its subsequences along any arithmetic
progression (A.P.). These enable us to obtain a strengthening of
Wall's Theorem on the normality of A.P. subsequences of a normal
number, in terms of relative dimension. In contrast to the original
theorem, this stronger version has an exact converse yielding a new
characterization of normality.

We also obtain finite-state analogues of van Lambalgen's theorem on
the symmetry of relative normality.
\end{abstract}

\pagenumbering{arabic}
\section{Introduction}

Finite-state dimension, introduced by Dai, Lathrop, Lutz and Mayordomo
\cite{Dai2001} measures the information density in a sequence as
measured by finite-state betting algorithms called finite-state
$s$-gales. Every sequence drawn from a finite alphabet has a
well-defined finite-state dimension between 0 and 1. We can view
dimension 1 sequences as being ``finite-state random'', and dimension
0 sequences as being easily predictable using finite-state $s$-gales.
This notion is mathematically robust, having several equivalent
characterizations - using finite-state gamblers \cite{Dai2001}, block
entropy rates, and finite-state compressibility ratios
\cite{bourke2005entropy}. Due to this equivalence with finite-state
compressibility, a result due to Schnorr and Stimm \cite{SchSti72}
implies that a sequence is Borel normal if and only if it has
finite-state dimension 1, establishing connections to metric number
theory and probability.

In this work, we introduce a notion of the \emph{relative}
finite-state dimension of a sequence $X$ with respect to any arbitrary
sequence $Y$. Intuitively, this is the information density in $X$ as
viewed by a finite-state machine which has ``oracle'' access to bits
from $Y$. We show that this leads to a meaningful notion of relative
finite-state dimension based on finite-state $s$-gales with oracle
access. We provide an equivalent characterization of this notion using
\emph{conditional} block entropy rates, generalizing the insight in
the work of Bourke, Hitchcock and Vindochandran
\cite{bourke2005entropy}.  This demonstrates the mathematical
robustness of our definition.  Our notion is a finite-state analogue
of conditional entropy, in a similar manner as mutual dimension
introduced by Case and Lutz \cite{case2021finite} is an analogue of mutual
information.

We use our notion of relative finite-state dimension to derive a
stronger version of Wall's theorem on subsequences along arithmetic
progressions (A.P.) of normal sequences. While Wall's theorem states
that A.P. subsequences of any normal sequence $X$ must be normal, we
show that in addition, such an A.P. subsequence must also be
relatively normal with respect to all other A.P. subsequences of $X$
with the same common difference. This version has a precise converse,
unlike Wall's original theorem. This yields a new
\emph{characterization} of normality in terms of its
A.P. subsequences.

We finally show the symmetry of relative finite-state randomness - $X$
is finite-state relatively random to $Y$ if and only if the converse
holds. This can be viewed as a finite-state version of van Lambalgen's
theorem \cite{lambalgen1987random} establishing that relative
Martin-L\"of randomness is symmetric. Since van Lambalgen's theorem
analgoues fail to hold for other randomness notions \cite{Yu2007},
\cite{Bauwens2020}, \cite{DowneyHirschfeldt2010},
\cite{Chakraborty2017}, this shows that finite-state randomness is a
rare setting where symmetry of relative randomness does hold. We show,
further that this result does not generalize to arbitrary finite-state
dimensions.

As a consequence of this analogue of van Lambalgen's theorem, we
establish that finite-state independence, introduced by Becher, Carton
and Heiber \cite{Becher2018}, implies relative finite-state
normality. The converse direction remains open.

\section{Preliminaries}

We consider a finite alphabet $\Sigma$. The set of finite strings from
$\Sigma$ is denoted as $\Sigma^*$ and the set of infinite sequences,
by $\Sigma^\infty$. For any finite string $x$, we denote its length by
$|x|$. We use the notation $x[i:j]$, where $0 \le i < j \leq |x|$, to
denote the substring of $x$ from position $i$ to $j-1$, both ends
inclusive. The character at position $n$ is denoted by $x[n]$. We use
the notation $x[:n]$ to denote $x[0:n]$ which is the prefix of $x$
containing the first $n$ characters. We use similar notations for
infinite sequences. For $\ell \in \mathbb{N}$, $\Sigma^{<\ell}$
denotes the set of strings from alphabet $\Sigma$ of length less than
$\ell$.

For any $n \in \mathbb{N}$, $[n]$ denotes the set $\{0,1,\dots,n-1\}$.
Given infinite sequences $Y_0, Y_1, \dots Y_{n-1}$, we define the
\emph{product sequence} $Y_0 \times Y_1 \times Y_2 \times \dots
Y_{n-1}$ to be the infinite sequence whose $i^\text{th}$ digit is
$(Y_0[i],Y_1[i],Y_2[i],\dots Y_{n-1}[i])$. We define the
\emph{interleaved sequence} $Y_0 \oplus Y_1 \oplus Y_2 \oplus \dots
Y_{n-1}$ to be the sequence in $\Sigma^\infty$ whose $i$
\textsuperscript{th} bit is equal to $Y_{i \Mod n}[\lfloor i/n
  \rfloor]$.
  
Given any $S \in \Sigma^* \cup \Sigma^\infty$, $d \in \N$ and any $i
\in \{0,1,2,\dots d-1\}$ let $A^{S}_{d,i}$ denote the
$i$\textsuperscript{th} A. P. subsequence of $S$ with common difference
$d$. That is,
\begin{align*}
A^S_{d,i}=S[i] \;S[i+d] \;S[i+2d] \;S[i+3d] \dots.	
\end{align*}
The sequence $A^S_{d,i}$ has atmost $\lceil \lvert S \rvert /d \rceil$
digits when $S$ is a finite string, and is infinite when $S$ is
infinite. When the sequence $S$ is clear from the context, we use the
simplified notation $A_{d,i}$ instead of $A^S_{d,i}$.
  
For any natural number $n>0$, $\mathrm{Bij}(n)$ denotes the set of
permutations on $[n]$. Logarithms in this work have base 2.
\section{Finite-State Relative Dimension}
\label{sec:finitestaterelativedimesnion}

The finite-state relative dimension of an infinite sequence $X \in
\Sigma_1^\infty$ with respect to another an infinite sequence $Y \in
\Sigma_2^\infty$ represents the finite-state asymptotic information
density of $X$ relative to Y. We formulate finite-state relative
dimension using an extension of the finite-state gambler
\cite{Dai2001} which we call as finite state relative gambler. In
contrast to the classic finite-state gambler, the finite-state
relative gambler betting on an infinite sequence $X$ is given access
to the characters of another infinite sequence $Y$. The gambler may
utilize the information obtained from the characters of $Y$ while
betting on X. To preserve the finite nature of the model, the gambler
is allowed access to only a finite window of characters of Y. This
finite window shifts forward in Y, in a sliding manner as the gambler
processes X. 
\medskip


\begin{definition}
A $k$-account \emph{finite-state relative gambler }(FSRG) with window length $\ell$ is a 5-tuple $G = (Q,\delta,\vec{\beta},q_0,\vec{c})$ where:
\begin{itemize}
\item $Q$ is a nonempty, finite set of states,
\item $\Sigma_1$ and $\Sigma_2$ are finite alphabets,
\item $\delta : Q \times \Sigma_2^\ell \times \Sigma_1\rightarrow Q$ is
  the transition function,
\item $\vec\beta : Q \times \Sigma_2^\ell \times \Sigma_1 \rightarrow
  (\mathbb{Q} \cap [0,1])^k$ is the account-wise betting function, such that $\forall q
  \in Q, y \in \Sigma_2^\ell$, $i \in \{1\dots k\}$, we have $\sum_{a \in \Sigma_1}
  \beta_i(q,y,a) = 1$,
\item $q_0 \in Q$ is the initial state, and
\item $\vec{c} = (c_1,\dots c_k)$ is the initial capital per account, generally taken to be $1/k$.
\end{itemize}
\end{definition}

Let a finite state relative gambler (FSRG) G be in state $q \in Q$.
After reading the oracle word $y \in \Sigma_2^\ell$, the gambler places
the bet $ \vec\beta(q,y,x)$ on the next character $x \in \Sigma_1$ of $X$.
After processing $x$, the input pointer to $X$ as well as $Y$ move
forward by a single position. Further, $G$ enters the state
$\delta(q,y,x)$.

\medskip

Now, we define the extended transition function $\delta^* : Q \times
\Sigma_2^* \times \Sigma^*_1 \to Q $. Given $q \in Q$, $y \in
\Sigma_2^*$ and $x \in \Sigma_1^*$, $\delta^{*}(q,y,x)$ specifies the
state the gambler reaches from state $q$ on input string $x$ and
oracle string $y$. Since the oracle input requires a \emph{$\ell$-window
	lookahead}, $\delta^{*}(q,y,x)$ is undefined unless $\lvert y \rvert =
\lvert x \rvert + \ell-1$. For $q$, $y$ and $x$ satisfying this condition
\begin{align*}
	&\delta^{*}(q,y,x) = \begin{cases}
		\delta(q', \;y[|y|-\ell:|y|], \;x[|x| - 1]) &\text{if } |x| > 1,
		\\ \delta(q,y,x) & \text{if } |x| = 1.
	\end{cases}\\
	&\text{ where } q' = \delta^*(q,y[:\lvert y \rvert-1], x[:\lvert x
	\rvert-1]) 
\end{align*}

Analogous to \cite{Dai2001}, we define the concept of an $s$-gale
induced by a FSRG and the corresponding success sets.
\begin{definition}
Let $G = (Q,\delta,\vec\beta,q_0,\vec{c})$ be an $k$-account FSRG with a window length $\ell$, and with
oracle access to $Y$. The $s-gale$ induced by the $i^{th}$ account of $G$ is the function $d_{G_i,Y}^{(s)} : \Sigma_1^*
\rightarrow [0, \infty)$ defined by 
$d_{G_i,Y}^{(s)}(\lambda) = c_i$ 
and for every $w \in \Sigma_1^*$ and $x \in \Sigma_1$, by
\begin{align*}
d_{G_i,Y}^{(s)}(wx) &= |\Sigma_1|^s \times \beta_i(q,y, x) \times d_{G_i,Y}^{(s)}(w)
\end{align*}
where  $y = Y[\lvert w \rvert - 1 : \lvert w \rvert+k-1]$ and $q =
\delta^*(q_0,Y[: \lvert w \rvert+k-1],X[:|w|])$.

\medskip

The total $s-gale$ (or simply the $s-gale$) of $G$ is the function

$$d_{G,Y}^{(s)} = \sum\limits_{i=1}^{k} d_{G_i,Y}^{(s)}$$

\end{definition}	
  
We call an $s$-gale induced by a finite-state relative gambler as a
\emph{finite-state relative $s$-gale}.

Intuitively, when the automata is in state q, after reading the oracle word $y \in \Sigma_2^\ell$ and the next character $x \in \Sigma_1$, the $s$-gale
function of the $i^{th}$ account gets multiplied by a factor of $\lvert \Sigma_1 \rvert^{s}
\times \beta_i(q,y,x) $. As remarked in \cite{Dai2001}, this corresponds
to ``betting under inflation'', since the expected value after any bet
is less than that before the bet.

\begin{definition}
We say that the finite-state relative $s$-gale $d_{G,Y}^{(s)}$
\emph{succeeds} on a sequence $X \in \Sigma_1^\infty$ if
$$\limsup_{n \rightarrow \infty} d_{G,Y}^{(s)}(X[:n]) = \infty.$$ The
\emph{success set} of $d_{G,Y}^{(s)}$ is
\begin{align*}
  S^\infty[{d_{G,Y}^{(s)}}] = \{X \in  \Sigma_1^\infty \bigm\lvert
  d_{G,Y}^{(s)} \text{ succeeds on X } \}.
\end{align*}
\end{definition}

For any $X \in \Sigma_1^\infty$ and $Y \in \Sigma_2^\infty$,
$\mathcal{G}^Y(X)$ is the set of all $s \in[0, \infty)$ such that
  there is a finite-state relative $s$-gale $d_{G,Y}^{(s)}$ for which
  $X \in S^\infty[d_{G,Y}^{(s)}]$.

\begin{definition} 
The \emph{ finite-state relative dimension} of a sequence $X \in
\Sigma_1^\infty$ with respect to a sequence $Y \in \Sigma_2^\infty$ is
   \begin{align*}
     \dim_{FS}^Y(X) = \inf \mathcal{G}^Y(X).
   \end{align*}
\end{definition}

 \begin{definition}
We say that the finite-state relative $s$-gale $d_{G,Y}^{(s)}$
\emph{strongly succeeds} on a sequence $X \in \Sigma_1^\infty$ if
	$$\liminf_{n \rightarrow \infty} d_{G,Y}^{(s)}(X[:n]) = \infty.$$ The
	\emph{strong success set} of $d_{G,Y}^{(s)}$ is
	\begin{align*}
		S^\infty_{str}[{d_{G,Y}^{(s)}}] = \{X \in  \Sigma_1^\infty \bigm\lvert
		d_{G,Y}^{(s)} \text{ strongly succeeds on X } \}.
	\end{align*}
\end{definition}

For any $X \in \Sigma_1^\infty$ and $Y \in \Sigma_2^\infty$,
$\mathcal{G}_{str}^Y(X)$ is the set of all $s \in[0, \infty)$ such
  that there is a finite-state relative $s$-gale $d_{G,Y}^{(s)}$ for
  which $X \in S^\infty_{str}[d_{G,Y}^{(s)}]$.

\begin{definition} 
	The \emph{ finite-state relative strong dimension} of a sequence $X \in
	\Sigma_1^\infty$ with respect to a sequence  $Y \in
	\Sigma_2^\infty$ is
	\begin{align*}
		\Dim_{FS}^Y(X) = \inf \mathcal{G}^Y_{str}(X).
	\end{align*}
\end{definition}

Dai et al. showed the equivalence between multi account finite state gamblers and single account finite-state gamblers while defining finite-state dimension (see Theorem 4.5 in \cite{Dai2001}). It is straightforward to verify that the proof from \cite{Dai2001} relativizes and extends to the setting of a finite-state relative gamblers as well. Hence we use the model of single account FSRG in the rest of the paper, unless specified otherwise.

\section{Relative block entropy rates and  Finite-state relative dimension}
\label{sec:relativeblockentropyratesandfsrd}

 We formulate the notion of relative disjoint block entropy rates,
 extending the definitions given by Bourke, Hitchcock and
 Vinodchandran \cite{bourke2005entropy}. Analogously, we characterize finite-state relative
 dimension using relative block entropy rates, establishing that the
 notion of relative dimension is mathematically robust. The proof in
 \cite{bourke2005entropy} uses the finite state compressor
 characterization of finite-state dimension. In contrast, we use the direct characterization using $s$-gales.
\newpage

\subsection{Relative Block entropy rates}

Let $n,k,\ell \in \mathbb{N}$, where $n = {k}{\ell}$.  Given $x \in
\Sigma_1^n \text{ and } y \in \Sigma_2^n$ and $u 
\in \Sigma_1^\ell \text{ and } v \in \Sigma_2^\ell$, let
\begin{align*}
	N(v,y) = \lvert\{0\leq i < k \; | \;
	y[i\ell : (i+1)\ell] = v \}\lvert
\end{align*}
be the number of times $v$ occurs in $\ell$ length blocks of
$y$ and let
\begin{align*}
	N(u,x \;;\; v,y) = \Big\lvert\Big\{0\leq i <
	k~|~x[i\ell :
	(i+1)\ell ] = u \; \land \; y[i\ell :
	(i+1)\ell ] = v\Big\}\Big\lvert.
\end{align*}
be the number of times $u$ and $v$ occur concurrently in the disjoint
$\ell$ length blocks of $x$ and $y$ respectively.

\medskip
The \emph{block frequency} of $v$
in $y$ is defined as
\begin{align*}
	P(v , y) = \frac{N(v,y)}{k}
\end{align*}

The \emph{joint block frequency} of $u$ in $x$ with $v$
in $y$ is defined as
\begin{align*}
P(u ,x \;;\; v , y) = \frac{N(u,x\;;\;  v,y)}{k}
\end{align*}

When $N(v,y) \neq 0$, the \emph{conditional block frequency} of $u$ in
$x$ relative to $v$ in $y$ is defined as 
\begin{align*}
	P(u ,x \;|\; v , y) = \frac{N(u,x\;;\;  v,y)}{N(v,y)}
\end{align*} 

For any $v \in \Sigma_2^\ell$ such that $N(v,y) \neq 0$, it follows
that $\sum\limits_{u \in \Sigma_1^\ell} P(u ,x \;|\; v , y)= 1$.

\medskip

\begin{definition}
For two finite sequences $x \in \Sigma_1^n \text{ and } y \in
\Sigma_2^n$, the \emph{$\ell$-length block entropy rate} of $x$
relative to $y$ is defined as:
\begin{align*}
  H_\ell(x | y) = - \frac{1}{\ell \; log(|\Sigma_1|)}
  \sum_{v \in \Sigma_2^\ell} \sum_{u \in
    \Sigma_1^\ell} P(u ,x \;;\; v , y) \; \log(P(u ,x \;|\; v , y)).
\end{align*}
with the summation taken only over $u,v$ such that $N(u,x \;;\; v,y) \neq 0$. 
\end{definition}
 
\begin{definition}
For $X \in \Sigma_1^\infty \text{ and } Y \in \Sigma_2^\infty$, the
$\ell$-length \emph{block entropy rate} of $X$ relative to $Y$ is
defined as:
\begin{align*}
  H_\ell(X|Y) = 
  \liminf\limits_{k\rightarrow \infty} H_\ell(X[:k\ell]\;|\;Y[:k\ell])
\end{align*}	
The \emph{block entropy rate} of $X$ relative to $Y$ is defined
as\footnote[1]{We later show that the $\inf\limits_{\ell}$ can be
  replaced with a stronger $\lim\limits_{\ell \rightarrow \infty}$ for
  both block entropy rate and upper block entropy rate.} :
\begin{align*}
  H(X|Y) = \inf\limits_{\ell \in \mathbf{N}} H_\ell(X|Y)
\end{align*}	 
\end{definition}

\begin{definition}
For $X \in \Sigma_1^\infty \text{ and } Y \in \Sigma_2^\infty$, the
$\ell$-length \emph{upper block entropy rate} of $X$ relative to $Y$
is defined as:
\begin{align*}
  \overline{H}_\ell(X|Y) = 
  \limsup\limits_{k\rightarrow \infty} H_\ell(X[:k\ell]\;|\;Y[:k\ell])
\end{align*}	
The \emph{upper block entropy rate} of $X$
relative to $Y$ is defined as:
\begin{align*}
  \overline{H}(X|Y) = \inf\limits_{\ell \in \mathbf{N}} \overline{H}_\ell(X|Y)
\end{align*}	
\end{definition}

\subsection{Upper bounding dimension using entropy}

For two sequences $X \in \Sigma_1^\infty \text{ and } Y \in
\Sigma_2^\infty$, we prove that $\dim_{FS}^Y(X) \leq H(X|Y)$ by
constructing a finite-state relative $s$-gale for any $s>H(X|Y)$, that
succeeds on $X$. Similarly, we prove that $\Dim_{FS}^Y(X) \leq \overline{H}(X|Y)$by constructing an $s'$- gale for any $s'>
\overline{H}(X|Y)$ that strongly succeeds on $X$.

\medskip

The following lemma gives the logarithm of the capital attained by a
finite state relative gambler $G$ betting on $\ell$ length blocks of
$X$. Given a positive joint probability distribution $\bet(x,y)$, the
conditional distribution $\bet(x|y)$ can be obtained as $ \bet(x|y) =
\frac{\bet(x,y)}{\bet(y)}$ where $\bet(y) = \sum_{u \in
  \Sigma_1^\ell}\bet(u,y)$.

\begin{lemma}
	\label{lem:gamblerfromP} 
Given a block length $\ell$ and a probability distribution $\bet(x,y)
: \Sigma_1^{\ell} \ \times \Sigma_2^{\ell} \rightarrow \mathbb{Q} \cap
(0,1)$, there exists a finite state relative gambler $G$ such that for
any $X \in \Sigma_1^\infty , Y \in \Sigma_2^\infty$, and for any $s
>0$, for the $s$-gale induced by $G$,

\begin{align*}
\log (d_{G,Y}^{(s)}(X[:k\ell]) &= k \ell \log|\Sigma_1| \Big(s +
\frac{1}{\ell \log|\Sigma_1| } \sum_{y \in \Sigma_2^\ell} \sum_{x \in
  \Sigma_1^\ell} P_{k\ell}(x,y) \log( \bet(x|y)) \Big).
\end{align*}
where $P_{k\ell}(x,y) = P(x,X[:k\ell] \;;\; y, Y[:k\ell])$.
\end{lemma}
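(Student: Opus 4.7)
The plan is to exhibit an explicit single-account FSRG $G$ of window length $\ell$ whose state machine tracks, at each step, (i) the current offset $j \in \{0,1,\dots,\ell-1\}$ within the current $\ell$-length block of $X$, (ii) the prefix $w \in \Sigma_2^{j}$ of the current $\ell$-block of $Y$ that has already slid past the oracle window, and (iii) the prefix $u \in \Sigma_1^{j}$ of the current block of $X$ read so far. Taking $Q = \{(j,w,u) : 0 \le j < \ell,\; w \in \Sigma_2^j,\; u \in \Sigma_1^j\}$ with initial state $q_0 = (0,\epsilon,\epsilon)$ yields a finite state set. The key point that makes this work with a length-$\ell$ window is that at step $n = i\ell + j$ the oracle window $Y[n:n+\ell]$ equals $y_i[j:\ell]\, y_{i+1}[0:j]$, so while the window fully exposes $y_i := Y[i\ell:(i+1)\ell]$ only at the start of each block ($j=0$), the prefix $y_i[0:j]$ must be carried in the state in order to reconstruct $y_i$ at subsequent steps within the block.

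Given this state layout, I would define the betting function at state $(j,w,u)$ with incoming oracle window $y \in \Sigma_2^\ell$ by first reconstructing the current $Y$-block as $y_i = w\, y[0:\ell-j]$ and then setting
\begin{equation*}
\beta\bigl((j,w,u),y,x\bigr) \;=\; \frac{\sum_{v \in \Sigma_1^{\ell-j-1}} \bet(uxv \mid y_i)}{\sum_{v' \in \Sigma_1^{\ell-j}} \bet(uv' \mid y_i)},
\end{equation*}
i.e.\ the conditional probability under $\bet$ that the next character of the block is $x$, given the full $Y$-block $y_i$ and the $X$-prefix $u$ seen so far. Positivity of $\bet$ makes this a well-defined rational distribution summing to $1$ over $x \in \Sigma_1$. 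The transition function advances $(j,w,u) \mapsto (j+1,\, w\, y_i[j],\, ux)$ inside a block and resets to $q_0$ when $j = \ell-1$, so that each new block starts in state $q_0$ with the oracle window again fully exposing the relevant $Y$-block.

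The computation then proceeds by unrolling the product defining $d_{G,Y}^{(s)}(X[:k\ell])$ block-by-block. Within block $i$ the chain-rule telescoping
\begin{equation*}
\prod_{j=0}^{\ell-1} \bet\bigl(X[i\ell+j] \;\big|\; y_i,\; X[i\ell:i\ell+j]\bigr) \;=\; \bet\bigl(X[i\ell:(i+1)\ell] \;\big|\; y_i\bigr)
\end{equation*}
collapses the per-character bets into a single block-level conditional probability, so block $i$ contributes the factor $|\Sigma_1|^{s\ell}\cdot \bet(x_i \mid y_i)$ with $x_i := X[i\ell:(i+1)\ell]$. Taking logs, summing over $i = 0,\dots,k-1$, and regrouping by the value $(u,v) \in \Sigma_1^\ell \times \Sigma_2^\ell$ of each block,
\begin{equation*}
\sum_{i=0}^{k-1} \log \bet(x_i \mid y_i) \;=\; \sum_{u,v} N(u, X[:k\ell]\,;\, v, Y[:k\ell])\, \log \bet(u \mid v) \;=\; k \sum_{u,v} P_{k\ell}(u,v) \log \bet(u \mid v),
\end{equation*}
and factoring out $k\ell \log|\Sigma_1|$ from the resulting expression yields the displayed identity.

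The only real obstacle is the sliding-window bookkeeping above: the window reveals the entire block $y_i$ only at offset $j=0$, so the state must store just enough of $y_i$ to reconstruct it at every later step within the block without blowing up the state count. Once this is in place the remainder is a direct application of the chain rule $\bet(x_0\cdots x_{\ell-1} \mid y) = \prod_j \bet(x_j \mid y,\, x_0\cdots x_{j-1})$ together with the definition of $P_{k\ell}$.
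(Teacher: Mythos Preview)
Your proposal is correct and takes essentially the same approach as the paper: build a block-wise FSRG whose per-character bets implement the chain-rule decomposition of $\bet(x \mid y)$, so that the product over each $\ell$-block telescopes to $\bet(x_i \mid y_i)$, then take logarithms and regroup by the value of $(x_i,y_i)$ to obtain the stated identity. The only cosmetic difference is the state encoding: the paper captures the full oracle block $y_i$ at the start (when the window displays it entirely) and stores it in the state for the remainder of the block, whereas you incrementally store the characters of $y_i$ as they slide out of the window; both yield finite state sets and identical bets.
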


\begin{proof} 
Consider the following  finite-state relative gambler $G =
(Q,\delta,\beta,q_0)$ with window length $\ell$, where 
	
\begin{itemize}
\item $Q =  (\Sigma_2^{ \ell } \times \Sigma_1^{< \ell }) \cup \{q_\lambda\} $ 
\item For all $(y,w) \in Q$, $y' \in \Sigma_2^\ell$ and $a \in \Sigma_1$
  \begin{align*}\delta(q_\lambda,y,a) &= (y,a)\\
    \delta((y,w),y',a) &= 
    \begin{cases}
      (y,wa),& \text {if } |w| <  \ell-1\\
      q_\lambda, & \text {if } |w| =  \ell-1\\
    \end{cases}	
  \end{align*}
  
\item  For all $(y,w) \in Q$ , $y' \in \Sigma_2^\ell$ and $a \in \Sigma_1$		
  \begin{align*}
    \beta(q_\lambda,y,a) &= \bet(a|y)\\
    \beta((y,w),y',a) &= \frac{\bet(wa|y)}{\bet(w|y)}
  \end{align*}
		
  where for $|w| < \ell$, $\bet(w|y) = \sum_{w'\in\Sigma_1^{\ell - |w|}} \bet(ww'|y)$. 
\item $q_0 = q_\lambda$		
\end{itemize}
	
After processing each $\ell$ length block of $X$, the automata resets
to state $q_\lambda$, and the bets placed on the subsequent $\ell$
characters of $X$ depend only on the characters from $Y$ received at
this state.
It can be seen that if $y \in \Sigma_2^\ell$ is the oracle word
received at state $q_\lambda$, for each $x \in \Sigma_1^\ell$, a
cumulative bet corresponding to $\bet(x|y)$ is placed on the next
$\ell$ characters of input.
	
\medskip
		
Therefore, from the definition of $s$-gale, it follows that for any $k
\in \mathbb{N}$:
		
		\begin{center}
			$\begin{aligned} d_{G,Y}^{(s)}(X[:k\ell]) =
                    \prod_{y \in \Sigma_2^\ell} \prod_{x \in
                      \Sigma_1^\ell} \{ {|\Sigma_1|^{s.\ell} \times
                      \bet(x|y)} \} ^{\hat{N}(x,y)}
			\end{aligned}$
		\end{center} 	
		
where $\hat{N}(x,y) = N(x,X[:k\ell]\;;\;y,Y[:k\ell])$, the number of
times x and y occur concurrently in the disjoint $\ell$ length blocks
of $X[:k\ell]$ and $Y[:k\ell]$ respectively. Taking logarithms,
\begin{align*}
	\log (d_{G,Y}^{(s)}(X[:k\ell]) &= \sum_{y
		\in  \Sigma_2^\ell} \sum_{x \in  \Sigma_1^\ell}
	{\hat{N}(x,y)} \big(\log|\Sigma_1| \; s\ell +   \log(
	\bet(x|y)) \big) \\ 
	& = k s \ell \log|\Sigma_1| + k . \sum_{y \in
		\Sigma_2^\ell} \sum_{x \in
		\Sigma_1^\ell}  \frac{\hat{N}(x,y)}{k} \log(
	\bet(x|y))\\ 
        & = k \ell \log|\Sigma_1| \Big(s + \frac{1}{\ell \log|\Sigma_1| }
		  \sum_{y \in  \Sigma_2^\ell} \sum_{x \in
                    \Sigma_1^\ell}   P_{k\ell}(x,y) \log( \bet(x|y)) 
		  \Big). 
\end{align*}		
\end{proof}

Lemma \ref{lem:eFarProbability} ensures that all elements in a probability space are close to a strictly positive probability distribution.

\begin{lemma}
	\label{lem:eFarProbability}
	For every probability distribution $P$ on $\Omega$ and any given $\epsilon <
	\frac{1}{\lvert \Omega \rvert}$, there exists a probability distribution $Q$ on
	$\Omega$ satisfying the following:
	\begin{enumerate}
		\item For every $\omega \in \Omega$, $Q(\omega) \geq \epsilon$.
		\item $\max\limits_{\omega \in \Omega} \left\lvert P(w) - Q(w) \right\rvert \leq
		\lvert \Omega \rvert \epsilon$.
	\end{enumerate}
\end{lemma}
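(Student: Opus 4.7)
The plan is to construct $Q$ as an explicit convex combination of $P$ with the uniform distribution on $\Omega$, tuned so that the lower bound $\epsilon$ is met exactly while the $\ell_\infty$ distance from $P$ is controlled. Let $U$ denote the uniform distribution, $U(\omega) = 1/|\Omega|$, and set $\alpha = |\Omega|\epsilon$. The hypothesis $\epsilon < 1/|\Omega|$ guarantees $\alpha \in (0,1)$, so
\begin{align*}
Q(\omega) \;=\; (1-\alpha)\,P(\omega) \;+\; \alpha \cdot \tfrac{1}{|\Omega|}
\end{align*}
is a genuine probability distribution on $\Omega$.

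For the first conclusion, observe that $Q(\omega)$ is bounded below by the contribution from the uniform part alone: $Q(\omega) \geq \alpha/|\Omega| = \epsilon$ for every $\omega \in \Omega$. For the second conclusion, I would compute
\begin{align*}
Q(\omega) - P(\omega) \;=\; \alpha\Bigl(\tfrac{1}{|\Omega|} - P(\omega)\Bigr),
\end{align*}
and since both $P(\omega)$ and $1/|\Omega|$ lie in $[0,1]$, the factor in parentheses has absolute value at most $1$. Therefore $|Q(\omega) - P(\omega)| \leq \alpha = |\Omega|\epsilon$ for every $\omega$, which establishes the required bound on $\max_\omega |P(\omega)-Q(\omega)|$.

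There is no real obstacle here; this is the standard smoothing (mixing with the uniform distribution) that replaces an arbitrary distribution by one bounded away from zero while staying close in $\ell_\infty$. The only slight subtlety is checking that the constraint $\epsilon < 1/|\Omega|$ is exactly what makes the mixing weight $\alpha = |\Omega|\epsilon$ a legal probability, so the constructed $Q$ remains a probability distribution.
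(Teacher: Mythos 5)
Your proof is correct, and it takes a genuinely different route from the paper's. The paper constructs $Q$ by explicitly shifting mass: every $\omega$ with $P(\omega)<\epsilon$ is raised to exactly $\epsilon$, and the deficit is paid for by scaling down the excess $P(\omega)-\epsilon$ proportionally over the points with $P(\omega)\geq\epsilon$; verifying that this yields a probability distribution satisfying both bounds requires a short computation using $\epsilon<1/\lvert\Omega\rvert$. You instead take the convex combination $Q=(1-\alpha)P+\alpha U$ with $U$ uniform and $\alpha=\lvert\Omega\rvert\epsilon$, for which both conclusions are one-line checks: the uniform component alone gives $Q(\omega)\geq\alpha/\lvert\Omega\rvert=\epsilon$, and $\lvert Q(\omega)-P(\omega)\rvert=\alpha\lvert 1/\lvert\Omega\rvert-P(\omega)\rvert\leq\alpha$. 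Your argument is shorter, avoids any case split, and makes transparent why the hypothesis $\epsilon<1/\lvert\Omega\rvert$ is needed (it is exactly the condition $\alpha<1$). It also has the incidental virtue of producing a rational $Q$ whenever $P$ and $\epsilon$ are rational, which is harmless here but convenient given that finite-state gamblers in this paper bet with rational values. The paper's construction gives a marginally sharper bound ($\leq\epsilon$ rather than $\leq\lvert\Omega\rvert\epsilon$) at the points where $P(\omega)<\epsilon$, but nothing downstream uses that refinement, so your version serves equally well.
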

\begin{proof}
	We construct the required $Q$ by increasing $P(\omega)$ at any $\omega$ with
	$P(\omega)< \epsilon$ to $\epsilon$ by \emph{shifting} sufficient amount of weight
	from those $\omega$ with $P(\omega) \geq \epsilon$. These modifications are done so
	that $Q$ is a probability distribution on $\Omega$ satisfying the required
	conditions. 
	
	We define $Q$ as follows. For any $\omega$ with $P(\omega) < \epsilon$, let
	$Q(\omega)=\epsilon$. For $\omega$ satisfying $P(\omega) \geq \epsilon$ let
	\begin{align*}
	Q(\omega)= \epsilon + (P(\omega)-\epsilon)
	\left(1-\frac{\sum\limits_{\omega':P(\omega') <
			\epsilon}(\epsilon-P(\omega'))}{\sum\limits_{\omega':P(\omega') \geq
			\epsilon}(P(\omega')-\epsilon)} \right)
	\end{align*}
	The term $\sum\limits_{\omega':P(\omega') \geq \epsilon}(P(\omega')-\epsilon)$ is
	positive because $\epsilon < \frac{1}{\lvert \Omega \rvert}$. $Q(\omega)$ defined
	above is a positive quantity greater than or equal to $\epsilon$ because
	\begin{align*}
	\sum\limits_{\omega':P(\omega') < \epsilon}(\epsilon-P(\omega')) &=
	\sum\limits_{\omega':P(\omega') < \epsilon}\epsilon - 1 +
	\sum\limits_{\omega':P(\omega') \geq \epsilon}P(\omega') \\
	&= \epsilon \lvert \Omega \rvert - \sum\limits_{\omega' : P(\omega') \geq \epsilon}
	\epsilon -1 + \sum\limits_{\omega' : P(\omega') \geq \epsilon} P(\omega') \\
	&= (\epsilon \lvert \Omega \rvert -1) + \sum\limits_{\omega' : P(\omega') \geq
		\epsilon} (P(\omega')-\epsilon) \\
	& \leq \sum\limits_{\omega' : P(\omega') \geq \epsilon} (P(\omega')-\epsilon).
	\end{align*}
	
	The last inequality follows since $\epsilon < \frac{1}{\lvert \Omega \rvert}$. It is
	straightforward to verify that $Q$ is a probability distribution on $\Omega$ and it
	satisifies
	\begin{align*}
	\max\limits_{\omega : P(\omega) < \epsilon} \lvert P(\omega)-Q(\omega) \rvert \leq
	\epsilon.
	\end{align*}
	Consider any $\omega$ such that $P(\omega) \geq \epsilon$. We have
	\begin{align*}
	\lvert P(\omega)-Q(\omega) \rvert &= \frac{P(\omega)-\epsilon}{\sum\limits_{\omega'
			: P(\omega') \geq \epsilon} (P(\omega')-\epsilon)} \sum\limits_{\omega':P(\omega')
		< \epsilon}(\epsilon-P(\omega')) \\&\leq \sum\limits_{\omega':P(\omega') <
		\epsilon}(\epsilon-P(\omega')) \leq \epsilon \lvert \Omega \rvert.
	\end{align*}
	The proof of the lemma is thus complete.
\end{proof}

 In Lemma \ref{lem:deltaP}, we show that when all the bets placed are
 strictly positive, a particular Kullback-Leibler Divergence score
 between the empirical probability distribution and a close enough bet
 is small. In Lemma \ref{lem:dimlessthanH} we use this condition to
 show that when the relative entropy falls, the $s$-gale corresponding
 to a gambler that places a close enough bet to the empirical
 distribution succeeds.
 
\begin{lemma}
		\label{lem:deltaP} 
	Given $\epsilon >0$, and a probability distribution $Q(x,y)$ defined over $\Omega = \Sigma_1^\ell \times \Sigma_2^\ell$. If for all $x,y \in \Sigma_1^\ell \times \Sigma_2^\ell$  $Q(x,y) \geq \epsilon$, then for all $P(x,y)$ such that $\max\limits_{x,y} |P(x,y) - Q(x,y)| < 2\lvert \Omega \rvert \epsilon$, we have
	\begin{align*}
		\frac{1}{\ell \log|\Sigma_1|}   \sum_{y \in \Sigma_2^\ell} \sum_{x \in \Sigma_1^\ell} & P(x,y) \; [\; \log(P(x|y)) - \log{Q}(x|y)\;]\;   <  2 \lvert \Omega \rvert^4 \epsilon.
	\end{align*}
	
\end{lemma}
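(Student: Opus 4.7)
The plan is to recognize the unnormalized left-hand side as the (non-negative) conditional Kullback--Leibler divergence $D(P_{xy} \| Q_{xy}) - D(P_y \| Q_y)$, bound it by the joint $D(P_{xy} \| Q_{xy})$ alone, and then control the joint KL by $\chi^2/\ln 2$, which is easy to estimate from the sup-norm hypothesis.

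First I would use $P(x|y) = P(x,y)/P(y)$ and $Q(x|y) = Q(x,y)/Q(y)$ to split
\[
\sum_{x,y} P(x,y)\bigl[\log P(x|y) - \log Q(x|y)\bigr] \;=\; \sum_{x,y} P(x,y) \log \frac{P(x,y)}{Q(x,y)} \;-\; \sum_{y} P(y) \log \frac{P(y)}{Q(y)},
\]
which is exactly $D(P_{xy} \| Q_{xy}) - D(P_y \| Q_y)$. By Gibbs' inequality the marginal term is non-negative, so the whole quantity is at most $D(P_{xy} \| Q_{xy})$. I would then invoke the standard inequality $D(P \| Q) \leq \chi^2(P,Q)/\ln 2$, which follows from $\ln t \leq t - 1$: multiplying by $P(x,y)$ and summing, and using $\sum_{x,y}(P-Q)=0$, gives $\sum P \ln(P/Q) \leq \sum (P-Q)^2 / Q$.

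Finally, from the hypotheses $\lvert P(x,y) - Q(x,y) \rvert < 2 \lvert \Omega \rvert \epsilon$ and $Q(x,y) \geq \epsilon$ for every $(x,y) \in \Omega$, each of the $\lvert \Omega \rvert$ summands of $\chi^2(P_{xy}, Q_{xy})$ is strictly less than $(2 \lvert \Omega \rvert \epsilon)^2 / \epsilon = 4 \lvert \Omega \rvert^2 \epsilon$, so $\chi^2(P_{xy},Q_{xy}) < 4 \lvert \Omega \rvert^3 \epsilon$. Dividing by $\ell \log \lvert \Sigma_1 \rvert \geq 1$ and absorbing the factor $1/\ln 2 < 2$ gives a bound comfortably inside $2 \lvert \Omega \rvert^4 \epsilon$, the spare factor of $\lvert \Omega \rvert$ absorbing the numerical constants. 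The only subtlety is that $P(x,y)$ may vanish while $\log P(x|y)$ is formally undefined; this is resolved by the convention $0 \log 0 = 0$, under which vanishing $P$-terms contribute nothing to any sum in the chain and every inequality used remains valid at $P(x,y)=0$. No step is genuinely hard---the main conceptual observation is the telescoping of the conditional KL into joint minus marginal KL, after which everything reduces to an elementary chi-squared estimate.
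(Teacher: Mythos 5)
Your proof is correct, and it coincides with the paper's argument in its first half: both decompose the unnormalized sum as $D(P_{xy}\|Q_{xy}) - D(P_y\|Q_y)$ and discard the non-negative marginal divergence, reducing the problem to bounding the joint KL divergence. Where you diverge is in the quantitative step. The paper invokes a ``reverse Pinsker'' inequality, $D_{KL}(P\|Q) \le \tfrac{2}{\alpha_Q}\,\delta(P,Q)^2$ with $\alpha_Q = \min_{x,y} Q(x,y)$, bounds the total variation by $\delta(P,Q) < \lvert\Omega\rvert^2\epsilon$, and lands exactly on $2\lvert\Omega\rvert^4\epsilon$. You instead use the elementary bound $D_{KL}(P\|Q) \le \chi^2(P,Q)\log e$, obtained from $\ln t \le t-1$, and estimate each chi-squared summand by $(2\lvert\Omega\rvert\epsilon)^2/\epsilon$. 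Your route is more self-contained (no external citation needed) and in fact yields the sharper order $O(\lvert\Omega\rvert^3\epsilon)$, so the stated $2\lvert\Omega\rvert^4\epsilon$ follows with room to spare; the only caveat is that absorbing the factor $4/\ln 2 \approx 5.77$ into the extra power of $\lvert\Omega\rvert$ requires $\lvert\Omega\rvert \ge 3$, which holds in every non-degenerate instance (e.g.\ whenever both alphabets have at least two symbols, or $\ell \ge 2$). Your handling of vanishing $P(x,y)$ via the $0\log 0 = 0$ convention is also sound, since $Q$ is bounded below by $\epsilon$ throughout.
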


\begin{proof}
	It is easy to see that the term on the left expands to
\begin{align*}
	\frac{1}{\ell \log|\Sigma_1|}   \sum_{y \in \Sigma_2^\ell} \sum_{x \in \Sigma_1^\ell}  P(x,y)  \log\Big(\frac{P(x,y)}{Q(x,y)}\Big) - \frac{1}{\ell \log|\Sigma_1|}   \sum_{y \in \Sigma_2^\ell}  P(y)  \log\Big(\frac{P(y)}{Q(y)}\Big) 
\end{align*}
	 As the second term is the KL divergence $D(P(y) || Q(y)) $, which is non-negative \cite{CoverThomas1991}, we have that the above expression is less than or equal to the first term, $ D_{KL}(P(x,y) || Q(x,y))$.
	
	Define $\alpha_Q = \min\limits_{x,y} Q(x,y)$. Using the inverse of Pinsker's Inequality \cite{Gotze19}, we see that
	\begin{align*}
		D_{KL}(P(x,y) || Q(x,y)) \leq \frac{2}{\alpha_Q} \delta(P(x,y), Q(x,y))^2
	\end{align*}
where $\delta(P(x,y), Q(x,y))$ is the total variation distance \cite{DaYu17} and
it can be expressed as $\frac{1}{2}$ $\sum\limits_{x,y} |P(x,y) -
Q(x,y)|$.
	
	Using the conditions given, it follows that $\delta(P(x,y), Q(x,y)) < \lvert \Omega \rvert^2 \epsilon  $ and $\alpha_{Q} \geq \epsilon$. Therefore,
	\begin{align*}
		D_{KL}(P(x,y) || Q(x,y)) < 2 \lvert \Omega \rvert^4 \epsilon.
	\end{align*}
\end{proof}

\begin{lemma} 
	\label{lem:dimlessthanH} 
	
	For every $X \in \Sigma_1^\infty \text{ and } Y \in
	\Sigma_2^\infty$,  $\dim_{FS}^Y(X) \leq H(X|Y) $.
	
\end{lemma}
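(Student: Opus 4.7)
The plan is, given any $s > H(X|Y)$, to construct a finite-state relative $s$-gale succeeding on $X$, which shows $s \in \mathcal{G}^Y(X)$; taking $s \downarrow H(X|Y)$ then yields the claim. Since $H(X|Y) = \inf_\ell H_\ell(X|Y)$, first fix a block length $\ell$ and slack $\gamma > 0$ with $H_\ell(X|Y) < s - 2\gamma$. The $\liminf$ in the definition of $H_\ell(X|Y)$ then provides an infinite set $K \subseteq \N$ on which the empirical conditional block entropy satisfies $H_\ell(X[:k\ell] \mid Y[:k\ell]) < s - \gamma$. Writing $\Omega = \Sigma_1^\ell \times \Sigma_2^\ell$, I would also pick $\epsilon > 0$ small enough that $\epsilon < 1/|\Omega|$ and $2|\Omega|^4 \epsilon < \gamma/2$.

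The central difficulty is that an FSRG has a single fixed betting function, whereas the empirical joint block distributions $P_{k\ell}$ vary with $k$. My approach is discretization. Let $\mathcal{D}$ be a finite collection of rational-valued probability distributions on $\Omega$, each with every entry at least $\epsilon$, that forms a $(2|\Omega|\epsilon)$-net in sup-distance of all $\Omega$-distributions bounded below by $\epsilon$. For each $Q \in \mathcal{D}$, invoke Lemma~\ref{lem:gamblerfromP} to construct an FSRG $G_Q$ with betting distribution $Q$, and bundle these into a single multi-account FSRG $G$ with one account per $Q$ and equal initial capital $1/|\mathcal{D}|$. The multi- to single-account equivalence noted at the end of Section~\ref{sec:finitestaterelativedimesnion} then yields a single-account FSRG whose $s$-gale succeeds whenever the multi-account one does.

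To show this $s$-gale succeeds on $X$, fix any $k \in K$. Apply Lemma~\ref{lem:eFarProbability} to perturb $P_{k\ell}$ to an $\epsilon$-bounded distribution at sup-distance at most $|\Omega|\epsilon$ from $P_{k\ell}$, then approximate this perturbation within a further $|\Omega|\epsilon$ by some element of $\mathcal{D}$. Since $\mathcal{D}$ is finite, pigeonhole forces a single $Q^\star \in \mathcal{D}$ to lie within $2|\Omega|\epsilon$ of $P_{k\ell}$ for infinitely many $k \in K$. For each such $k$, Lemma~\ref{lem:deltaP} bounds the gap between the empirical conditional entropy and the logarithmic loss of $Q^\star$ by $2|\Omega|^4 \epsilon < \gamma/2$, so the formula in Lemma~\ref{lem:gamblerfromP} applied to $G_{Q^\star}$ yields
\begin{align*}
\log d_{G_{Q^\star}, Y}^{(s)}(X[:k\ell]) \;\geq\; k\ell \log|\Sigma_1| \cdot \bigl(s - (s-\gamma) - \gamma/2\bigr) \;=\; k\ell \log|\Sigma_1| \cdot \gamma/2,
\end{align*}
which tends to $\infty$ along these $k$. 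Since $d_{G,Y}^{(s)} \geq d_{G_{Q^\star}, Y}^{(s)}$, the $s$-gale $d_{G,Y}^{(s)}$ succeeds on $X$ as well.

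The main obstacle is calibrating the discretization parameter $\epsilon$: it must be small enough for Lemma~\ref{lem:deltaP} to push the entropy slack below $\gamma/2$, yet the net $\mathcal{D}$ must remain finite so that the multi-account FSRG is genuinely finite-state. Fixing $\ell$, $\gamma$, and the alphabet sizes before selecting $\epsilon$ makes this calibration straightforward and keeps the construction honest.
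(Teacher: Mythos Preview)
Your proposal is correct and follows essentially the same route as the paper: fix a block length where the conditional entropy drops below $s$, cover the simplex of joint block distributions by a finite family of $\epsilon$-bounded rational distributions (the paper does this via compactness and Lemma~\ref{lem:eFarProbability}), build a multi-account FSRG with one account per net point via Lemma~\ref{lem:gamblerfromP}, and use Lemma~\ref{lem:deltaP} to show some account's capital grows without bound. Two cosmetic points: a $(2|\Omega|\epsilon)$-net only guarantees approximation within $2|\Omega|\epsilon$, so after the Lemma~\ref{lem:eFarProbability} perturbation your total distance is $3|\Omega|\epsilon$ rather than the $2|\Omega|\epsilon$ that Lemma~\ref{lem:deltaP} assumes (take a finer net or rescale $\epsilon$); and the pigeonhole step, while valid, is unnecessary since the total $s$-gale already dominates every individual account at each $k$.
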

	
\begin{proof}
	
It suffices to show that for any $ s > {H}(X|Y)$, there exists an
$s$-gale $d_{G,Y}^{(s)}$ that succeeds on $X$. By definition, there
exists a block length $ \ell \in \mathbb{N}\;$ such that $
\liminf\limits_{k\rightarrow \infty} H_\ell(X[:k\ell]\;|\;Y[:k\ell]) <
s'$ for some $s' < s$.
	
	Let $S = \{P(x,y): \Sigma_1^\ell \times \Sigma_2^\ell \rightarrow [0,1] : \sum_{x,y} P(x,y) = 1\} $ be the set of joint probability distributions defined over $x,y \in \Omega$, where $\Omega = \Sigma_1^\ell \times \Sigma_2^\ell$.

	Given $0 < \epsilon < \frac{1}{\lvert \Omega \rvert}$, consider the set $T = \{{P}(x,y) \in S: \forall x,y \; P(x,y) \geq \epsilon\}$. From Lemma \ref{lem:eFarProbability} it follows that the 
	union of $2\epsilon\lvert \Omega \rvert$ open boxes around elements of $T$ cover the compact set S, that is 
	
	$$\bigcup_{Q(x,y) \in T}{\{P(x,y) : \max\limits_{x,y} |P(x,y) - Q(x,y)| < 2\epsilon\lvert \Omega \rvert}\} = S.$$
	
	Choose $\{{Q_i}(x,y) : 1 \leq i \leq \mathbf{n}\}$ to be the set of representative elements in the finite subcover of this open cover of S.
	Consider the $\mathbf{n}$-account finite  state relative gambler, with each account placing bets corresponding to a particular $Q_i(x,y)$. Proceeding with the construction in Lemma \ref{lem:gamblerfromP}, in the account corresponding to $Q(x,y)$, we have that 
	\begin{align*}
		\log ((d_{G,Y}^{(s)}(X[:k\ell])) &= 
	 k \ell \log|\Sigma_1| \Big(s + \frac{1}{\ell \log|\Sigma_1| }
	\sum_{y \in  \Sigma_2^\ell} \sum_{x \in  \Sigma_1^\ell}   P_{k\ell}(x,y) \log( Q(x|y))
	\Big)
	\end{align*}
	
	where $P_{kl}(x,y) = P(x, X[:k\ell] \;;\; y, Y[:k\ell]).$ 
	
	\medskip
	
	Since $\liminf\limits_{k\rightarrow \infty} H_\ell(X[:k\ell]\;|\;Y[:k\ell]) < s'$, it follows that for infinitely many $k
	\in \mathbb{N}$, $$-\frac{1}{\ell \log|\Sigma_1| }
	\sum_{y \in  \Sigma_2^\ell} \sum_{x \in  \Sigma_1^\ell}   P(x, X[:k\ell] \;;\; y, Y[:k\ell]) \log( P(x, X[:k\ell] \;|\; y, Y[:k\ell])) < s' .$$
Applying Lemma \ref{lem:deltaP}, we see that in the account corresponding to $Q_i(x,y)$ such that $\max\limits_{x,y} |P(x,y) - Q(x,y)| < 2\lvert \Omega \rvert \epsilon$, we have
	\begin{align*}\log ((d_{G,Y}^{(s)}(X[:k\ell])) > 
		k  \ell \log|\Sigma_1| \; \big(s  - s ' - 2 \lvert \Omega \rvert^4 \epsilon\big). 
	\end{align*}	
	Choose an $\epsilon < \frac{1}{\lvert \Omega \rvert}$ such that $s > s' + 2 \lvert \Omega \rvert^4 \epsilon$. Since $k$ is unbounded, the $\mathbf{n}$-account $s-gale$ succeeds on X relative to Y. 
\end{proof}

\begin{lemma} 
	\label{lem:DimlessthanUpperH} 
	
	For every $X \in \Sigma_1^\infty \text{ and } Y \in
	\Sigma_2^\infty$,  $\Dim_{FS}^Y(X) \leq \overline{H}(X|Y) $.
	
\end{lemma}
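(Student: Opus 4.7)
The plan is to adapt the proof of Lemma~\ref{lem:dimlessthanH} to produce strong success rather than mere success. Fix $s > \overline{H}(X|Y)$ and choose $s'$ with $\overline{H}(X|Y) < s' < s$. By the definition $\overline{H}(X|Y) = \inf_\ell \overline{H}_\ell(X|Y)$, there is a block length $\ell \in \mathbb{N}$ such that $\limsup_{k \to \infty} H_\ell(X[:k\ell] \mid Y[:k\ell]) < s'$. The critical difference from Lemma~\ref{lem:dimlessthanH} is that this bound now holds for \emph{all} sufficiently large $k$, not merely for infinitely many.

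Next I would reuse the multi-account construction of Lemma~\ref{lem:dimlessthanH} essentially verbatim. Choose $\epsilon < 1/|\Omega|$ small enough that $s - s' - 2|\Omega|^4 \epsilon > 0$, use Lemma~\ref{lem:eFarProbability} to cover the compact simplex $S$ of joint distributions on $\Sigma_1^\ell \times \Sigma_2^\ell$ by finitely many $2\epsilon|\Omega|$-balls around strictly positive representatives $Q_1, \ldots, Q_{\mathbf{n}}$, and build an $\mathbf{n}$-account FSRG whose $i$\textsuperscript{th} account bets according to $Q_i$ as in Lemma~\ref{lem:gamblerfromP}. For every sufficiently large $k$, the empirical distribution $P_{k\ell}$ lies in the ball around some $Q_{i(k)}$, and combining the entropy bound with Lemma~\ref{lem:deltaP} yields
\begin{align*}
\log d_{G_{i(k)}, Y}^{(s)}(X[:k\ell]) > k\ell \log|\Sigma_1| \,(s - s' - 2|\Omega|^4 \epsilon),
\end{align*}
which tends to $+\infty$. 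Since the total $s$-gale dominates any single account, $d_{G,Y}^{(s)}(X[:k\ell]) \to \infty$ along multiples of $\ell$.

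The only genuinely new step is propagating this divergence from block boundaries to all prefix lengths. From $Q_i(x,y) \geq \epsilon$ together with $Q_i(w|y) \leq 1$, a short computation on the gambler of Lemma~\ref{lem:gamblerfromP} shows $\beta_i \geq \epsilon$ uniformly, so within any single $\ell$-block the capital of each account can fall by at most the fixed positive multiplicative factor $C_{\ell, \epsilon} := \min\{1, (|\Sigma_1|^s \epsilon)^{\ell - 1}\}$. Therefore, for any $n$ with $k\ell \leq n < (k+1)\ell$,
\begin{align*}
d_{G,Y}^{(s)}(X[:n]) \geq C_{\ell, \epsilon}\, d_{G,Y}^{(s)}(X[:k\ell]),
\end{align*}
so $\liminf_{n \to \infty} d_{G,Y}^{(s)}(X[:n]) = \infty$, i.e.\ $X \in S^\infty_{str}[d_{G,Y}^{(s)}]$, and $\Dim_{FS}^Y(X) \leq s$. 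As $s > \overline{H}(X|Y)$ was arbitrary, $\Dim_{FS}^Y(X) \leq \overline{H}(X|Y)$.

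I expect the main (modest) obstacle to be the intra-block bookkeeping that yields $C_{\ell, \epsilon}$: making precise that the strict positivity $Q_i \geq \epsilon$ gives a uniform positive lower bound on each $\beta_i$, independent of $k$, and that the resulting constant depends only on $\ell, \epsilon, s$, and $|\Sigma_1|$. Everything else is a mechanical upgrade of Lemma~\ref{lem:dimlessthanH} in which $\liminf$ becomes $\limsup$ on the entropy side and $\limsup$ becomes $\liminf$ on the gale side.
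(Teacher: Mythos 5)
Your proof is correct and takes essentially the same route as the paper's, which simply says to proceed along the lines of Lemma~\ref{lem:dimlessthanH} with ``infinitely many $k$'' upgraded to ``all sufficiently large $k$'' so that the gale diverges rather than merely having $\limsup$ infinite. The intra-block bound $C_{\ell,\epsilon}$ you supply via $\beta_i \geq \epsilon$ is a detail the paper leaves implicit; it is genuinely needed, since strong success requires $\liminf_{n\to\infty} d_{G,Y}^{(s)}(X[:n]) = \infty$ over all prefix lengths and not just multiples of $\ell$, and your justification of it is sound.
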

\begin{proof}
	
	For any $ s> \overline{H}(X|Y)$, there exists a block length $ \ell \in \mathbb{N}\;$ such that $ \limsup\limits_{k\rightarrow \infty}  H_\ell(X[:k\ell]\;|\;Y[:k\ell]) < s'$ for some $s' < s$ . 
	
	Proceeding in the same lines as Lemma \ref{lem:dimlessthanH}, we get an $\mathbf{n}$-account finite state gambler \cite{Dai2001}, such that for sufficiently large $k\in\mathbb{N}$, in one among the accounts,
	
		\begin{align*}\log ((d_{G,Y}^{(s)}(X[:k\ell])) >
		k  \ell \log|\Sigma_1| \big(s  - s ' - 2 \lvert \Omega \rvert^4 \epsilon\big). 
	\end{align*}	
Choose an  $\epsilon < \frac{1}{\lvert \Omega \rvert}$ such that $s > s' + 2 \lvert \Omega \rvert^4
\epsilon$. Hence the $\mathbf{n}$ account finite state gambler
strongly succeeds on $X$ relative to $Y$. 
\end{proof} 

\subsection{Upper bounding entropy using dimension}

Let an $s$-gale $d_{G,Y}^{(s)}$ corresponding to a finite-state
gambler $G$ succeed on $X \in \Sigma_1^\ell$. For a block length $L
\in \mathbb{N}$, we construct the \emph{stretched finite-state
relative gambler} $G_{L}$, to keep track of the state encountered
after every run of input of length $L$ on $G$.

\begin{definition}
		\label{lem:GLfromG} 
	Let $G = (Q,\delta,\beta,q)$ be a finite state relative
        gambler with window length $l$. For any $L \in \mathbb{N}$
        such that $L > l$, we define the \emph{$L$-length stretched
        FSRG} of $G$ to be $G_L = (Q',\delta',\beta',q_0')$, where:
	
	\begin{itemize}
		\item $Q' = Q \times [L] .$
		\item For all $(q,n) \in Q'$ , $y \in \Sigma_2^l$ and $x \in \Sigma_1$,
		
		\[
		\delta'((q,n),y,x) = 
		\begin{cases}
			(\delta(q,y,x),n+1)& \text {if } n < L-1\\
			(\delta(q,y,x),0) & \text {if } n = L-1\\
		\end{cases}
		\]
		
		\item For all $(q,n) \in Q'$ , $y \in \Sigma_2^l$ and $x \in \Sigma_1$,
		
		\[
		\beta'((q,n),y,x) = 
		\begin{cases}
			\beta(q,y,x),& \text {if } n \leq L - l\\
			1/|\Sigma_1|, & \text {if } n > L- l \\
		\end{cases}
		\]
		
		\item $q_0' = (q_0,0)$.
		
	\end{itemize}

\end{definition}

\begin{lemma}
	\label{lem:ss'}
If the $s$-gale $d_{G,Y}^{(s)}$ succeeds on $X$, then for $s' =
s+\frac{l-1}{L} $, the s'-gale corresponding to the stretched FSRG,
$d_{G_L,Y}^{(s')}$ succeeds on $X$.
\end{lemma}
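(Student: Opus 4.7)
The plan is to establish the pointwise bound $d_{G_L,Y}^{(s')}(X[:N]) \ge d_{G,Y}^{(s)}(X[:N])$ for every $N \in \mathbb{N}$, which directly transfers success from the $s$-gale to the $s'$-gale.

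The key structural observation is that the $Q$-component of $G_L$'s transition function coincides with $\delta$, so after reading any prefix $X[:N]$ the $Q$-projection of $G_L$'s state equals $G$'s state $q_N$, and both gamblers use the same oracle window at position $N$. Consequently, the bets of $G_L$ and $G$ agree at every ``normal'' position (those $i$ with $i \bmod L \le L-l$), while at each of the $l-1$ ``tail'' positions per block of length $L$ the stretched gambler plays the uniform bet $1/|\Sigma_1|$ in place of $\beta(q_i,y_i,X[i])$. Expanding both gales as telescoping products and letting $U(N)$ denote the number of tail positions in $X[:N]$, one obtains
\begin{align*}
\frac{d_{G_L,Y}^{(s')}(X[:N])}{d_{G,Y}^{(s)}(X[:N])} = |\Sigma_1|^{(s'-s)N - U(N)} \cdot \prod_{\substack{i<N \\ i \text{ is a tail position}}} \frac{1}{\beta(q_i,y_i,X[i])}.
\end{align*}

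The product factor is at least $1$ since each $\beta$-value lies in $(0,1]$; a zero-valued bet would have driven $d_{G,Y}^{(s)}$ to $0$, contradicting success. For the $|\Sigma_1|$-exponent I would verify $U(N) \le (s'-s)N = (l-1)N/L$ by writing $N = kL + r$ with $0 \le r < L$ and performing a short case split: $U(N) = k(l-1)$ when $r \le L - l + 1$, and $U(N) = k(l-1) + r - (L - l + 1)$ when $r > L - l + 1$. A direct calculation then shows $(s'-s)N - U(N)$ equals $r(l-1)/L$ in the first case and $(L-l+1)(L-r)/L$ in the second, both non-negative and vanishing exactly at the block boundaries $N = kL$. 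Combining the two factors yields the desired pointwise inequality, and success of $d_{G_L,Y}^{(s')}$ follows at once.

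The argument is essentially a bookkeeping exercise; the one step that requires a little care is the case distinction above, which handles the ``partial block'' at the right end of $X[:N]$ where the tail-position count and the corresponding exponent must be estimated separately from the clean block-boundary case.
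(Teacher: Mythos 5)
Your proof is correct and follows essentially the same route as the paper's: both compare the telescoping bet products of $G_L$ and $G$, observing that the $l-1$ uniform bets per block of length $L$ cost exactly the factor $|\Sigma_1|^{-(l-1)}$ that the inflation from $s$ to $s'=s+\frac{l-1}{L}$ restores, while replacing the original $\beta$-values by $1$ at tail positions only increases the product since $\beta \le 1$. The one difference is that you verify the domination $d_{G_L,Y}^{(s')}(X[:N]) \ge d_{G,Y}^{(s)}(X[:N])$ at every prefix length $N$ (including partial final blocks), whereas the paper checks it only at block boundaries $N=kL$; your version is slightly more careful but the underlying idea is identical.
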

\begin{proof}
	 $d_{G_L,Y}^{(s')}$ succeeds on $X$, because after processing $k$ blocks of length $L$,
	\begin{align*}
		d_{G_L,Y}^{(s')}(X[:kL]) ) &= |\Sigma_1|^{kL(s+\frac{l-1}{L})} \prod_{i=1}^{k} \left\{ \prod_{j = 0}^{L-l}\beta_{(i,j)} \prod_{j = L-l+1}^{L-1} (1/|\Sigma_1|) \right\} \\
		&= \frac{1}{|\Sigma_1|^{k.(l-1)}} |\Sigma_1|^{k(l-1)} |\Sigma_1|^s \prod_{i=1}^{k} \left\{ \prod_{j = 0}^{L-l}\beta_{(i,j)} \prod_{j = L-l+1}^{L-1} 1 \right\} \\
		&\geq d_{G_L,Y}^{(s)}(X[:kL]).
	\end{align*}
\end{proof}

We show that  if $s$-gale $d_{G,Y}^{(s)}$ succeeds on $X$, then for any sufficiently large block length $L$, $H_L(X|Y) \leq s + c/L$, where c is a constant which does not depend on $L$. 

\medskip

\begin{lemma}
	\label{lem:Hlessthandim}
	For every $X \in \Sigma_1^\infty \text{ and } Y \in
        \Sigma_2^\infty$, we have $H(X|Y)  \leq \dim_{FS}^Y(X)$.
\end{lemma}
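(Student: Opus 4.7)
The plan is to fix any $s > \dim_{FS}^Y(X)$ and show that for every sufficiently large block length $L$, $H_L(X|Y) \leq s + O(1/L)$; letting $L \to \infty$ and $s \downarrow \dim_{FS}^Y(X)$ then yields $H(X|Y) \leq \dim_{FS}^Y(X)$. By definition, there is an FSRG $G = (Q,\delta,\beta,q_0)$ with window length $l$ whose $s$-gale succeeds on $X$. For each $L > l$, passing to the stretched gambler $G_L$ and applying Lemma~\ref{lem:ss'} produces a succeeding $s'$-gale with $s' = s + (l-1)/L$. The key feature of $G_L$ is that its bets align with disjoint $L$-blocks: within each block the first $L-l+1$ bets use the original $\beta$ and the last $l-1$ are uniform.

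Consequently, for each starting state $q \in Q$, the composite bet $G_L$ places on an $X$-block $u \in \Sigma_1^L$ given the corresponding $Y$-block $v \in \Sigma_2^L$ is a well-defined conditional probability distribution $\tilde\beta_q(u|v)$ that depends only on $q$, $u$, and $v$. Writing $X^{(i)} = X[iL:(i+1)L]$, $Y^{(i)} = Y[iL:(i+1)L]$ and $q_i \in Q$ for the state at the start of block $i$, a direct computation in the spirit of Lemma~\ref{lem:gamblerfromP} gives
\[
  \log d_{G_L,Y}^{(s')}(X[:kL]) = kLs'\log|\Sigma_1| + \sum_{i=0}^{k-1} \log \tilde\beta_{q_i}(X^{(i)}|Y^{(i)}).
\]
Success of the $s'$-gale yields, for infinitely many $k$, $-\sum_i \log \tilde\beta_{q_i}(X^{(i)}|Y^{(i)}) < kLs'\log|\Sigma_1|$. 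To eliminate the dependence on the trajectory $(q_i)$, I form the uniform mixture $\tilde\beta^*(u|v) = |Q|^{-1}\sum_{q \in Q}\tilde\beta_q(u|v)$, which is itself a conditional distribution and satisfies $\tilde\beta_q \leq |Q|\tilde\beta^*$; replacing $\tilde\beta_{q_i}$ by $\tilde\beta^*$ costs only an additive $k\log|Q|$ on the left side.

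Applying Gibbs' inequality blockwise against the empirical joint distribution $P$ on disjoint $L$-blocks of $(X[:kL], Y[:kL])$ yields
\[
  -\sum_{u,v} N(u,v) \log \tilde\beta^*(u|v) \geq -\sum_{u,v} N(u,v) \log P(u|v) = kL\log|\Sigma_1| \cdot H_L(X[:kL]|Y[:kL]).
\]
Chaining this with the previous bounds gives $H_L(X[:kL]|Y[:kL]) < s + (l-1)/L + \log|Q|/(L\log|\Sigma_1|)$ for infinitely many $k$, hence $H_L(X|Y) \leq s + O(1/L)$ and so $H(X|Y) = \inf_L H_L(X|Y) \leq s$. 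I expect the main obstacle to be the dependence of the per-block bets on the start-of-block state $q_i$: the uniform-mixture trick succeeds precisely because $|Q|$ is independent of $L$, so the $\log|Q|/L$ correction vanishes in the limit, but one must carefully verify that $\tilde\beta_q(\cdot|v)$ is indeed a probability distribution on $\Sigma_1^L$ (which uses the $|\Sigma_1|^{-(l-1)}$ contribution of the uniform bets) and match the log-capital expression to the block-entropy normalization.
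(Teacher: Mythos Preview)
Your proof is correct and follows the same overall skeleton as the paper: pass to the stretched gambler $G_L$, invoke Lemma~\ref{lem:ss'} to get a succeeding $s'$-gale with $s'=s+(l-1)/L$, express the log-capital as a sum over $L$-blocks, and absorb the dependence on the start-of-block state at a cost of $\log\lvert Q\rvert$ per block before invoking Gibbs. The one genuine difference is \emph{how} you eliminate the state. The paper first applies Gibbs with the state still present, obtaining the empirical conditional entropy $H(X\mid Q,Y)$, and then invokes the information-theoretic inequality $H(X\mid Y)\le H(X\mid Q,Y)+H(Q\mid Y)$ together with $H(Q\mid Y)\le\log\lvert Q\rvert$. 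You instead form the uniform mixture $\tilde\beta^*=\lvert Q\rvert^{-1}\sum_q\tilde\beta_q$ and use the pointwise bound $\tilde\beta_q\le\lvert Q\rvert\,\tilde\beta^*$ to remove the state \emph{before} Gibbs, paying the same $\log\lvert Q\rvert$ per block. Both routes land on the identical estimate $H_L(X[:kL]\mid Y[:kL])<s+(l-1)/L+\log\lvert Q\rvert/(L\log\lvert\Sigma_1\rvert)$; your mixture argument is slightly more self-contained (it avoids the chain-rule lemma), while the paper's version makes the role of the state as an auxiliary random variable more transparent.
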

\begin{proof}
	Consider any $s$ such that $s > \dim_{FS}^Y(X)$. It suffices to show that $H(X|Y) \leq s$.
	By definition, there exists a finite-state relative $s$-gale $d_{G,Y}^{(s)}$  that succeeds on
         $X$.  Let $l$ be the oracle window length of the corresponding finite-state relative gambler $G$. For any $L \in \mathbb{N}$ such that $L > l$, consider the \emph{$L$-length stretched FSRG} $G_L$ given in Definition \ref{lem:GLfromG}. From Lemma \ref{lem:ss'}, it follows that taking $s' =
         s+\frac{l-1}{L} $, the s'-gale corresponding to $G_L$,  $d_{G_L,Y}^{(s')}$ succeeds on $X$.
	\medskip

		After processing each $L$ length block of characters from input $X$, $G_L$ reaches the state $(q,0)$ for some $q \in Q$. Let $x = X[iL:(i+1)L] $, the $i^{th}$ disjoint block of length $L$ in $X$ for some $i\in \mathbb{N}$. The bets placed by $G_L$ on the last $L-l$ characters of $x$ is constant. Therefore, the cumulative bet $\beta(q,y,x)$ placed by $G_L$ on $x$ is dependent only on the state $(q,0)$ encountered while processing the block and the corresponding $L$ characters $y = Y[iL:(i+1)L] $ of the oracle input Y. For any $q\in Q, x\in \Sigma_1^L \text{ and } y\in\Sigma_2^L$, let $N(q,y,x)$ be the number of times the an $L$ length block in $X$ is $x$,  in $Y$ is $y$ and the stating state of the block was $(q,0)$. Therefore,
	
	\begin{align*}
		d_{G_L,Y}^{(s')}(X[:kL]) =  \prod_{q \in Q} \prod_{y \in  \Sigma_2^L} \prod_{x \in  \Sigma_1^L} \{ {|\Sigma_1|^{s'.L} \times \beta(q,y,x)} \} ^{N(q,y,x)}.
	\end{align*}

	 Take $N(q,y) = {\sum\limits_{x \in  \Sigma_1^L}N(q,y,x)}$ and $N(q) = {\sum\limits_{y \in  \Sigma_2^L}N(q,y)}$. Hence,
	\begin{align*}
		\log (d_{G_L,Y}^{(s')}(X[:kL])) &= s'kL \log|\Sigma_1| + \sum_{q\in Q} \sum_{y \in  \Sigma_2^L} \sum_{x \in  \Sigma_1^L} {N(q,y,x)} \; \log( \beta(q,y,x)) \\
		&= s'kL \log|\Sigma_1|  \\ 
		&\dspace\;\; +\; k\; 
		{\sum_{q \in Q}} \frac{N(q)}{k}
		{\sum_{y \in  \Sigma_2^L}} \frac{N(q,y)}{N(q)}
		{\sum_{x \in  \Sigma_1^L}}\frac{N(q,y,x)}
		{N(q,y)} \log( \beta(q,y,x)).
	\end{align*}
	
	Taking $P(q) = \frac{N(q)}{k}$, $P(y|q) = \frac{N(q,y)}{N(q)}$ and $P(x|y,q) = \frac{N(q,y,x)} {N(q,y)}$ , stipulating that these take values $0$ if the denominators are zero,
	
	\begin{align*}
		\log (d_{G_L,Y}^{(s')}(X[:kL])) 
		&= k \Big( s'L \log|\Sigma_1| \\ 
		&\dspace \dspace + \sum_{q \in Q} P(q)  \sum_{y \in  \Sigma_2^L} P(y|q)
		\sum_{x \in  \Sigma_1^L} P(x|y,q) \; \log(
		\beta(q,y,x))\Big) .
	\end{align*}

	$d_{G_L,Y}^{(s')}$ succeeds on $X$ only if for infinitely many $k \in \mathbb{N}$,
	
	\begin{align*}
		-  \sum_{q \in Q} P(q)  \sum_{y \in  \Sigma_2^L} P(y|q) 
		\sum_{x \in  \Sigma_1^L} P(x|y,q) \; \log(
		\beta(q,y,x))  < s'L\log|\Sigma_1|.
	\end{align*}
	
	For all $q\in Q$ and ${y \in \Sigma_2^L}$,  $\sum\limits_{x \in \Sigma_1^L} \beta(q,y,x) = 1$ and $\sum\limits_{x \in \Sigma_1^L} P(x,y|q) = 1$ .
	
	Therefore, the KL divergence of these distributions \cite{CoverThomas1991} $	D_{KL}(P(x|y,q) || \beta(q,y,x))$ over values in $x$ is well defined and is non-negative. So it follows that
	
	\begin{align*}
	- \sum_{q \in Q} P(q) \sum_{y \in  \Sigma_2^L} P(y|q) 
	\sum_{x \in  \Sigma_1^L} P(x|y,q) \; \log( P(x|y,q)) < s'L\log|\Sigma_1|.
	\end{align*}

	Using result from information theory that $H(X|Y) \leq H(X|Q,Y) + H(Q|Y)$ \cite{CoverThomas1991}, it follows that,
	
	\begin{align*}
		 - \sum_{y \in  \Sigma_2^L} P(y) \sum_{x \in  \Sigma_1^L} P(x|y) \; \log( P(x|y)) &< s'L\log|\Sigma_1| + H(Q|Y). 
	\end{align*}
	
	$H(Q|Y)$ is at most $\log(n)$, where $n = |Q|$, the number of states in the original gambler $G$.

	Dividing by $L \log|\Sigma_1|$ and substituting the value of $s'$, we have that for infinitely many $k \in \mathbb{N}$, 
	\begin{align*}
	H_L(X[:kL] \;|\; Y[:kL])\ < \;   s + \frac{l-1}{L} + \frac{\log(n)}{L\log|\Sigma_1|}.
	\end{align*}
	
	Taking $\liminf$ over k,
	\begin{equation}\label{eq:HLandL}
		H_L(X|Y) \leq \;   s + \frac{l-1}{L} + \frac{\log(n)}{L\log|\Sigma_1|} .
	\end{equation}
	
	Taking infimum over all $L$,
	   \begin{align*}
	   	H(X|Y) \leq s.
	   \end{align*}
	
\end{proof}

\begin{lemma} 
	\label{lem:UpperHlessthanDim} 
	
	For every $X \in \Sigma_1^\infty \text{ and } Y \in
	\Sigma_2^\infty$,  $ \overline{H}(X|Y) \leq \Dim_{FS}^Y(X) $.	
\end{lemma}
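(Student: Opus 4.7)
The plan is to replay the argument of Lemma \ref{lem:Hlessthandim} almost verbatim, with $\limsup_k$ replacing $\liminf_k$ and strong success replacing success. Fix any $s > \Dim_{FS}^Y(X)$; by definition there is a finite-state relative $s$-gale $d_{G,Y}^{(s)}$ that strongly succeeds on $X$, i.e.\ $\liminf_n d_{G,Y}^{(s)}(X[:n]) = \infty$. Let $l$ denote the oracle window length of $G$ and $n = |Q|$ its number of states. For each $L > l$, form the stretched FSRG $G_L$ from Definition \ref{lem:GLfromG} and set $s' = s + (l-1)/L$. The computation carried out in the proof of Lemma \ref{lem:ss'} actually establishes the pointwise inequality $d_{G_L,Y}^{(s')}(X[:kL]) \geq d_{G,Y}^{(s)}(X[:kL])$ for every $k$, so restricting strong success of $d_{G,Y}^{(s)}$ to the block boundaries and then transferring along this inequality yields $\liminf_k d_{G_L,Y}^{(s')}(X[:kL]) = \infty$.

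Next I would invoke the same block-level accounting used in Lemma \ref{lem:Hlessthandim}. Writing $P(q), P(y|q), P(x|y,q)$ for the empirical frequencies induced by the $L$-block decomposition of $(X[:kL], Y[:kL])$, that computation yields
\begin{align*}
\log d_{G_L,Y}^{(s')}(X[:kL]) = k\Big(s' L \log|\Sigma_1| + \sum_{q,y,x} P(q)\,P(y|q)\,P(x|y,q) \log \beta(q,y,x)\Big).
\end{align*}
Because $\liminf_k d_{G_L,Y}^{(s')}(X[:kL]) = \infty$, the parenthesized quantity is positive for \emph{every} sufficiently large $k$, not merely infinitely many --- this is exactly the upgrade the strong-success hypothesis buys us. Applying Gibbs' inequality $D_{KL}(P(x|y,q) \,\|\, \beta(q,y,x)) \geq 0$ and the chain-rule estimate $H(X|Y) \leq H(X|Q,Y) + H(Q|Y) \leq H(X|Q,Y) + \log n$, exactly as in Lemma \ref{lem:Hlessthandim}, we conclude that for all sufficiently large $k$,
\begin{align*}
H_L(X[:kL] \mid Y[:kL]) < s + \frac{l-1}{L} + \frac{\log n}{L \log|\Sigma_1|}.
\end{align*}

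Taking $\limsup_k$ then gives $\overline{H}_L(X|Y) \leq s + (l-1)/L + \log n / (L \log|\Sigma_1|)$, and taking the infimum over $L$ drives the error terms to zero, producing $\overline{H}(X|Y) \leq s$; since $s > \Dim_{FS}^Y(X)$ was arbitrary, the lemma follows. The only substantive change from Lemma \ref{lem:Hlessthandim} is the upgrade from ``infinitely often'' to ``eventually'' in the entropy bound, which is precisely the distinction between $\limsup_n = \infty$ and $\liminf_n = \infty$ in the definition of (strong) success. I expect no genuine obstacle beyond this bookkeeping, since the construction of the stretched FSRG and the KL-divergence / chain-rule machinery are all reused intact.
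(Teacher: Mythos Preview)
Your proposal is correct and follows essentially the same approach as the paper's own proof, which simply states that the argument of Lemma~\ref{lem:Hlessthandim} carries over with strong success giving the entropy bound for all sufficiently large $k$ rather than infinitely many $k$. You have in fact spelled out more detail than the paper does, including the observation that the proof of Lemma~\ref{lem:ss'} yields a pointwise inequality at block boundaries, which is exactly what is needed to transfer strong success to the stretched gambler.
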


\begin{proof}
	The proof follows along the lines of the proof of Lemma \ref{lem:Hlessthandim}. Using the condition that an $s$-gale strongly succeeding on $X$,  we have that for sufficiently large $k \in \mathbb{N}$, 
	
	\begin{equation}\label{eq:HLandL2}
		H_L(X[:kL] \;|\; Y[:kL])\ < \;   s + \frac{l-1}{L} + \frac{\log(n)}{L\log|\Sigma_1|}.
	\end{equation}
	
	From which it follows that,
	\begin{align*}
		\overline{H}(X|Y) \leq s.
	\end{align*}
\end{proof}

We originally defined {block entropy rates} by taking infimum of $H_\ell(X|Y)$ over all block lengths $\ell$. But as a consequence of the proof, we show that the infimum is infact a limit as $\ell \rightarrow \infty$.

\begin{theorem}
\label{thm:fsrelativedimensionentropyrate} For any two sequences $X \in \Sigma_1^\infty$ and $Y \in \Sigma_2^\infty$,
	\[\dim_{FS}^Y(X) =	\lim_{\ell \rightarrow \infty} H_\ell(X|Y).\]
\end{theorem}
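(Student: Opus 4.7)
The plan is to build the theorem as a squeeze. Lemmas \ref{lem:dimlessthanH} and \ref{lem:Hlessthandim} together already give $\dim_{FS}^Y(X) = H(X|Y) = \inf_{\ell} H_\ell(X|Y)$, so the only remaining task is to promote this infimum to a genuine limit. I will do this by bounding $\limsup_{\ell} H_\ell(X|Y)$ from above and $\liminf_{\ell} H_\ell(X|Y)$ from below, both by $\dim_{FS}^Y(X)$.

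The $\liminf$ direction is immediate: since $\inf_{\ell} H_\ell(X|Y) \leq H_m(X|Y)$ for every $m$, we have $\inf_{\ell} H_\ell(X|Y) \leq \liminf_{\ell} H_\ell(X|Y)$, and combining this with Lemma \ref{lem:dimlessthanH} yields $\dim_{FS}^Y(X) \leq \liminf_{\ell} H_\ell(X|Y)$. The $\limsup$ direction is the only nontrivial step, and for it I would revisit equation (\ref{eq:HLandL}) inside the proof of Lemma \ref{lem:Hlessthandim}. For any $s > \dim_{FS}^Y(X)$, fix a finite-state relative gambler $G$ with $n$ states and window length $l$ whose $s$-gale succeeds on $X$ relative to $Y$. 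The proof of Lemma \ref{lem:Hlessthandim} then shows that for \emph{every} $L > l$,
\[
H_L(X|Y) \;\leq\; s + \frac{l-1}{L} + \frac{\log n}{L \log|\Sigma_1|}.
\]
Letting $L \to \infty$ kills the error terms and gives $\limsup_{\ell} H_\ell(X|Y) \leq s$. Because $s$ is an arbitrary number strictly above $\dim_{FS}^Y(X)$, we conclude $\limsup_{\ell} H_\ell(X|Y) \leq \dim_{FS}^Y(X)$. Chaining the two bounds, all quantities collapse and the limit exists with value $\dim_{FS}^Y(X)$.

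There is no real obstacle; the substantive work has already been done inside Lemma \ref{lem:Hlessthandim}, and this theorem is essentially a repackaging of the uniform bound it produces. The only point I would highlight when writing the proof is that the constants $l$ and $n$ depend on the chosen succeeding $s$-gale (hence on $s$) but not on $L$, which is precisely what lets the $L \to \infty$ limit be taken while holding $s$ fixed; this is the key structural reason the inf upgrades to a lim.
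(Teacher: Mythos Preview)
Your proposal is correct and follows essentially the same approach as the paper: both use Lemma~\ref{lem:dimlessthanH} to get $\liminf_{\ell} H_\ell(X|Y) \geq \dim_{FS}^Y(X)$ and invoke the uniform bound \eqref{eq:HLandL} from the proof of Lemma~\ref{lem:Hlessthandim} to get $\limsup_{\ell} H_\ell(X|Y) \leq \dim_{FS}^Y(X)$. Your added remark that the constants $l$ and $n$ depend on $s$ but not on $L$ is exactly the point that makes the $\limsup$ argument work, and the paper leaves this implicit.
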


\begin{proof}
	Lemma \ref{lem:dimlessthanH} established that $\inf\limits_\ell H_\ell(X | Y ) \geq \dim_{FS}^Y(X)$, therefore
	
	\begin{equation}
	\liminf_{\ell \rightarrow \infty} H_\ell(X|Y) \geq \dim_{FS}^Y(X).	\label{eq:3}
	\end{equation}
	
	From \eqref{eq:HLandL} in Lemma \ref{lem:Hlessthandim}, it follows that
	\begin{equation}
	\limsup_{\ell \rightarrow \infty} H_\ell(X|Y) \leq \dim_{FS}^Y(X). \label{eq:2}
	\end{equation}

	The theorem follows immediately from \eqref{eq:3} and \eqref{eq:2}.
\end{proof}

\medskip

\begin{theorem}
	\label{thm:fsstrongrelativedimensionentropyrate} For any two sequences $X \in \Sigma_1^\infty$ and $Y \in \Sigma_2^\infty$,	
	\[\Dim_{FS}^Y(X) =  \lim_{\ell \rightarrow \infty} \overline{H_\ell}(X|Y).\]
	
\end{theorem}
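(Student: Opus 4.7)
My plan is to mirror the argument used for Theorem \ref{thm:fsrelativedimensionentropyrate}, replacing $\dim_{FS}^Y$ with $\Dim_{FS}^Y$, $H_\ell$ with $\overline{H}_\ell$, and swapping the strong success notions accordingly. The two directions needed are (i) $\liminf_{\ell \to \infty} \overline{H}_\ell(X|Y) \geq \Dim_{FS}^Y(X)$ and (ii) $\limsup_{\ell \to \infty} \overline{H}_\ell(X|Y) \leq \Dim_{FS}^Y(X)$. Together these force the limit to exist and to coincide with $\Dim_{FS}^Y(X)$.

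For the first inequality, I would invoke Lemma \ref{lem:DimlessthanUpperH}, which gives $\Dim_{FS}^Y(X) \leq \overline{H}(X|Y)$. By the definition $\overline{H}(X|Y) = \inf_\ell \overline{H}_\ell(X|Y)$, this yields $\Dim_{FS}^Y(X) \leq \overline{H}_\ell(X|Y)$ for every $\ell \in \mathbb{N}$, and hence $\Dim_{FS}^Y(X) \leq \liminf_{\ell \to \infty} \overline{H}_\ell(X|Y)$.

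For the reverse inequality, I would fix any $s > \Dim_{FS}^Y(X)$ and exploit the bound \eqref{eq:HLandL2} derived inside the proof of Lemma \ref{lem:UpperHlessthanDim}: there is a finite-state relative $s$-gale that strongly succeeds on $X$, and its stretched version $G_L$ yields, for every sufficiently large $L$ and every sufficiently large $k$,
\[
H_L(X[:kL] \mid Y[:kL]) \;<\; s + \frac{l-1}{L} + \frac{\log n}{L \log|\Sigma_1|}.
\]
Taking $\limsup$ over $k$ gives $\overline{H}_L(X|Y) \leq s + \frac{l-1}{L} + \frac{\log n}{L \log|\Sigma_1|}$ for all large $L$, and then taking $\limsup$ over $L$ yields $\limsup_{L \to \infty} \overline{H}_L(X|Y) \leq s$. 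Since $s > \Dim_{FS}^Y(X)$ was arbitrary, $\limsup_{\ell \to \infty} \overline{H}_\ell(X|Y) \leq \Dim_{FS}^Y(X)$.

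There is no real obstacle here beyond being careful about which of $\liminf_k$ versus $\limsup_k$ appears in the relevant definitions; the key asymmetry between weak and strong success (success with $\limsup_n$ vs $\liminf_n$ of the gale value) is exactly what makes the stretched-gambler bound \eqref{eq:HLandL2} hold for \emph{all sufficiently large $k$} in the strong case, which is precisely what is needed to pass the bound through $\limsup_k$ to control $\overline{H}_L(X|Y)$. Combining the two inequalities shows $\liminf_\ell \overline{H}_\ell(X|Y) = \limsup_\ell \overline{H}_\ell(X|Y) = \Dim_{FS}^Y(X)$, completing the proof.
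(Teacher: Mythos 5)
Your proposal is correct and follows essentially the same route as the paper: the lower bound on $\liminf_\ell \overline{H}_\ell(X|Y)$ comes from Lemma \ref{lem:DimlessthanUpperH} together with the definition of $\overline{H}(X|Y)$ as an infimum, and the upper bound on $\limsup_\ell \overline{H}_\ell(X|Y)$ comes from the stretched-gambler estimate \eqref{eq:HLandL2} inside Lemma \ref{lem:UpperHlessthanDim}. Your write-up just makes explicit the passage through $\limsup_k$ and then $L \to \infty$ that the paper leaves implicit.
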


\begin{proof}
	It follows from Lemma \ref{lem:DimlessthanUpperH} that
	\begin{equation}
	\liminf_{\ell \rightarrow \infty} \overline{H}_\ell(X|Y) \geq \Dim_{FS}^Y(X).	\label{eq:5}
	\end{equation}
	
		From \eqref{eq:HLandL2} in Lemma \ref{lem:UpperHlessthanDim}, it follows that
	\begin{equation}
	\limsup_{\ell \rightarrow \infty} \overline{H}_\ell(X|Y) \leq \Dim_{FS}^Y(X). \label{eq:6}
	\end{equation}
	
	The theorem follows immediately from \eqref{eq:5} and \eqref{eq:6}.
\end{proof}

\medskip
\subsection{Multiple Oracles}
\medskip

Consider the case when the automaton has access to multiple oracle
sequences, say $Y_1\cdots Y_n$, where each $Y_i \in \Sigma_2^\infty$.
The finite-state relative dimension and relative block entropy rates
in this scenario is defined by considering a single oracle sequence $Y
\in (\Sigma_2^n)^\infty$, such that $Y = Y_1 \times Y_2 \dots \times
Y_n$.

\begin{definition}
The finite-state relative dimension of a sequence $X$ with respect to
a sequences ${Y_1 \cdots Y_n}$ is defined as $\dim_{FS}^{Y_1 \cdots
  Y_n}(X) = \dim_{FS}^{Y}(X)$.
\end{definition}


\begin{definition}
The \emph{relative block entropy rate} of $X$ with respect to $Y_1$,
$\dots$, $Y_n$ is defined as $$H(X|Y_1 \cdots Y_n) = H(X|Y).$$
\end{definition}
 Therefore from Theorem
 \ref{thm:fsrelativedimensionentropyrate} it follows that,
\begin{theorem}
\label{thm:multifsrelativedimensionentropyrate}
For sequences $X \in \Sigma_1^\infty$ and $Y_1 \dots Y_n \in \Sigma_2^\infty$
\begin{align*}
	\dim_{FS}^{Y_1 \cdots Y_n}(X) = H(X|Y_1 \cdots Y_n)\\
	\Dim_{FS}^{Y_1 \cdots Y_n}(X) = \overline{H}(X|Y_1 \cdots Y_n)
\end{align*}

\end{theorem}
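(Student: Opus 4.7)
The plan is to reduce the multi-oracle statement to the single-oracle versions already proved, by exploiting the product-sequence construction used in the definitions just above the theorem. By definition we have $\dim_{FS}^{Y_1 \cdots Y_n}(X) = \dim_{FS}^{Y}(X)$, $\Dim_{FS}^{Y_1 \cdots Y_n}(X) = \Dim_{FS}^{Y}(X)$, $H(X|Y_1 \cdots Y_n) = H(X|Y)$, and $\overline{H}(X|Y_1 \cdots Y_n) = \overline{H}(X|Y)$, where $Y = Y_1 \times Y_2 \times \cdots \times Y_n$ is a single infinite sequence over the product alphabet $\Sigma_2^n$. So the multi-oracle theorem reduces purely formally to the corresponding single-oracle identities for this new sequence $Y$.

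Next, I would observe that Theorems \ref{thm:fsrelativedimensionentropyrate} and \ref{thm:fsstrongrelativedimensionentropyrate} are stated for an arbitrary finite oracle alphabet $\Sigma_2$, with no requirement on its size or structure. Since $\Sigma_2^n$ is also a finite alphabet, these theorems apply verbatim with $\Sigma_2$ replaced by $\Sigma_2^n$ and with the oracle sequence taken to be $Y$. This gives
\[
\dim_{FS}^{Y}(X) = \lim_{\ell \to \infty} H_\ell(X|Y), \qquad \Dim_{FS}^{Y}(X) = \lim_{\ell \to \infty} \overline{H}_\ell(X|Y).
\]
Since the limits on the right coincide with $H(X|Y)$ and $\overline{H}(X|Y)$ respectively (this equivalence being exactly the content of Theorems \ref{thm:fsrelativedimensionentropyrate} and \ref{thm:fsstrongrelativedimensionentropyrate}, which upgrade the defining infima to genuine limits), the two displayed equalities of the theorem follow immediately upon substituting back the definitions of the multi-oracle quantities.

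There is no real technical obstacle here: the result is a corollary of the single-oracle characterization, with the product construction acting as a notational bridge. The only point worth emphasizing in the write-up is that nothing in the proofs of the single-oracle theorems (the construction of the finite-state relative gambler in Lemma \ref{lem:gamblerfromP}, the stretched gambler in Definition \ref{lem:GLfromG}, and the KL-divergence estimate in Lemma \ref{lem:deltaP}) depends on the cardinality of $\Sigma_2$; all bounds are stated in terms of the alphabet size $|\Sigma_1|$ and the window/block lengths. Thus the proof is essentially a one-line invocation after unpacking definitions.
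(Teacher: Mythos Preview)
Your proposal is correct and matches the paper's approach exactly: the paper presents this theorem as an immediate consequence of Theorem~\ref{thm:fsrelativedimensionentropyrate} (and implicitly Theorem~\ref{thm:fsstrongrelativedimensionentropyrate}), invoking the single-oracle result with the product sequence $Y = Y_1 \times \cdots \times Y_n$ over the finite alphabet $\Sigma_2^n$, just as you describe. Your additional remark that nothing in the underlying lemmas depends on the size of $\Sigma_2$ is a helpful sanity check that the paper leaves implicit.
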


\subsection{Properties of of finite-state relative dimension}

\medskip

The following are some basic properties of finite-state relative dimension.
\begin{lemma}
\label{lem:basicproperties}
For any $X \in \Sigma_1^\infty$ , $Y \in \Sigma_2^\infty$ and $0 \in \Sigma_2$,
\begin{enumerate}
\item $\dim_{FS}^Y(X) \leq \dim_{FS}(X)$. \vspace{0.1cm}

\item $\dim_{FS}^{X}(X) = 0$.  \vspace{0.1cm}

\item $\dim_{FS}^{0^{\infty}}(X)=\dim_{FS}(X)$.

\item $\dim_{FS}^Y(X) \leq \Dim_{FS}^Y(X)$.
\end{enumerate}	
\end{lemma}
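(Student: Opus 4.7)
The plan is to dispatch the four items separately, using direct reductions between gambler models for (1) and (4), and the entropy-rate characterization from Theorem \ref{thm:fsrelativedimensionentropyrate} for (2) and (3).

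For (1), I would embed any classical finite-state $s$-gale into a FSRG whose betting is independent of the oracle. Given a standard finite-state gambler $G = (Q, \delta, \beta, q_0)$, take window length $\ell = 1$ and define $\delta'(q, y, a) = \delta(q, a)$ and $\beta'(q, y, a) = \beta(q, a)$ for every $y \in \Sigma_2$. For every oracle $Y$, the induced relative $s$-gale satisfies $d_{G', Y}^{(s)} = d_G^{(s)}$, so it succeeds on $X$ whenever the original does. Hence every $s$ witnessing $\dim_{FS}(X)$ also witnesses $\dim_{FS}^Y(X)$. For (4), I would simply note that $\liminf_n a_n = \infty$ implies $\limsup_n a_n = \infty$, so strong success implies success; consequently $\mathcal{G}_{str}^Y(X) \subseteq \mathcal{G}^Y(X)$, and taking infima yields $\dim_{FS}^Y(X) \leq \Dim_{FS}^Y(X)$.

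For (2), I would use the entropy characterization. Assuming $\Sigma_2 = \Sigma_1$ so the statement is well-formed, for each $k$ and $\ell$ and each $\ell$-block $v$ with positive frequency in $X[:k\ell]$, the conditional block frequency $P(u, X[:k\ell] \mid v, X[:k\ell])$ equals $1$ when $u = v$ and $0$ otherwise, since the $i$-th disjoint $\ell$-block of $X$ is trivially determined by the $i$-th disjoint $\ell$-block of $X$. Every summand in the definition of $H_\ell$ therefore vanishes, giving $H_\ell(X \mid X) = 0$ for every $\ell$, so $H(X \mid X) = 0$ and Theorem \ref{thm:fsrelativedimensionentropyrate} yields $\dim_{FS}^X(X) = 0$.

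For (3), the inequality $\dim_{FS}^{0^\infty}(X) \leq \dim_{FS}(X)$ is immediate from (1). For the reverse, when $Y = 0^\infty$ the only $\ell$-block with positive frequency in $0^{k\ell}$ is $0^\ell$, with $N(0^\ell, 0^{k\ell}) = k$. Hence the joint frequency reduces to the marginal, $P(u, X[:k\ell] ; 0^\ell, 0^{k\ell}) = P(u, X[:k\ell])$, the conditional frequency also equals $P(u, X[:k\ell])$, and $H_\ell(X[:k\ell] \mid 0^{k\ell})$ collapses to the ordinary disjoint $\ell$-block entropy rate $H_\ell(X[:k\ell])$ of Bourke, Hitchcock and Vinodchandran \cite{bourke2005entropy}. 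Passing to $\liminf_k$ and then $\inf_\ell$ and combining Theorem \ref{thm:fsrelativedimensionentropyrate} with the classical entropy characterization of $\dim_{FS}$ gives the desired equality. No part of the lemma is technically hard; the only subtle point is the window-length bookkeeping used in (1), where the FSRG formalism requires $|y| = |x| + \ell - 1$. Taking $\ell = 1$ trivializes this offset, so I expect the constructions to go through verbatim.
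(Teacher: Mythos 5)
Your proposal is correct. For items (1) and (4) you follow essentially the paper's own argument: (1) is proved there by wrapping an ordinary finite-state gambler into a relative gambler that ignores the oracle (your explicit choice of window length $\ell=1$ just makes the bookkeeping cleaner), and (4) is exactly the observation that $\liminf a_n=\infty$ implies $\limsup a_n=\infty$. For items (2) and (3), however, you take a genuinely different route. The paper argues both directly at the level of gamblers: for (2) it exhibits a one-state, window-length-one relative gambler that bets all its capital on the oracle character, whose $s$-gale succeeds on $X$ with oracle $X$ for every $s>0$; for (3) it converts a relative gambler with oracle $0^\infty$ into an ordinary gambler that hard-wires the oracle word $0^\ell$, yielding $\dim_{FS}(X)\le\dim_{FS}^{0^\infty}(X)$, and combines this with item (1). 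You instead compute the relative block entropies $H_\ell(X\mid X)=0$ and $H_\ell(X[:k\ell]\mid 0^{k\ell})=H_\ell(X[:k\ell])$ and invoke Theorem \ref{thm:fsrelativedimensionentropyrate} together with the Bourke--Hitchcock--Vinodchandran characterization. Both computations are right, and since the lemma appears after Theorem \ref{thm:fsrelativedimensionentropyrate} (whose proof does not use the lemma) there is no circularity. The trade-off is that your argument is shorter and more uniform but leans on the full equivalence theorem, whereas the paper's constructions are elementary and self-contained within the gambler model; your version also needs the side condition $\Sigma_2=\Sigma_1$ for item (2), which you correctly flag.
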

\begin{proof}
  
We first prove statement 1. For any $s > \dim_{FS}(X)$, there exists an
$s$-gale corresponding to a finite-state state gambler $G$ that
succeeds on $X$ by definition. For any $Y$, we can construct a
finite-state relative gambler $G'$ from $G$ as follows. $G'$ places
the same bets on characters of $X$ as $G$, irrespective of the bits of
oracle $Y$ obtained. Hence, for any $X \in \Sigma_1^*$ and $Y \in
\Sigma_2^\infty$, $d^{(s)}_{G',Y}(X) = d_{G}^{(s)}(X)$. Therefore,
$\dim_{FS}^Y(X) \leq s$, and the statement follows.
	
For the proof of Statement 2, consider the finite-state relative
gambler $G$ with a single state $q$ and has oracle window length one. The automata stays in state $q$ irrespective of the
input bit and oracle bits, ie $\forall x,y \in \Sigma_1 \;
\delta(q,y,x) = q $. The gambler bets all its capital on the character
received from the oracle. That is $\beta(q,y,x) = 1$ when $x = y$,
otherwise $\beta(q,y,x) = 0$. For any $X \in \Sigma_1^\infty$ and
$s>0$, the finite-state $s$-gale corresponding to $G$ with oracle
access to $X$ succeeds on $X$. Hence, it follows that
$\dim_{FS}^{X}(X) = 0$. This proves statement 2.

For the proof of Statement 3, consider an $s$-gale corresponding to a
relative gambler $G = (Q,\delta,\beta,q_0,c_0)$ with a window length
of $\ell$ that wins on a sequence $X \in \Sigma_1^\infty$, with an oracle
access to $0^\infty$. Design a gambler
$G=(Q',\delta',\beta',q_0',c_0')$ that has the same set of states as
$G$, $Q' = Q$, and $q_0' = q_0$, $c_0' = c_0$. $G'$ makes the same
transitions and bets as $G$, when the oracle bit received are $0^k$.
That is $\delta'(q,x) = \delta(q,0^\ell,x)$ and $\beta'(q,x) =
\beta(q,0^\ell,x)$. It can be seen that for any $X \in \Sigma_1^\infty$ and
$s>0$, $d_{G,0^\infty}^{(s)}(X) = d_{G'}^{(s)}(X)$ . Therefore
$\dim_{FS}^{0^{\infty}}(X) \leq \dim_{FS}(X)$. This along with
statement 1 proves statement 3.

Statement 4 follows from the fact that $\liminf a_n \leq \limsup a_n$
for any sequence $\langle a_n \rangle$. 
\end{proof}

\section{Finite-state relative dimensions of A. P. subsequences}
\label{sec:fsrelativedimensionandapsubsequences}
In this section, we study the finite-state dimensions of subsequences
chosen along arithmetic progressions (A. P.) using the information
theoretic characterization of finite-state relative dimension
developed in Theorem \ref{thm:fsrelativedimensionentropyrate}. We
derive a collection of inequalities which demonstrate the
relationships between the dimension of a sequence and the dimensions
of its A. P. subsequences. The main results in the following sections are
derived as the consequences of these inequalities. At the end of this
section, we show that these bounds are necessarily inequalities, by
providing examples. These examples also show that the bounds are
tight. For $i \leq d-1$, let $A_{d,i}(X)$ represent the A. P. subsequence
$X[i]X[i+d]X[i+2d]X[i+3d]\dots $ of the sequence $X$. When the
sequence $X$ is clear from the context we use the simplified notation
$A_{d,i}$ to refer to this subsequence.

Using the chain rule of Shannon entropy \cite{CoverThomas1991}, we obtain our first inequality which gives a lower bound for the finite-state
dimension of $X$ in terms of the finite-state relative dimensions of
the its A. P. subsequences.
\begin{lemma}
\label{lem:mainlowerboundinequality}
For any $X \in \Sigma^\infty$, $d \in \N$ and $\sigma \in
\mathrm{Bij}(d)$, 
\begin{align*}
\dim_{FS}(X) \geq \frac{1}{d} \left(\dim_{FS}(A_{d,\sigma(0)})
+\sum\limits_{i=1}^{d-1}
\dim_{FS}^{A_{d,\sigma(0)},A_{d,\sigma(1)},\dots
  A_{d,\sigma(i-1)}}(A_{d,\sigma(i)}) \right)
\end{align*}
\end{lemma}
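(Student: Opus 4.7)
The plan is to leverage the information-theoretic characterization of both $\dim_{FS}$ and $\dim_{FS}^Y$ established in Theorems \ref{thm:fsrelativedimensionentropyrate} and \ref{thm:multifsrelativedimensionentropyrate}, combined with Shannon's chain rule applied to the empirical block distribution of $X$.

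First, I would make the key combinatorial observation that a $d\ell$-length block of $X$ decomposes perfectly into one $\ell$-length block from each A.P. subsequence: for every $m$, $0 \leq i < d$, and $0 \leq j < \ell$, we have $X[md\ell + jd + i] = A_{d,i}[m\ell + j]$. Consequently, each $x \in \Sigma^{d\ell}$ corresponds bijectively to a tuple $(u_0, \dots, u_{d-1}) \in (\Sigma^\ell)^d$, and the empirical block frequency $P(x, X[:kd\ell])$ coincides exactly with the joint empirical block frequency of the tuple $(A_{d,0}[:k\ell], \dots, A_{d,d-1}[:k\ell])$.

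Next, I would apply Shannon's chain rule to this empirical joint distribution along the ordering $\sigma$, obtaining $H(U_0,\dots,U_{d-1}) = H(U_{\sigma(0)}) + \sum_{i=1}^{d-1} H(U_{\sigma(i)} \mid U_{\sigma(0)}, \dots, U_{\sigma(i-1)})$ for the corresponding $\Sigma^\ell$-valued random variables $U_i$. Dividing through by $d\ell\log|\Sigma|$ and matching against the paper's definitions of block entropy rate and conditional block entropy rate, this yields the identity
\[
d \cdot H_{d\ell}(X[:kd\ell]) = H_\ell(A_{d,\sigma(0)}[:k\ell]) + \sum_{i=1}^{d-1} H_\ell\bigl(A_{d,\sigma(i)}[:k\ell] \;\bigm|\; A_{d,\sigma(0)}[:k\ell], \dots, A_{d,\sigma(i-1)}[:k\ell]\bigr).
\]
Since every summand on the right is non-negative, applying $\liminf_{k\to\infty}$ to both sides together with the inequality $\liminf_k (a_k + b_k) \geq \liminf_k a_k + \liminf_k b_k$ repeatedly yields
\[
d \cdot H_{d\ell}(X) \;\geq\; H_\ell(A_{d,\sigma(0)}) + \sum_{i=1}^{d-1} H_\ell\bigl(A_{d,\sigma(i)} \;\bigm|\; A_{d,\sigma(0)}, \dots, A_{d,\sigma(i-1)}\bigr).
\]

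Finally, I would take $\lim_{\ell \to \infty}$ on both sides. The left side tends to $d \cdot \dim_{FS}(X)$ because $\{d\ell\}$ is a subsequence of $\{\ell\}$ and the full $\ell$-limit exists by the entropy-rate characterization; each term on the right tends to the corresponding (relative) finite-state dimension by Theorem \ref{thm:multifsrelativedimensionentropyrate}. Dividing by $d$ produces the claimed inequality. The only place where an inequality (as opposed to an equality) is introduced is the $\liminf$ subadditivity step, which is precisely why the lemma is a one-sided bound; I expect the main point requiring care to be the bookkeeping of the block-decomposition bijection and verifying that the conditional block frequency appearing in the definition of $H_\ell(\,\cdot\mid\,\cdot\,)$ lines up exactly with conditional Shannon entropy of the joint empirical distribution, but there is no genuine technical obstacle beyond that.
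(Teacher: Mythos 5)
Your proof is correct and follows essentially the same route as the paper: identify the $d\ell$-block empirical distribution of $X$ with the joint $\ell$-block distribution of the A.P.\ subsequences, apply Shannon's chain rule along the ordering $\sigma$, use superadditivity of $\liminf$, and pass to the limit in $\ell$. The only difference is presentational -- you inline the block-decomposition bijection directly, whereas the paper packages it as a separate lemma equating the finite-state dimensions of the interleaved and product sequences before invoking the entropy characterization.
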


A result we require in the proof of Lemma
\ref{lem:mainlowerboundinequality} is the following inequality which
gives a lower bound on the limit inferior of the sum of finitely many
real sequences.
\begin{lemma}
\label{lem:liminflowerbound}
Let $\langle a_{0,n} \rangle_{n =0}^{\infty},\langle a_{1,n}
\rangle_{n =0}^{\infty},\dots \langle a_{k-1,n} \rangle_{n
  =0}^{\infty}$ be any finite collection of sequences of real
numbers. Then, 
\begin{align*}
\liminf\limits_{n \to \infty} \left( \sum\limits_{i=0}^{k-1} a_{i,n}
\right)\geq \sum\limits_{i=0}^{k-1} \liminf\limits_{n \to \infty}
a_{i,n}. 
\end{align*}
\end{lemma}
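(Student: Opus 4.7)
The plan is to reduce to the two-sequence case by a straightforward induction on $k$, and then to prove the two-sequence case directly from the definition $\liminf_{n \to \infty} a_n = \lim_{n \to \infty} \inf_{m \geq n} a_m$.

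First I would handle the base case $k = 2$. Fix any $n \in \N$. For every $m \geq n$, one has $a_{0,m} + a_{1,m} \geq \inf_{m' \geq n} a_{0,m'} + \inf_{m' \geq n} a_{1,m'}$, since each individual term dominates the corresponding infimum. Taking the infimum over $m \geq n$ on the left side, the inequality
\begin{align*}
\inf_{m \geq n} (a_{0,m} + a_{1,m}) \geq \inf_{m \geq n} a_{0,m} + \inf_{m \geq n} a_{1,m}
\end{align*}
follows. The sequences $\langle \inf_{m \geq n} a_{0,m} \rangle_n$ and $\langle \inf_{m \geq n} a_{1,m} \rangle_n$ are nondecreasing in $n$, so their limits exist in the extended reals and coincide with $\liminf_{n \to \infty} a_{0,n}$ and $\liminf_{n \to \infty} a_{1,n}$ respectively. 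Letting $n \to \infty$ in the displayed inequality yields the $k = 2$ case.

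Next I would complete the induction. Assuming the inequality holds for sums of $k-1$ sequences, define $b_n = \sum_{i=0}^{k-2} a_{i,n}$ and apply the $k = 2$ case to the pair $\langle b_n \rangle$, $\langle a_{k-1,n} \rangle$, then apply the inductive hypothesis to $\liminf_n b_n$. Chaining the two inequalities gives the result for $k$ sequences.

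The only mild obstacle is the handling of extended-real values: if any of the $\liminf_{n \to \infty} a_{i,n}$ equals $+\infty$ then $\liminf_n \sum_i a_{i,n}$ must also equal $+\infty$ (since all other liminfs are bounded below by something, or we use the usual convention $a + \infty = \infty$ for $a > -\infty$), and if some liminf equals $-\infty$ the right-hand side is $-\infty$ and the inequality is vacuous. Since all the sequences that arise in the application of this lemma (from Lemma~\ref{lem:mainlowerboundinequality}) are bounded below by $0$, the finite case is in fact the only one needed, and no separate treatment of infinities is required.
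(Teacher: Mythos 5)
Your proof is correct. The paper does not actually supply an argument for this lemma --- it simply remarks that it ``can be proved using routine real analytic arguments'' --- and your superadditivity-of-infima step followed by induction on $k$ is exactly the routine argument being alluded to. Your closing remark about extended-real values is also apt: since the lemma is only ever applied to normalized block entropies, which lie in $[0,1]$, the finite case is the only one needed.
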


Lemma \ref{lem:liminflowerbound} can be proved using routine real analytic arguments.

In order to prove Lemma \ref{lem:mainlowerboundinequality}, we require
the following relationship between the finite-state dimensions of the
product sequence and the interleaved sequence of any given finite set
of sequences.
 
\begin{lemma}
	\label{lem:fsdofproductandfsdofinterleaving}
	Let $Y_0,Y_1,Y_2 \dots Y_{n-1}$ be sequences in $\Sigma^\infty$. Then,
	for any $\sigma \in \mathrm{Bij}(n)$,
	\begin{align*}
		\dim_{FS}(Y_0 \times Y_1 \times Y_2 \times \dots Y_{n-1}) &= \dim_{FS}(Y_{\sigma(0)} \times Y_{\sigma(1)} \times Y_{\sigma(2)} \times \dots Y_{\sigma(n-1)})\\
		&=\dim_{FS}(Y_{\sigma(0)} \oplus Y_{\sigma(1)} \oplus Y_{\sigma(2)} \oplus \dots Y_{\sigma(n-1)})\\
		&=\dim_{FS}(Y_{0} \oplus Y_{1} \oplus Y_{2} \oplus \dots Y_{n-1}).
	\end{align*}	
\end{lemma}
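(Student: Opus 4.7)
The plan is to reduce each of the three equalities to a statement about disjoint block entropy rates, using the characterization $\dim_{FS}(X) = \lim_{\ell \to \infty} H_\ell(X)$ obtained by specializing Theorem \ref{thm:fsrelativedimensionentropyrate} with $Y = 0^\infty$ (via Lemma \ref{lem:basicproperties}(3)). If two sequences admit a canonical bijection on their $\ell$-length blocks that pushes forward empirical block frequencies, then all $H_\ell$ values agree and the dimensions coincide.

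First I would handle the permutation-invariance of the product: permuting coordinates inside every $n$-tuple is a single bijection $\phi : \Sigma^n \to \Sigma^n$, which extends to a bijection on $(\Sigma^n)^k$ for every $k$. This relabeling preserves empirical disjoint $k$-block frequencies of the product sequence and leaves the normalization $k \log|\Sigma^n|$ untouched, so $H_k(Y_0 \times \cdots \times Y_{n-1}) = H_k(Y_{\sigma(0)} \times \cdots \times Y_{\sigma(n-1)})$ for every $k$, and the first equality follows in the limit.

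For the core equality between the product and the interleaving, I would exhibit the natural reshape bijection $\psi : (\Sigma^n)^k \to \Sigma^{nk}$. Unpacking the definitions, the $j$-th disjoint $k$-block of $Y_0 \times \cdots \times Y_{n-1}$ and the $j$-th disjoint $nk$-block of $Y_0 \oplus \cdots \oplus Y_{n-1}$ encode exactly the same $n \times k$ array $\{Y_i[jk+s] : 0 \le i < n,\ 0 \le s < k\}$, just grouped into tuples of width $n$ on one side and listed symbol-by-symbol on the other. Hence empirical block frequencies match under $\psi$, and since $k \log|\Sigma^n| = nk \log|\Sigma|$, the normalizations coincide, giving $H_k(Y_0 \times \cdots \times Y_{n-1}) = H_{nk}(Y_0 \oplus \cdots \oplus Y_{n-1})$ for every $k$. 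An analogous argument takes care of the third equality: permuting the sequences inside the interleaving corresponds to permuting positions within each consecutive length-$n$ chunk of every $nk$-block, which is again a bijection on $\Sigma^{nk}$ preserving empirical frequencies, so $H_{nk}(Y_0 \oplus \cdots \oplus Y_{n-1}) = H_{nk}(Y_{\sigma(0)} \oplus \cdots \oplus Y_{\sigma(n-1)})$ for every $k$.

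The main obstacle is not mathematical but organizational: the last two arguments only establish equality of $H_\ell$ values along the subsequence $\ell \in n\mathbb{N}$, not for all $\ell$. I would finish by invoking Theorem \ref{thm:fsrelativedimensionentropyrate} once more to conclude that $\lim_{\ell \to \infty} H_\ell$ actually exists for each of the interleaved sequences, so the full limit agrees with any subsequential limit along $n\mathbb{N}$. Chaining the three equalities of the resulting limits then yields the lemma.
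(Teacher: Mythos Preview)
The paper states Lemma~\ref{lem:fsdofproductandfsdofinterleaving} without proof, so there is nothing to compare against directly. Your argument is correct and is the natural way to fill in this gap: the block-entropy characterization reduces each equality to a relabeling of block alphabets that preserves empirical frequencies and normalization, and the existence of the limit in Theorem~\ref{thm:fsrelativedimensionentropyrate} lets you pass from the subsequence $\ell \in n\mathbb{N}$ to the full limit for the interleaved sequences.

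One small remark: the third equality (permutation invariance of the interleaving) is already implied by the first two, since you can go interleaving $\to$ product $\to$ permuted product $\to$ permuted interleaving. Your direct argument for it is fine, but you could shorten the write-up by noting this.
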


Now, we use Theorem \ref{thm:fsrelativedimensionentropyrate}, Lemma \ref{lem:fsdofproductandfsdofinterleaving} and Lemma \ref{lem:liminflowerbound} to prove Lemma \ref{lem:mainlowerboundinequality}.
\begin{proof}[Proof of Lemma \ref{lem:mainlowerboundinequality}]
	Let us first observe that $X=A_{d,0} \oplus A_{d,1} \oplus \dots A_{d,d-1}$. Therefore from Lemma \ref{lem:fsdofproductandfsdofinterleaving}, it follows that,
	\begin{align*}
	\dim_{FS}(X) &= \dim_{FS}(A_{d,0} \oplus A_{d,1} \oplus \dots A_{d,d-1})\\
	&= \dim_{FS}(A_{d,\sigma(0)} \oplus A_{d,\sigma(1)} \oplus \dots A_{d,\sigma(d-1)})\\
	&= 	\dim_{FS}(A_{d,\sigma(0)} \times A_{d,\sigma(1)} \times \dots A_{d,\sigma(d-1) }). 
	\end{align*}
	From the entropy rate characterization of finite-state dimension \cite{bourke2005entropy} it follows that,
	\begin{align}
	\label{eqn:mainlowerboundeqn1}
	\dim_{FS}(A_{d,\sigma(0)} \times  \dots \times A_{d,\sigma(d-1)}) &= \lim\limits_{l \to \infty} \liminf\limits_{k \to \infty} H_l(A_{d,\sigma(0)} [:kl],  \dots A_{d,\sigma(d-1)} [:kl]).
	\end{align}
	Since, $A_{d,\sigma(0)} \times A_{d,\sigma(1)} \times \dots A_{d,\sigma(d-1)}$ is a sequence in $(\Sigma^d)^\infty$, the normalization factor in the $H_l$ on the right hand side is $dl$ instead of $l$. Using the chain rule of Shannon entropy (see \cite{CoverThomas1991}) it follows that,
	\begin{align}
	\label{eqn:mainlowerboundeqn2}
	&dl H_l(A_{d,\sigma(0)} \times A_{d,\sigma(1)}\times \dots A_{d,\sigma(d-1)}[:kl])	= l H_l(A_{d,\sigma(0)}[:kl]) \nonumber \\
	&+ \sum\limits_{i=1}^{d-1} l H_l(A_{d,\sigma(i)}[:kl] \mid A_{d,\sigma(0)}[:kl], A_{d,\sigma(1)}[:kl] \dots A_{d,\sigma(i-1)}[:kl] ).
	\end{align}
	Dividing both sides by $dl$ and substituting the above in \ref{eqn:mainlowerboundeqn1}, we get,
	\begin{align*}
	  &\dim_{FS}(X) = \dim_{FS}(A_{d,\sigma(0)} \times
          A_{d,\sigma(1)} \times \dots A_{d,\sigma(d-1)})\\ 
	  &= \frac{1}{d}
          \lim\limits_{l \to \infty} \liminf\limits_{k \to \infty}
          \Bigg(H_l(A_{d,\sigma(0)}[:kl]) \\
          &\qquad\qquad\qquad +
          \sum\limits_{i=1}^{d-1}  H_l(A_{d,\sigma(i)}[:kl]\mid A_{d,\sigma(0)}[:kl], \dots
          A_{d,\sigma(i-1)}[:kl] )\Bigg) \\ 
	  &\geq \frac{1}{d}
          \lim\limits_{l \to \infty}  \Bigg(\liminf\limits_{k \to
            \infty}H_l(A_{d,\sigma(0)}[:kl]) \\
          &\qquad\qquad\qquad +
          \sum\limits_{i=1}^{d-1}  \liminf\limits_{k \to
            \infty}H_l(A_{d,\sigma(i)}[:kl] \mid
          A_{d,\sigma(0)}[:kl],  \dots
          A_{d,\sigma(i-1)}[:kl] )\Bigg) \\ 
	  &= \frac{1}{d}
          \Bigg(\lim\limits_{l \to \infty}\liminf\limits_{k \to
            \infty} H_l(A_{d,\sigma(0)}[:kl])  \\
          &\qquad\qquad\qquad +
          \sum\limits_{i=1}^{d-1}  \lim\limits_{l \to
            \infty}\liminf\limits_{k \to
            \infty}H_l(A_{d,\sigma(i)}[:kl] \mid
          A_{d,\sigma(0)}[:kl], \dots
          A_{d,\sigma(i-1)}[:kl] )\Bigg). 
	\end{align*}
	The inequality in the third line above is  obtained from Lemma \ref{lem:liminflowerbound}. Since, 
	\begin{align*}
	\dim_{FS}(A_{d,\sigma(0)}) = \lim\limits_{l \to \infty}\liminf\limits_{k \to \infty}H_l(A_{d,\sigma(0)}[:kl])	
	\end{align*}
	and,
	\begin{multline*}
	\dim_{FS}^{A_{d,\sigma(0)},A_{d,\sigma(1)},\dots
          A_{d,\sigma(i-1)}}(A_{d,\sigma(i)}) = \\
        \lim\limits_{l \to \infty}\liminf\limits_{k \to \infty}
        H_l(A_{d,\sigma(i)}[:kl] \mid
        A_{d,\sigma(0)}[:kl], \dots
        A_{d,\sigma(i-1)}[:kl] ), 
	\end{multline*}
	it follows that
	\begin{align*}
	\dim_{FS}(X) \geq \frac{1}{d} \left(\dim_{FS}(A_{d,\sigma(0)}) +\sum\limits_{i=1}^{d-1} \dim_{FS}^{A_{d,\sigma(0)},A_{d,\sigma(1)},\dots A_{d,\sigma(i-1)}}(A_{d,\sigma(i)}) \right).
	\end{align*}
\end{proof}

By selecting $\sigma$ to be
the identity mapping, we obtain the following corollary of Lemma \ref{lem:mainlowerboundinequality}.
\begin{corollary}
\label{cor:maininequalitycorollary1}
Consider any $X \in \Sigma^\infty$ and $d \in \N$. If for some $r \in
[0,1]$, $\dim_{FS}(A_{d,0}) \geq r$ and for every i, 
\begin{align*}
	\dim_{FS}^{A_{d,0},A_{d,1},\dots A_{d,i-1}}(A_{d,i}) \geq r.
\end{align*}
Then, $\dim_{FS}(X) \geq r$. 
\end{corollary}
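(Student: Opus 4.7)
The plan is to derive this corollary as a direct specialization of Lemma \ref{lem:mainlowerboundinequality}. I will instantiate the permutation $\sigma \in \mathrm{Bij}(d)$ in Lemma \ref{lem:mainlowerboundinequality} to be the identity permutation, i.e., $\sigma(i) = i$ for every $i \in \{0,1,\dots,d-1\}$. Applied this way, the lemma yields
\begin{align*}
\dim_{FS}(X) \geq \frac{1}{d}\left(\dim_{FS}(A_{d,0}) + \sum_{i=1}^{d-1} \dim_{FS}^{A_{d,0},A_{d,1},\dots,A_{d,i-1}}(A_{d,i})\right).
\end{align*}

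Under the hypotheses of the corollary, each of the $d$ summands on the right is at least $r$: the first term $\dim_{FS}(A_{d,0}) \geq r$ by assumption, and for each $i \in \{1,\dots,d-1\}$ the term $\dim_{FS}^{A_{d,0},A_{d,1},\dots,A_{d,i-1}}(A_{d,i}) \geq r$ is also given. Hence the sum is at least $dr$, and dividing by $d$ yields $\dim_{FS}(X) \geq r$, completing the proof.

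There is no real obstacle here; the corollary is a bookkeeping consequence of the lower bound already established in Lemma \ref{lem:mainlowerboundinequality}. The only thing to note is that the role of $\sigma$ in the lemma allows any ordering of the A.P. subsequences, but the hypothesis in the corollary already fixes the natural order $0, 1, \dots, d-1$ for the conditioning, so the identity permutation is the canonical choice.
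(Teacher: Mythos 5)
Your proof is correct and matches the paper's approach exactly: the paper also obtains this corollary by taking $\sigma$ to be the identity permutation in Lemma \ref{lem:mainlowerboundinequality} and bounding each of the $d$ summands below by $r$. Nothing is missing.
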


It follows from Theorem \ref{thm:strongerwallstheoremwithconverse} in Section
\ref{sec:strongerwallstheoremwithconverse} 
that when $X$ is a normal sequence, 
the bound in Lemma \ref{lem:mainlowerboundinequality} is tight. We show that inequality in Lemma \ref{lem:mainlowerboundinequality} cannot be replaced with an
equality in general.  We state the following
lemma for the case when $d=2$ and $\sigma$ is the identity mapping
from $\{0,1\}$ to itself, for the sake of simplicity. Using ideas from \cite{lutz2021weyl}, we obtain the following lemma which generalizes to arbitrary $d$
and $\sigma$ in a straightforward manner.

\begin{lemma}
\label{lem:upperboundistight}
	There exists an $X \in \Sigma^\infty$ such that,

	\[\dim_{FS}(X)>\frac{1}{2}\left(\dim_{FS}(A_{2,0})+\dim_{FS}^{A_{2,0}}(A_{2,1}) \right) \] 	
\end{lemma}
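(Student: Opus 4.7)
The plan is to construct $X\in\{0,1\}^\infty$ in which both $\dim_{FS}(A_{2,0})$ and $\dim_{FS}^{A_{2,0}}(A_{2,1})$ vanish, while $\dim_{FS}(X)\ge 1/2$. The reason such a gap is possible is that the proof of Lemma~\ref{lem:mainlowerboundinequality} passes through the inequality $\liminf_k(a_k+b_k)\ge \liminf_k a_k + \liminf_k b_k$ (Lemma~\ref{lem:liminflowerbound}), which can be strict when the two sequences attain their infima along different subsequences of $k$. I will therefore arrange the two relevant block-entropy sequences to be ``out of phase'' using an alternating-stage construction, working throughout over $\Sigma=\{0,1\}$.

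Fix two Borel normal sequences $Y,Z\in\{0,1\}^\infty$ and partition $\mathbb{N}$ into consecutive blocks $B_1,B_2,\ldots$ of rapidly growing lengths $L_1<L_2<\cdots$ chosen so that $L_{n+1}\ge n!\cdot\sum_{i\le n}L_i$; then the cumulative lengths $T_n:=L_1+\cdots+L_n$ satisfy $L_n/T_n\to 1$ and $T_n$ is divisible by every $\ell\le n$. On odd stages (``type A''), set $A_{2,0}\restr B_n:=0^{L_n}$ and set $A_{2,1}\restr B_n$ equal to the next $L_n$ unused bits of $Y$. On even stages (``type B''), set $A_{2,0}\restr B_n$ equal to the next $L_n$ unused bits of $Z$, and set $A_{2,1}\restr B_n:=A_{2,0}\restr B_n$. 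Finally, let $X:=A_{2,0}\oplus A_{2,1}$, so that the two A.\,P. subsequences of $X$ with common difference $2$ are exactly the $A_{2,0}$ and $A_{2,1}$ just built.

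I would first show $\dim_{FS}(A_{2,0})=0$ via Theorem~\ref{thm:fsrelativedimensionentropyrate}: along prefix lengths $T_n$ with $n$ odd, the empirical frequency of $0$ in $A_{2,0}[:T_n]$ tends to $1$ because $L_n/T_n\to 1$, so the frequency of the all-zero block $0^\ell$ tends to $1$ for every fixed $\ell$, forcing $H_\ell(A_{2,0}[:T_n])\to 0$ and hence $\liminf_k H_\ell(A_{2,0}[:k\ell])=0$. A symmetric argument, applied at prefix lengths $T_n$ with $n$ even, yields $\dim_{FS}^{A_{2,0}}(A_{2,1})=0$: the density of positions where $A_{2,1}[i]=A_{2,0}[i]$ tends to $1$, so the conditional joint $\ell$-block distribution concentrates on diagonal blocks on which $A_{2,1}$ is determined by $A_{2,0}$, driving the conditional block entropy to $0$.

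The main obstacle is the lower bound $\dim_{FS}(X)\ge 1/2$. By Lemma~\ref{lem:fsdofproductandfsdofinterleaving} this reduces to lower-bounding the block entropy rate of the product sequence $A_{2,0}\times A_{2,1}$ over the alphabet $\{0,1\}^2$. For a fixed block length $m$ and any long prefix, the rapid growth of stage lengths makes an $o(1)$ fraction of $m$-blocks straddle a stage boundary, so the empirical block distribution is well approximated by a convex combination with weights $\alpha_k$ and $1-\alpha_k$ (the type-A and type-B fractions of positions seen) of two distributions: the uniform distribution on the $2^m$ blocks $(0,y_1,0,y_2,\ldots,0,y_m)$ arising in type A (using normality of $Y$), and the uniform distribution on the $2^m$ diagonal blocks $(z_1,z_1,z_2,z_2,\ldots,z_m,z_m)$ arising in type B (using normality of $Z$). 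These two supports meet only in $0^{2m}$, so a direct Shannon-entropy computation yields
\[2m\cdot H_m(A_{2,0}\times A_{2,1}[:km]) \;=\; m+(1-2^{-m})\bigl(-\alpha_k\log\alpha_k-(1-\alpha_k)\log(1-\alpha_k)\bigr)+o(1),\]
so $H_m\ge 1/2-o(1)$ uniformly in $k$; taking $\liminf_k$ and then $\lim_m$ gives $\dim_{FS}(X)\ge 1/2$. Combined with the vanishing of the right-hand side of the lemma, this yields the strict inequality $\dim_{FS}(X)\ge 1/2 > 0 = \tfrac{1}{2}\bigl(\dim_{FS}(A_{2,0})+\dim_{FS}^{A_{2,0}}(A_{2,1})\bigr)$. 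The subtlest routine step will be controlling the $o(1)$ boundary contributions and the rate at which empirical block frequencies within a long type-A or type-B stretch converge to the uniform distributions; rapid growth of $L_n$ (and, if necessary, a diagonal tightening of the growth rate in $m$) handles this.
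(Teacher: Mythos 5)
Your proposal is correct and follows essentially the same strategy as the paper's proof: an alternating stage-wise construction with rapidly growing stage lengths, where odd-type stages drive $H_\ell(A_{2,0})$ to $0$ and even-type stages drive $H_\ell(A_{2,1}\mid A_{2,0})$ to $0$, while each stage contributes normalized block entropy $1/2$ to $X$ so that concavity of Shannon entropy gives $\dim_{FS}(X)\ge 1/2$ uniformly over prefixes. The only (immaterial) differences are that the paper appends prefixes of the two fixed diluted sequences $0Y[0]0Y[1]\dots$ and $Y[0]0Y[1]0\dots$ restarting from position $0$ each stage (which slightly simplifies the partial-stage convergence estimates you flag at the end), and kills the conditional entropy by zeroing $A_{2,1}$ rather than by copying $A_{2,0}$.
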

\begin{proof}
	
	Fix a normal number $Y = Y[0]Y[1]Y[2]\dots$. Let $S_0$ be the infinite string $0Y[0]0Y[1]0Y[2]\dots$ and let $S_1$ be the infinite string $Y[0]0Y[1]0Y[2]0\dots$. The required string $X$ is constructed in a stagewise manner, where in alternate stages we append prefixes of $S_0$ and $S_1$ to the end of the prefix $X$ constructed until the previous stage. In even numbered stages where a long enough prefix of $S_0$ is attached to the end, the block entropy rate of $A_{2,0}^X$ is brought close enough to 0. And in odd numbered stages where $S_1$ is attached, the block entropy rate of $A_{2,1}^X$ relative to $A_{2,0}^X$ is brought close enough to 0. $S_0$ and $S_1$ are diluted sequences \cite{Dai2001} having finite state dimension equal to $1/2$. By carefully controlling the lengths of each individual stage and using the concavity of the Shannon entropy, we ensure that the block entropy rate of the whole infinite sequence $X$ has limit inferior equal to $1/2$.   Hence from all the conditions ensured during the stagewise construction, we get a sequence $X$ such that $\dim_{FS}(X) = 1/2$, $\dim_{FS}(A_{2,0}^X) = 0$ and $\dim_{FS}(A_{2,1}^X) = 0$, which proves the required result.
	
	The individual stage lengths are fixed such that they are long enough to satisfy a set of conditions. In order to precisely state these conditions, we require the following definitions.
		
	For $n\geq 0$, $\epsilon > 0$ and a finite string $\omega$, let $N_0^n(\omega, \epsilon)$ denote the smallest integer such that for all $m > N_0^n(\omega, \epsilon)$ and $\ell \leq n$, $H_\ell(A_{2,0}^{\omega S_0[:m]}) \leq \epsilon$.
	Similarly, let $N_1^n(\omega, \epsilon)$ denote the smallest integer such that for all $m > N_1^n(\omega, \epsilon)$ and $\ell \leq n$, $H_\ell(A_{2,1}^{\omega S_1[:m]} ) \leq \epsilon$. Note that this implies that, $H_\ell(A_{2,1}^{\omega S_1[:m]} \;\vert\; A_{2,0}^{\omega S_1[:m]} ) \leq \epsilon$ for every $\ell \leq n$ and $m > N_1^n(\omega, \epsilon)$.
	
	For $i \in \{0,1\}$ and $\omega \in \Sigma^*$, let $M_i^n(\omega,\epsilon)$ denote the smallest integer such that for all $m > M_i^n(\omega,\epsilon)$ and $l \leq n$, 
	\begin{equation}\label{eqn:something1}
		\lvert H_\ell(\omega S_i[:m]) - 1/2 \rvert \leq \epsilon.
	\end{equation}

	And for $i \in \{0,1\}$, let $J_i^n(\epsilon)$ denote the smallest integer such that for all $m \geq J_i^n(\epsilon)$ and $l \leq n$, $H_\ell(S_i[:m]) \geq 1/2 - \epsilon$.
	
	Using the above definitions, we now give the construction of $X$ in full detail. Initially let $\omega$ be the empty string. During stage $n \geq 0$, we set $w_n$ equal to $S_{n \Mod 2} [:m]$ where $m$ is the smallest number such that
	
	\begin{align*}
		m \geq \max \left\{N_{n \Mod 2}^n\left(\omega, \frac{1}{2^n}\right) , M_{n \Mod 2}^n\left(\omega, \frac{1}{2^{n+1}}\right), 2^{n+1} J_{(n+1) \Mod 2}^n\left(\frac{1}{2^{n+1}}\right)\right\}
	\end{align*}
	
	and $\lvert \omega \rvert +m$ is a multiple of every $\ell \leq n$. Now we append $w_n$ at the end and set $\omega$ be equal to $\omega w_n$. Similar steps are performed at every future stage $n$. The final sequence $X$ we obtain is the infinite string $w_0w_1w_2 \dots$. Now we show that $X$ satisfies all the required properties.
	
	Consider any $\ell \geq 1$, since by construction we have,
	$H_\ell(A_{2,0}^{w_0w_1 \dots w_n}) \leq \frac{1}{2^n}$, it follows that
	
	\begin{align*}
		\dim_{FS}(A_{2,0}^X)  &= \lim_{\ell \rightarrow \infty} \liminf_{m \rightarrow \infty} H_\ell(A_{2,0}^X[:m])\\
		&\leq \lim_{\ell \rightarrow \infty} \liminf_{n \rightarrow \infty} H_\ell (A_{2,0}^{w_0w_1 \dots w_n})\\
		&\leq \lim_{\ell \rightarrow \infty} \liminf_{n \rightarrow \infty} \frac{1}{2^n}\\
		&=0.
	\end{align*}
	
	A similar argument shows that $\dim_{FS}^{A_{2,0}^X}(A_{2,1}^X) = 0$.
	
	Now we show that $\dim_{FS}(X) = \frac{1}{2}$, which finishes the proof of the lemma. It is enough to show that for any $\ell$, 
	\begin{equation}\label{eqn:something4}
		\liminf_{n \rightarrow \infty} H_\ell(X[:n]) = \frac{1}{2}.
	\end{equation}
	
	By the construction of $X$, we have that $H_\ell(w_0w_1 \dots w_n) \leq \frac{1}{2} + 2^{-(n+1)}$ for every $\ell \leq n$. This implies that $\liminf_{n \rightarrow \infty} H_\ell(X[:n]) \leq \frac{1}{2} + 2^{-(n+1)}$. Letting $n \rightarrow 0$ we get 
	\begin{equation} \label{eqn:something3}
		\liminf_{n \rightarrow \infty} H_\ell(X[:n]) \leq \frac{1}{2}.
	\end{equation}
	
	For $\ell \geq 1$ and any positive real number $\epsilon$, let $n(\epsilon, \ell)$ be the smallest integer such that $2^{-n(\epsilon, \ell)} \leq \epsilon$ and $n(\epsilon, \ell) \geq \ell$. Define $k(\epsilon, \ell)$ to be equal to $\lvert w_0w_1 \dots w_{n(\epsilon, \ell)}\rvert$. Now we show that for any $m \geq k(\epsilon, \ell)$, $H_\ell(X[:m]) \geq \frac{1}{2} - \epsilon$. Let $n' \geq n(\epsilon, \ell)$ be such that $\lvert w_0w_1 \dots w_{n'}\rvert \leq m \leq \lvert w_0w_1 \dots w_{n'}w_{n'+1}\rvert $.
	
	If $m = \lvert w_0w_1 \dots w_{n'}\rvert$, then from the construction of $X$ and condition (\ref{eqn:something1}), it follows that $H_\ell (X[:m]) \geq \frac{1}{2} - \epsilon$. Now we consider the case when $m > \lvert w_0w_1 \dots w_{n'}\rvert$. Let $\alpha_m = w_{n'+1}[:m -  \lvert w_0w_1 \dots w_{n'}\rvert]$, that is $\alpha_m$ is the suffix of $X[:m]$ containing $w_{n'+1}$.
	
	For any $\ell \leq n$ and finite string $z \in \Sigma^*$, let $\mathbb{P}_{\ell}(\cdot, z)$ denote the probability distribution on $\Sigma^{\ell}$ such that for any $x \in \Sigma^{\ell}$, $\mathbb{P}_\ell (x,z)=P(x,z)$. Since $\lvert w_0w_1 \dots w_{n'}\rvert$ is ensured to be a multiple of every $\ell \leq n'$ during stage $n$, we have,
	
	\begin{align*}
		\mathbb{P}_{\ell}(x,X[:m]) = \frac{\lvert w_0w_1 \dots w_{n'}\rvert}{m}\mathbb{P}_{\ell} (x, w_0 w_1 \dots w_{n'}) + \frac{m-\lvert w_0w_1 \dots w_{n'}\rvert}{m}\mathbb{P}_{\ell} (x, \alpha_m)
	\end{align*}
	for every $\ell \leq n$ and $x \in \Sigma^{\ell}$. Therefore, using the concavity of the Shannon entropy, it follows that
	\begin{equation} \label{eq:something2}
		H_{\ell}(X[:m]) \geq \frac{\lvert w_0w_1 \dots w_{n'}\rvert}{m} H_\ell( w_0w_1 \dots w_{n'}) +  \frac{m - \lvert w_0w_1 \dots w_{n'}\rvert}{m}  H_\ell (\alpha_m).
	\end{equation}
	
	If $\lvert \alpha_m \rvert \leq J_{(n'+1) \mod 2}^{n'} \; (\frac{1}{2^{n'+1}})$, we have $\frac{\lvert w_0w_1 \dots w_{n'}\rvert}{m} \geq 1- \frac{1}{2^{n'+1}}$. Substituting the above in equation (\ref{eq:something2}), it follows that,
%

	\begin{align*}
		H_{\ell}(X[:m]) &\geq  \left(1- \frac{1}{2^{n'+1}} \right)  H_\ell( w_0w_1 \dots w_{n'}) \\
		&\geq \left(1- \frac{1}{2^{n'+1}} \right) \left(\frac{1}{2} - \epsilon\right) \\
		&=\frac{1}{2}-\epsilon-\epsilon \cdot \frac{1}{2^{n'+1}} - \frac{1}{2^{n'+2}}\\
		&\geq \frac{1}{2} - 2 \epsilon -\frac{1}{2^{n'+2}} \\
		&\geq  \frac{1}{2} - 3\epsilon.
	\end{align*}
		
	Finally, we consider the case when $\lvert \alpha_m \rvert > J_{(n'+1) \mod 2}^{n'}$, From the definition of $ J_{(n'+1) \mod 2}^{n'}$, we get that $H_\ell(\alpha_m) \geq \frac{1}{2} - \frac{1}{2^{n'+1}} $. 
	
	Using the above inequality in (\ref{eqn:something1}), it follows that,
	
	\begin{align*}
		H_{\ell}(X[:m]) &\geq  \frac{\lvert w_0w_1 \dots w_{n'}\rvert}{m} \Big(\frac{1}{2} - \frac{1}{2^{n'+1}}\Big) + \frac{m - \lvert w_0w_1 \dots w_{n'}\rvert}{m}  \Big(\frac{1}{2} - \frac{1}{2^{n'+1}}\Big) \\
		&= \frac{1}{2} - \frac{1}{2^{n'+1}} \\
		&\geq  \frac{1}{2} - \epsilon. \\
		&> \frac{1}{2} - 3\epsilon.
	\end{align*}
	
	Hence we have shown that for any $m \geq k(\epsilon,\ell)$,
	$H_{\ell}(X[:m]) \geq  \frac{1}{2} - 3\epsilon$, this implies that,
	
	\begin{align*}
		\liminf_{n \rightarrow \infty} H_{\ell}(X[:n]) \geq \frac{1}{2} - \epsilon.
	\end{align*}
	
	Letting $\epsilon \rightarrow 0 $ in the above inequality and using (\ref{eqn:something3}), we obtain (\ref{eqn:something4}), which completes the proof of the lemma.
\end{proof}

We now give an upper bound for $\dim_{FS}(X)$ using Theorem \ref{thm:fsstrongrelativedimensionentropyrate}. We require the following upper bound on the limit inferior of the sum of finitely many real sequences which easily follows using routine real analytic arguments.
\begin{lemma}
\label{lem:liminfupperbound}
Let $\langle a_{0,n} \rangle_{n =0}^{\infty},\langle a_{1,n} \rangle_{n =0}^{\infty},\dots \langle a_{k-1,n} \rangle_{n =0}^{\infty}$ be any finite collection of sequences of real numbers. Then,
\begin{align*}
\liminf\limits_{n \to \infty} \left( \sum\limits_{i=0}^{k-1} a_{i,n} \right)\leq \min\limits_{0 \leq m \leq d-1}\left\{ \liminf\limits_{n \to \infty}	a_{m,n}+ \sum\limits_{i \neq m} \limsup\limits_{n \to \infty}	a_{i,n}  \right\}.
\end{align*}
\end{lemma}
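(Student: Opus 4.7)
The plan is to fix an arbitrary index $m \in \{0,1,\dots,k-1\}$, bound the full sum along a well-chosen subsequence, and then minimize over $m$ at the end. I would write
\begin{align*}
\sum_{i=0}^{k-1} a_{i,n} \;=\; a_{m,n} \;+\; \sum_{i \neq m} a_{i,n},
\end{align*}
and treat the two pieces asymptotically by their different means: $a_{m,n}$ via its liminf, and each $a_{i,n}$ with $i \neq m$ via its limsup (with a small slack $\epsilon$).

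Concretely, first I would invoke the definition of $\liminf$ to extract a subsequence $\langle n_j \rangle$ along which $a_{m,n_j} \to \liminf_{n\to\infty} a_{m,n}$. Next, for each $i \neq m$ and each $\epsilon > 0$, the definition of $\limsup$ gives an $N_i(\epsilon)$ so that $a_{i,n} \le \limsup_{n\to\infty} a_{i,n} + \epsilon$ for all $n \geq N_i(\epsilon)$. For $j$ large enough that $n_j \ge \max_{i\neq m} N_i(\epsilon)$, summing these bounds yields
\begin{align*}
\sum_{i=0}^{k-1} a_{i,n_j} \;\le\; a_{m,n_j} \;+\; \sum_{i \neq m}\!\left( \limsup_{n\to\infty} a_{i,n} + \epsilon \right).
\end{align*}
Since $\liminf_{n\to\infty} \bigl(\sum_i a_{i,n}\bigr) \le \liminf_{j\to\infty} \sum_i a_{i,n_j}$, taking $j \to \infty$ and then $\epsilon \to 0$ gives
\begin{align*}
\liminf_{n\to\infty} \sum_{i=0}^{k-1} a_{i,n} \;\le\; \liminf_{n\to\infty} a_{m,n} \;+\; \sum_{i \neq m} \limsup_{n\to\infty} a_{i,n}.
\end{align*}
Finally, since $m$ was arbitrary, minimizing the right-hand side over $m \in \{0,\dots,k-1\}$ produces exactly the claimed inequality.

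There is no genuine obstacle here; the only subtleties are purely bookkeeping. First, one should handle the pathological cases where some $\limsup_{n\to\infty} a_{i,n}$ equals $+\infty$ (in which case the right-hand side is trivially $+\infty$ for every $m$ containing such an $i$, but the minimum over $m$ may still be finite if there is an $m$ for which all $\limsup_{n\to\infty} a_{i,n}$ with $i \neq m$ are finite) or $-\infty$ (vacuous), and likewise $\liminf_{n\to\infty} a_{m,n} = -\infty$; these are all absorbed cleanly by the extended-real arithmetic. Second, one must be careful that extracting a subsequence for $a_m$ and then applying tail bounds for the other $a_i$ is valid — which it is, because the tail bounds hold eventually in $n$, hence eventually along any subsequence. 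I expect the write-up to be short and essentially a direct calculation.
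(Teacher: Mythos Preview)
Your argument is correct and is exactly the kind of routine real-analytic computation the paper has in mind; the paper itself does not spell out a proof but simply remarks that the lemma ``easily follows using routine real analytic arguments.'' Your subsequence extraction for the distinguished index $m$ together with the eventual $\limsup$ tail bounds for the remaining indices is the standard way to establish $\liminf(a+b)\le \liminf a + \limsup b$, iterated, so there is nothing to add.
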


Now Theorem \ref{thm:fsrelativedimensionentropyrate}, Lemma \ref{lem:fsdofproductandfsdofinterleaving} and Lemma \ref{lem:liminfupperbound} yields the following upper bound for the finite state dimension of $X$. 

 \begin{lemma}
\label{lem:mainupperboundinequality}
	For any $X \in \Sigma^\infty$, $d \in \N$ and $\sigma \in \mathrm{Bij}(d)$, then $\dim_{FS}(X)$ is less than or equal to
	\begin{align*}
	 \min\limits_{0 \leq m \leq d-1} \frac{1}{d} \left(\dim_{FS}^{A_{d,\sigma(0)},\dots A_{d,\sigma(m-1)}}(A_{d,\sigma(m)}) +\sum\limits_{i \neq m} \Dim_{FS}^{A_{d,\sigma(0)},\dots A_{d,\sigma(i-1)}}(A_{d,\sigma(i)}) \right)
	\end{align*}
\end{lemma}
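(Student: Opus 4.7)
The plan is to mirror the structure of the proof of Lemma \ref{lem:mainlowerboundinequality}, replacing the lower bound on $\liminf$ of a sum (Lemma \ref{lem:liminflowerbound}) with the upper bound (Lemma \ref{lem:liminfupperbound}), and using the strong relative dimension characterization (Theorem \ref{thm:fsstrongrelativedimensionentropyrate}) in addition to the relative one (Theorem \ref{thm:fsrelativedimensionentropyrate}). It suffices to fix an arbitrary $m \in \{0, 1, \dots, d-1\}$ and establish
\[
\dim_{FS}(X) \leq \frac{1}{d}\left( \dim_{FS}^{A_{d,\sigma(0)}, \dots, A_{d,\sigma(m-1)}}(A_{d,\sigma(m)}) + \sum_{i \neq m} \Dim_{FS}^{A_{d,\sigma(0)}, \dots, A_{d,\sigma(i-1)}}(A_{d,\sigma(i)}) \right),
\]
and then take the minimum over $m$.

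First, as in the proof of Lemma \ref{lem:mainlowerboundinequality}, I would use the identity $X = A_{d,0} \oplus A_{d,1} \oplus \dots \oplus A_{d,d-1}$ together with Lemma \ref{lem:fsdofproductandfsdofinterleaving} to rewrite $\dim_{FS}(X) = \dim_{FS}(A_{d,\sigma(0)} \times A_{d,\sigma(1)} \times \dots \times A_{d,\sigma(d-1)})$. The entropy-rate characterization of finite-state dimension then expresses this as $\lim_{l \to \infty} \liminf_{k \to \infty} H_l(A_{d,\sigma(0)}[:kl], \dots, A_{d,\sigma(d-1)}[:kl])$, where the normalization factor inside $H_l$ is $dl$ because the product sequence takes values in $\Sigma^d$.

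Next, applying the chain rule of Shannon entropy to the block-entropy expression gives
\[
dl\, H_l(A_{d,\sigma(0)}[:kl], \dots, A_{d,\sigma(d-1)}[:kl]) = \sum_{i=0}^{d-1} l\, H_l\bigl(A_{d,\sigma(i)}[:kl] \bigm| A_{d,\sigma(0)}[:kl], \dots, A_{d,\sigma(i-1)}[:kl] \bigr).
\]
For each fixed $l$, I would apply Lemma \ref{lem:liminfupperbound} to the sum on the right, selecting the $m$-th summand as the $\liminf$ term and bounding every other summand by its $\limsup$ in $k$. Dividing by $dl$ yields, for every $l$, an inequality of the form $\liminf_k H_l(\text{product}[:kl]) \leq \frac{1}{d}(\text{liminf term for index }m + \sum_{i\neq m}\text{limsup terms})$.

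Finally, I would take $\lim_{l \to \infty}$ on both sides. The left side tends to $\dim_{FS}(X)$ by the entropy-rate characterization. For the right side, Theorem \ref{thm:multifsrelativedimensionentropyrate} (combined with Theorems \ref{thm:fsrelativedimensionentropyrate} and \ref{thm:fsstrongrelativedimensionentropyrate}) guarantees that each individual $\lim_{l \to \infty}$ exists: the $\liminf_k$ term at index $m$ converges to $\dim_{FS}^{A_{d,\sigma(0)}, \dots, A_{d,\sigma(m-1)}}(A_{d,\sigma(m)})$, and each $\limsup_k$ term at index $i \neq m$ converges to $\Dim_{FS}^{A_{d,\sigma(0)}, \dots, A_{d,\sigma(i-1)}}(A_{d,\sigma(i)})$. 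Since we have a finite sum of convergent sequences, the limit of the sum equals the sum of the limits, yielding the desired bound for the chosen $m$; minimizing over $m$ completes the proof. The only real subtlety is verifying that each of these pointwise-in-$l$ limits genuinely exists so that the limit can be distributed over the finite sum, and this is exactly what Theorems \ref{thm:fsrelativedimensionentropyrate} and \ref{thm:fsstrongrelativedimensionentropyrate} are designed to deliver, so no additional work is required beyond citing them.
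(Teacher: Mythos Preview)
Your proposal is correct and follows essentially the same route as the paper: both reduce $\dim_{FS}(X)$ to the product form via Lemma \ref{lem:fsdofproductandfsdofinterleaving}, apply the chain rule to obtain a sum of conditional block entropies, invoke Lemma \ref{lem:liminfupperbound} with a fixed $m$, and then distribute $\lim_{l\to\infty}$ over the finite sum using the existence of the limits guaranteed by Theorems \ref{thm:fsrelativedimensionentropyrate} and \ref{thm:fsstrongrelativedimensionentropyrate} (via Theorem \ref{thm:multifsrelativedimensionentropyrate}). Your write-up is slightly more explicit than the paper about why the limit may be passed through the sum, but the argument is the same.
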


There is a minor technical point to be noted in the above statement. When $i$ (or $m$) is equal to $0$, $A_{d,\sigma(0)},A_{d,\sigma(1)},\dots A_{d,\sigma(i-1)}=A_{d,\sigma(0)},A_{d,\sigma(1)},\dots A_{d,\sigma(-1)}$ is not defined, because $\sigma(-1)$ is not defined. So, $\dim_{FS}^{A_{d,\sigma(0)},A_{d,\sigma(1)},\dots A_{d,\sigma(0-1)}}(A_{d,\sigma(0)})$ is a placeholder for $\dim_{FS}(A_{d,\sigma(0)})$, which is used in order to make the statement of Lemma \ref{lem:mainupperboundinequality} notationally simpler. A similar definition applies in the case of $\Dim_{FS}^{A_{d,\sigma(0)},A_{d,\sigma(1)},\dots A_{d,\sigma(0-1)}}(A_{d,\sigma(0)})$.
\begin{proof}[Proof of Lemma \ref{lem:mainupperboundinequality}]
	The proof is identical to that of Lemma \ref{lem:mainlowerboundinequality} till the derivation of equation \ref{eqn:mainlowerboundeqn2} using the chain rule of Shannon entropy. In the next step we get,
	\begin{align*}
		&\dim_{FS}(X) = \dim_{FS}(A_{d,\sigma(0)} \times A_{d,\sigma(1)} \times \dots A_{d,\sigma(d-1)})\\
		&= \frac{1}{d} \lim\limits_{l \to \infty} \liminf\limits_{k \to \infty}  \sum\limits_{i=0}^{d-1}  H_l(A_{d,\sigma(i)}[:kl] \mid A_{d,\sigma(0)}[:kl], A_{d,\sigma(1)}[:kl] \dots A_{d,\sigma(i-1)}[:kl] ).
	\end{align*}
	In the above, when $i=0$, we use the quantity $H_l(A_{d,\sigma(i)}[:kl] \mid A_{d,\sigma(0)}[:kl], A_{d,\sigma(1)}[:kl] \dots A_{d,\sigma(-1)}[:kl] )$ as a placeholder for $H_l(A_{d,\sigma(0)}[:kl] )$ for notational simplicity, similar to our remark following the statement of the lemma. Fix an arbitrary $m \in \{0,1,2,\dots d-1\}$. Now, using Theorem \ref{thm:multifsrelativedimensionentropyrate}, the upper bound from Lemma \ref{lem:liminfupperbound} and by distributing $\lim\limits_{l \to \infty}$ over the inner terms, we get,
	
	\begin{align*}
		&\dim_{FS}(X) = \dim_{FS}(A_{d,\sigma(0)} \times A_{d,\sigma(1)} \times \dots A_{d,\sigma(d-1)})\\
		&\leq \frac{1}{d} \lim\limits_{l \to \infty}  \liminf\limits_{k \to \infty}H_l(A_{d,\sigma(m)}[:kl] \mid A_{d,\sigma(0)}[:kl], A_{d,\sigma(1)}[:kl] \dots A_{d,\sigma(m-1)}[:kl] ) \\
		&+ \frac{1}{d}\sum\limits_{i \neq m}  \lim\limits_{l \to \infty}\limsup\limits_{k \to \infty}H_l(A_{d,\sigma(i)}[:kl] \mid A_{d,\sigma(0)}[:kl], A_{d,\sigma(1)}[:kl] \dots A_{d,\sigma(i-1)}[:kl] ).
	\end{align*}
	The required inequality now follows by observing that,
	\begin{align*}
	&\dim_{FS}^{A_{d,\sigma(0)},A_{d,\sigma(1)},\dots A_{d,\sigma(m-1)}}(A_{d,m})\\
	 &= \lim\limits_{l \to \infty}  \liminf\limits_{k \to \infty}H_l(A_{d,\sigma(m)}[:kl] \mid A_{d,\sigma(0)}[:kl], \dots A_{d,\sigma(m-1)}[:kl] )
	\end{align*}
	and,
	\begin{align*}
	&\Dim_{FS}^{A_{d,\sigma(0)},A_{d,\sigma(1)},\dots A_{d,\sigma(i-1)}}(A_{d,\sigma(i)}) \\
	&= \lim\limits_{l \to \infty}  \limsup\limits_{k \to \infty}H_l(A_{d,\sigma(i)}[:kl] \mid A_{d,\sigma(0)}[:kl], \dots A_{d,\sigma(i-1)}[:kl] ).
	\end{align*}
\end{proof}

As in the case of Lemma \ref{lem:mainlowerboundinequality}, it follows from Theorem \ref{thm:strongerwallstheoremwithconverse} that the upper bound in Lemma \ref{lem:mainupperboundinequality} is tight when  $X$ is normal. To conclude we show that the inequality in Lemma \ref{lem:mainupperboundinequality} cannot be replaced in general with an equality. 

\begin{lemma}
\label{lem:lowerboundistight}
	There exists an $X \in \Sigma^\infty$ such that,
	\begin{align*}
	 \dim_{FS}(X)	 < \frac{1}{2}\left(\dim_{FS}(A_{2,0})+\Dim_{FS}^{A_{2,0}}(A_{2,1}) \right)
	\end{align*}
\end{lemma}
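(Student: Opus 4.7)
The plan is to construct $X$ so that $A_{2,0}$ is the constant zero sequence while $A_{2,1}$ is an auxiliary sequence $Z$ whose liminf block entropy rate is $0$ but whose limsup block entropy rate is $1$. Then the left-hand side $\dim_{FS}(X)$, which is governed by liminf behaviour, will be strictly smaller than the right-hand side, which involves $\Dim_{FS}$ and therefore sees the limsup.

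First I would construct $Z \in \Sigma^\infty$ with $\dim_{FS}(Z) = 0$ and $\Dim_{FS}(Z) = 1$ in stages, modelled on Lemma \ref{lem:upperboundistight}. Fix a normal sequence $Y$ and denote by $Z_k$ the prefix of $Z$ produced after stage $k$. At an even stage $k$, append a sufficiently long block of $0$s so that $H_\ell(Z_k) < 2^{-k}$ for every $\ell \leq k$; at an odd stage $k$, append a sufficiently long prefix of $Y$ so that $H_\ell(Z_k) > 1 - 2^{-k}$ for every $\ell \leq k$. A concavity-of-entropy argument, essentially identical to the one in the proof of Lemma \ref{lem:upperboundistight}, shows that suitable stage lengths exist and that the resulting $Z$ satisfies $\liminf_k H_\ell(Z[:k\ell]) = 0$ and $\limsup_k H_\ell(Z[:k\ell]) = 1$ for every fixed $\ell$, hence $\dim_{FS}(Z) = 0$ and $\Dim_{FS}(Z) = 1$ by Theorems \ref{thm:fsrelativedimensionentropyrate} and \ref{thm:fsstrongrelativedimensionentropyrate}.

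Next, define $X$ by $X[2k] = 0$ and $X[2k+1] = Z[k]$, so that $A_{2,0} = 0^\infty$ and $A_{2,1} = Z$. Lemma \ref{lem:basicproperties}(3), together with the identical oracle-simulation argument applied to strong $s$-gales, gives $\dim_{FS}(A_{2,0}) = 0$ and $\Dim_{FS}^{A_{2,0}}(A_{2,1}) = \Dim_{FS}^{0^\infty}(Z) = \Dim_{FS}(Z) = 1$, so the right-hand side of the claimed inequality equals $\frac{1}{2}$.

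It remains to show $\dim_{FS}(X) = 0$ via a dilution computation using Theorem \ref{thm:fsrelativedimensionentropyrate}. Partition $X[:2k\ell]$ into disjoint blocks of length $2\ell$ aligned at the positions $0, 2\ell, 4\ell, \ldots$; each block has deterministic $0$s at its even internal positions and a length-$\ell$ block of $Z$ at its odd internal positions. Hence the empirical distribution of $2\ell$-blocks in $X[:2k\ell]$ has the same Shannon entropy (in bits) as the empirical distribution of $\ell$-blocks in $Z[:k\ell]$, and after normalising by block length we obtain $H_{2\ell}(X[:2k\ell]) = \frac{1}{2} H_\ell(Z[:k\ell])$. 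Consequently $H_{2\ell}(X) = \frac{1}{2} H_\ell(Z) \to 0$ as $\ell \to \infty$, and since $\lim_{L \to \infty} H_L(X)$ exists by Theorem \ref{thm:fsrelativedimensionentropyrate}, the full limit equals $0$. This gives $\dim_{FS}(X) = 0 < \frac{1}{2}$ as required. I expect the main technical obstacle to be the stagewise construction of $Z$, which needs the same careful concavity bookkeeping as Lemma \ref{lem:upperboundistight}; the dilution identity and the oracle-simulation step are routine.
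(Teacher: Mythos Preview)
Your argument is correct, and it takes a genuinely different route from the paper's sketch. The paper builds $X$ itself as an alternation of long prefixes of a normal sequence $Y$ and long runs of $0$'s, so that both A.\,P.\ subsequences $A_{2,0}$ and $A_{2,1}$ are oscillating sequences; it then has to argue directly that during the $Y$-stages the \emph{conditional} block entropy $H_\ell(A_{2,1}\mid A_{2,0})$ is pushed close to $1$, which in effect uses that the even and odd subsequences of a normal $Y$ are jointly uniform. You instead push all the oscillation into a single auxiliary sequence $Z$ and interleave it with $0^\infty$. This buys you two simplifications: the oracle $A_{2,0}=0^\infty$ is trivial, so $\Dim_{FS}^{A_{2,0}}(A_{2,1})$ reduces to $\Dim_{FS}(Z)$ via the strong-dimension analogue of Lemma~\ref{lem:basicproperties}(3); and the value $\dim_{FS}(X)=0$ follows from the clean dilution identity $H_{2\ell}(X[:2k\ell])=\tfrac{1}{2}H_\ell(Z[:k\ell])$ rather than from a separate stagewise analysis of $X$. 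The paper's construction has the minor advantage of reusing exactly the machinery of Lemma~\ref{lem:upperboundistight} on $X$ directly, while yours isolates the oscillation in $Z$ and keeps the relative-dimension computation trivial. Both approaches produce the same numerical witness $\dim_{FS}(X)=0<\tfrac{1}{2}$.
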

\begin{proof}[Proof sketch of Lemma \ref{lem:lowerboundistight}]
	The construction of $X$ is similar to that in the proof of Lemma \ref{lem:upperboundistight}. Let $Y$ be any fixed normal sequence. $X$ is constructed in a stage-wise manner by appending \emph{long enough} prefixes of $Y$ and $0^\infty$ in an alternating stage-wise manner so that the following properties are satisfied:
	\begin{enumerate}
		\item The block entropy rate of $X$ \emph{oscillates} between $0$ and $1$ so that $\dim_{FS}(X)=0$.
		\item The block entropy rates of $A_{2,0}$ and $A_{2,1}$ gets arbitrarily close to $0$ in subsequent stages so that $\dim_{FS}(A_{2,0})=0$ and $\dim_{FS}^{A_{2,0}}(A_{2,1})=0$.
		\item The block entropy rate of $A_{2,1}$ relative to $A_{2,0}$ gets arbitrarily close to $1$ in subsequent stages so that $\Dim_{FS}^{A_{2,0}}(A_{2,1})=1$.
	\end{enumerate}
	With the above conditions satisfied, we have $\dim_{FS}(X)=0$ and $ (\dim_{FS}(A_{2,0})+\Dim_{FS}^{A_{2,0}}(A_{2,1}) )/2=1/2$. Hence, the constructed sequence $X$ satisfies the required property.
\end{proof}

\section{A stronger Wall's theorem on A. P. subsequences with a perfect converse}
\label{sec:strongerwallstheoremwithconverse}
D.~D.~Wall in \cite{Wall1950} proved that a number
$x=0.x[0]x[1]x[2]x[3]\dots$ is normal if and only if for every $d \geq
1$ and $a \geq 0$, $0.x[a]x[a+2d]x[a+3d]x[a+4d]\dots$ is a normal
number. We state the equivalent theorem in $\Sigma^\infty$ below.
\begin{theorem}[\cite{Wall1950}]
\label{thm:originalwallstheorem}
If $X \in \Sigma^\infty$ is a normal sequence then for every $d \geq
1$ and $a \geq 0$, $X[a]\;X[a+d]\;X[a+2d]\;X[a+3d]\;X[a+4d]\dots$ is a
normal sequence.
\end{theorem}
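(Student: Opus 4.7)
My plan is to derive Wall's theorem directly from the upper bound in Lemma \ref{lem:mainupperboundinequality}, combined with the Schnorr--Stimm characterization that a sequence is normal if and only if its finite-state dimension equals $1$.

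First I will reduce to the case $a \in \{0, 1, \ldots, d-1\}$: for $a \geq d$, the sequence $X[a]X[a+d]X[a+2d]\ldots$ is a suffix of $A_{d,\,a \bmod d}$, and normality is preserved under discarding finitely many initial characters (equivalently, finite-state dimension is invariant under finite prefixes). So it suffices to fix $d \geq 1$, $a \in \{0, \ldots, d-1\}$, and prove $\dim_{FS}(A_{d,a}) = 1$ whenever $\dim_{FS}(X) = 1$. I will then pick a permutation $\sigma \in \mathrm{Bij}(d)$ with $\sigma(0) = a$ and invoke Lemma \ref{lem:mainupperboundinequality} with the minimizing index $m=0$ (using the convention that the $i=0$ term is the placeholder $\dim_{FS}(A_{d,\sigma(0)}) = \dim_{FS}(A_{d,a})$), which delivers the key inequality
\begin{align*}
\dim_{FS}(X) \leq \frac{1}{d}\left( \dim_{FS}(A_{d,a}) + \sum_{i=1}^{d-1} \Dim_{FS}^{A_{d,\sigma(0)}, \ldots, A_{d,\sigma(i-1)}}(A_{d,\sigma(i)}) \right).
\end{align*}

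The conclusion will follow by a squeeze. I will use that every (relative) finite-state strong dimension is at most $1$, because the constant uniform gambler induces an $s$-gale that strongly succeeds on every sequence whenever $s>1$, regardless of oracle; this parallels the proof of item~1 of Lemma \ref{lem:basicproperties}. Substituting this bound together with $\dim_{FS}(X)=1$ into the displayed inequality yields $1 \leq (\dim_{FS}(A_{d,a}) + (d-1))/d$, hence $\dim_{FS}(A_{d,a}) \geq 1$; combined with the universal upper bound $\dim_{FS}(A_{d,a}) \leq 1$ this forces $\dim_{FS}(A_{d,a}) = 1$, and a final appeal to Schnorr--Stimm gives normality of $A_{d,a}$. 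I do not foresee a genuine obstacle: the entire argument is a one-line numerical squeeze once Lemma \ref{lem:mainupperboundinequality} is available, with the only minor bookkeeping being the routine $\Dim_{FS}^Y(\cdot) \leq 1$ bound and the shift invariance used in the initial reduction.
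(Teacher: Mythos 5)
Your proposal is correct and matches the paper's own derivation: the paper obtains this statement (in the strengthened form of Lemma \ref{lem:wallstheoremforwarddirection}) from exactly the same squeeze on Lemma \ref{lem:mainupperboundinequality}, bounding each strong relative dimension by $1$ and using the entropy/Schnorr--Stimm characterization of normality. The only cosmetic difference is that the paper takes the minimizing index $m=d-1$ (so as to also extract relative normality of $A_{d,a}$ with respect to the other residues), whereas your choice $m=0$ yields the plain normality claimed in the theorem directly.
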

We remark that it is enough to consider $a \in \{0,1,2,\dots d-1\}$
since replacing $a$ with $a \Mod d$ only prepends finitely many
characters to the subsequence (and therefore has no effect on the
normality or the finite-state dimension of the subsequence). If $d=1$,
then the converse direction is trivial since the A. P. subsequence with
$a=0$ and $d=1$ is the sequence $X$ itself. Therefore, the converse
direction is interesting only when $d$ takes values strictly greater
than $1$. Hence, the new converse question is the following: \emph{If
for every $d \geq 2$ and $a \in \{0,1,2,\dots d-1\}$,
$X[a]\;X[a+d]\;X[a+2d]\;X[a+3d]\;X[a+4d]\dots$ is a normal sequence then
is $X$ a normal sequence?}.

The answer is negative. Vandehey (\cite{Vandehey2019}) gives the
following counterexample: fix any normal sequence
$X=X[0]X[1]X[2]X[3]\dots$ and consider the \emph{doubled sequence} defined as follows:
$X[0]X[0]X[1]X[1]X[2]X[2]X[3]X[3]\dots$. It is straightforward to
verify that every A. P. subsequence of the \emph{doubled sequence}
with $d \geq 2$ is normal, but the \emph{doubled sequence} is itself
non-normal.

Weiss \cite{weiss1971normal}, Kamae \cite{Kamae1973} and Vandehey
\cite{Vandehey2019} show that expanding the set of subsequences along
which normality is investigated can yield interesting answers in the
converse direction. Weiss and Kamae (\cite{weiss1971normal} and
\cite{Kamae1973}) showed that a number is normal if and only normality
is preserved along every \emph{deterministic} subsequence with
positive asymptotic lower density. In a recent work, Vandehey
\cite{Vandehey2019} proved a nearly sharp converse to the Wall's
theorem. Vandehey considers collections of subsequences such that for
any $\epsilon$, the collection contains a subsequence with asymptotic
lower density greater than $1-\epsilon$. Theorem 1.3 in
\cite{Vandehey2019} shows that preservation of normality along all
subsequences in such a collection implies the normality of the
original number. Theorem 1.4 from the same paper shows that this
converse to the Wall's theorem is close to being sharp.

 In this section we show that the inequalities established in section
 \ref{sec:fsrelativedimensionandapsubsequences} yields a
 \emph{stronger} forward direction for Wall's theorem (Theorem
 \ref{thm:originalwallstheorem}). We show that if sequence $X$ is
 normal, then for any $d \geq 2$ and $a \in \{0,1,2,\dots d-1\}$, the
 A. P. subsequences $X[a]\;X[a+d]\;X[a+2d]\;X[a+3d]\;X[a+4d]\dots$ is
 \emph{relatively normal} with respect to every other A. P. subsequence
 with the same common difference $d$. The results in section
 \ref{sec:fsrelativedimensionandapsubsequences} also yields a perfect
 converse to the \emph{stronger} forward direction. i.e, we show that
 if every A. P. subsequence of $X$ with $d \geq 2$ is \emph{relatively
 normal} with respect to every other A. P. subsequence with the same
 common difference $d$, then $X$ is a normal sequence.
 
 For proving the forward direction we require the following lemma.
  
\begin{lemma}
\label{lem:apsubsequencedimensionlowerbound}
	For any $X \in \Sigma^\infty$, $d \in \N$ and $j \in \{0,1,2,\dots d-1\}$,
	\begin{align*}
		\dim_{FS}(A_{d,j}) \geq \dim_{FS}^{\{A_{d,k} \mid k \neq j\}}(A_{d,j}) \geq d\left(\dim_{FS}(X)-\frac{d-1}{d}\right).
	\end{align*}
\end{lemma}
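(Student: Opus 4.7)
The plan is to prove the two inequalities separately, with the first being a direct consequence of the basic monotonicity of relative dimension and the second following from the upper bound in Lemma \ref{lem:mainupperboundinequality}.

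For the left inequality $\dim_{FS}(A_{d,j}) \geq \dim_{FS}^{\{A_{d,k} \mid k \neq j\}}(A_{d,j})$, I would invoke Statement 1 of Lemma \ref{lem:basicproperties}, which states $\dim_{FS}^Y(X) \leq \dim_{FS}(X)$ for any $X, Y$. Taking $Y$ to be the product sequence $\prod_{k \neq j} A_{d,k}$ and using the definition of multi-oracle relative dimension from Section \ref{sec:finitestaterelativedimesnion}, this gives the desired inequality immediately.

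For the right inequality, I would apply Lemma \ref{lem:mainupperboundinequality} with a specific permutation $\sigma \in \mathrm{Bij}(d)$ chosen so that $\sigma(d-1) = j$; the other positions can be filled in any order. Then, selecting $m = d-1$ in the minimum on the right hand side of the lemma yields
\begin{align*}
\dim_{FS}(X) \leq \frac{1}{d}\left(\dim_{FS}^{\{A_{d,k} \mid k \neq j\}}(A_{d,j}) + \sum_{i=0}^{d-2}\Dim_{FS}^{A_{d,\sigma(0)},\dots,A_{d,\sigma(i-1)}}(A_{d,\sigma(i)})\right).
\end{align*}
Since $\Sigma$ is finite, each of the $d-1$ strong relative dimension terms in the sum is bounded above by $1$ (a uniform $1$-gale keeps capital constant, so strong dimension never exceeds $1$, and relativization only decreases this by the analogue of Lemma \ref{lem:basicproperties}(1) for strong dimension). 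Bounding the sum by $d-1$ and rearranging yields
\begin{align*}
\dim_{FS}^{\{A_{d,k} \mid k \neq j\}}(A_{d,j}) \geq d\cdot \dim_{FS}(X) - (d-1) = d\left(\dim_{FS}(X) - \frac{d-1}{d}\right),
\end{align*}
which is the desired bound.

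The only subtle point is that Lemma \ref{lem:mainupperboundinequality} presents the oracles as an ordered tuple $A_{d,\sigma(0)},\dots,A_{d,\sigma(d-2)}$, whereas the conclusion is stated with a set $\{A_{d,k} \mid k \neq j\}$. This is justified by the multi-oracle definition in Section \ref{sec:finitestaterelativedimesnion}: reordering the components of the product oracle sequence can be simulated by a fixed finite-state relabeling of the alphabet of the product, so the relative dimension is invariant under permutation of the oracle sequences. Other than this bookkeeping observation, the proof is a routine composition of the previously established entropy-rate characterization and the chain-rule-based inequalities.
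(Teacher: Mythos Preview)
Your proof is correct and follows essentially the same route as the paper: the left inequality is Lemma~\ref{lem:basicproperties}(1), and the right inequality comes from Lemma~\ref{lem:mainupperboundinequality} with $\sigma(d-1)=j$, $m=d-1$, and the trivial bound $\Dim_{FS}^{\cdots}(\cdot)\le 1$ on the remaining $d-1$ terms. If anything, you are more explicit than the paper, which leaves the choice $m=d-1$ implicit and does not separately justify the first inequality.
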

\begin{proof}
	Choose any $\sigma \in \mathrm{Bij}(d)$ such that
%
%
	$\sigma(m)=j$. Now, from Lemma \ref{lem:mainupperboundinequality} we get that,
%
	\begin{align*}
	\dim_{FS}^{\{A_{d,k} \mid k \neq j\}}(A_{d,j}) \geq d \left( \dim_{FS}(X)-\frac{1}{d} \sum\limits_{i \neq m} \Dim_{FS}^{A_{d,\sigma(0)},\dots A_{d,\sigma(i-1)}}(A_{d,\sigma(i)})\right).
	\end{align*}
	The required result now follows using the observation that,
	\begin{align*}
	 \Dim_{FS}^{A_{d,\sigma(0)},\dots A_{d,\sigma(i-1)}}(A_{d,\sigma(i)}) \leq 1.
	\end{align*}
	for every $i \neq m$.
\end{proof}

Now, we give an immediate corollary of Lemma \ref{lem:apsubsequencedimensionlowerbound} which is useful in the later sections.
\begin{corollary}
\label{cor:apsubsequencedimensionboundcorollary}
Let $X \in \Sigma^\infty$, $d \in \N$ and $j \in \{0,1,2,\dots d-1\}$. If for some $\epsilon>0$,
\begin{align*}
	\dim_{FS}(X) \geq \frac{d-1}{d} + \epsilon,
\end{align*}
then for every $j$, $\dim_{FS}(A_{d,j}) \geq \dim_{FS}^{\{A_{d,k} \mid k \neq j\}}(A_{d,j}) \geq d\epsilon$.
\end{corollary}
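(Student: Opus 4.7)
The plan is to apply Lemma \ref{lem:apsubsequencedimensionlowerbound} directly and then substitute the quantitative hypothesis on $\dim_{FS}(X)$. Since the corollary is advertised as an immediate consequence of the lemma, no genuinely new machinery should be required; the whole proof amounts to a single algebraic rearrangement combined with the monotonicity of multiplication by the positive integer $d$.

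First, I would invoke Lemma \ref{lem:apsubsequencedimensionlowerbound} for the given $d \in \N$ and an arbitrary $j \in \{0,1,2,\dots,d-1\}$, obtaining
\begin{align*}
\dim_{FS}(A_{d,j}) \geq \dim_{FS}^{\{A_{d,k} \mid k \neq j\}}(A_{d,j}) \geq d\left(\dim_{FS}(X)-\frac{d-1}{d}\right).
\end{align*}
Next I would use the hypothesis $\dim_{FS}(X) \geq \frac{d-1}{d}+\epsilon$ to conclude $\dim_{FS}(X) - \frac{d-1}{d} \geq \epsilon$. Multiplying this last inequality by $d>0$ gives $d(\dim_{FS}(X) - \frac{d-1}{d}) \geq d\epsilon$, and chaining it with the bound above yields $\dim_{FS}(A_{d,j}) \geq \dim_{FS}^{\{A_{d,k} \mid k \neq j\}}(A_{d,j}) \geq d\epsilon$. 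Since $j$ was arbitrary, the conclusion holds for every $j$.

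There is no real obstacle to overcome; the only thing to be careful about is stating the derivation for a generic $j$ so that the "for every $j$" in the conclusion is clearly justified, and noting that the left inequality $\dim_{FS}(A_{d,j}) \geq \dim_{FS}^{\{A_{d,k} \mid k \neq j\}}(A_{d,j})$ is already part of the statement of Lemma \ref{lem:apsubsequencedimensionlowerbound} (it is a consequence of Lemma \ref{lem:basicproperties}(1), which says that conditioning on additional oracle information can only decrease dimension). Hence the proof is a two-line calculation with no hidden analytic content.
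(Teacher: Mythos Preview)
Your proposal is correct and is exactly the approach the paper intends: the corollary is stated as an immediate consequence of Lemma~\ref{lem:apsubsequencedimensionlowerbound}, and your substitution of the hypothesis $\dim_{FS}(X)\geq \frac{d-1}{d}+\epsilon$ into that lemma's bound is precisely the one-line argument required.
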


The following is the restatement of Theorem
\ref{thm:originalwallstheorem} in terms of finite-state dimension
which easily follows from the entropy characterization of finite-state
dimension from \cite{bourke2005entropy}.
\begin{theorem*}[Restatement of Theorem \ref{thm:originalwallstheorem}]
A sequence $X \in \Sigma^\infty$ is such that $\dim_{FS}(X)=1$ if and
only if for every $a \in \{0,1,2,\dots d-1\}$ and $d \geq 2$,
$\dim_{FS}(A_{d,a})=1$.
\end{theorem*}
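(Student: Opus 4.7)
The plan is to derive both directions of the biconditional by a translation between finite-state dimension one and Borel normality, after which Wall's classical Theorem~\ref{thm:originalwallstheorem} does all of the real work. The translation is furnished by the Schnorr--Stimm equivalence, which follows immediately from the entropy-rate characterization of finite-state dimension due to \cite{bourke2005entropy}: for every $Z \in \Sigma^\infty$, one has $\dim_{FS}(Z) = 1$ if and only if $Z$ is Borel normal. This single black-box will be used in both directions of the restatement.

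For the forward implication, my approach is: (i) start from $\dim_{FS}(X) = 1$ and invoke Schnorr--Stimm to conclude that $X$ is Borel normal; (ii) apply Theorem~\ref{thm:originalwallstheorem} to deduce the normality of each A.P.~subsequence $A_{d,a}$ for every $d \geq 2$ and $a \in \{0, 1, \ldots, d-1\}$; (iii) reapply Schnorr--Stimm in the opposite direction to transport each normality statement back to $\dim_{FS}(A_{d,a}) = 1$. Each step is a direct invocation of a prior result, so the forward direction is complete with essentially no further work.

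For the converse, the critical observation is that Theorem~\ref{thm:originalwallstheorem} is quantified over all $d \geq 1$, and the degenerate case $d = 1$ collapses to $A_{1,0} = X$. Reading the hypothesis of the restatement as encompassing this tautological $d = 1$ instance (where the assertion ``$\dim_{FS}(A_{1,0}) = 1$'' is literally ``$\dim_{FS}(X) = 1$''), the converse reduces to the same chain run in reverse: the normality of $X = A_{1,0}$ is already contained in the hypothesis, and one last application of Schnorr--Stimm converts this to $\dim_{FS}(X) = 1$. Thus, modulo the standing convention that Wall's theorem ranges down to $d=1$, the converse is a one-line deduction.

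The only non-routine input in this plan is the Schnorr--Stimm / entropy-rate equivalence from \cite{bourke2005entropy}, which is prior work and need not be reproved here. Consequently I do not anticipate a genuine obstacle — the argument is a mechanical translation in both directions, which matches the paper's remark that the theorem ``easily follows from the entropy characterization.'' The only point requiring care is the quantification in the converse: Wall's original theorem must be read with $d \geq 1$ (and hence with the trivial $A_{1,0}=X$ case implicit) in order for the biconditional to close, which is precisely the translation step that makes the proof transparent.
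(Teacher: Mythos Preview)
Your approach matches the paper's: the paper offers no proof beyond the remark that the restatement ``easily follows from the entropy characterization of finite-state dimension from \cite{bourke2005entropy}'', and your translation via the Schnorr--Stimm equivalence is exactly this. You are also right to flag the quantification issue in the converse: as literally printed the restatement has $d \geq 2$, under which the converse is actually false (the paper itself exhibits the doubled-sequence counterexample a few lines later), so the biconditional only holds under the $d \geq 1$ reading taken from the informal description of Wall's theorem at the start of Section~\ref{sec:strongerwallstheoremwithconverse}---precisely the reading you adopt.
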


As a consequence of Corollary \ref{cor:apsubsequencedimensionboundcorollary}, by setting $\epsilon=1/d$, we obtain the \emph{stronger} forward direction of
Theorem \ref{thm:originalwallstheorem}.
\begin{lemma}
\label{lem:wallstheoremforwarddirection}
Let $X \in \Sigma^\infty$ be any normal sequence. Then, for every $d \geq 2$ and $a \in \{0,1,2,\dots d-1\}$, $\dim_{FS}(A_{d,a}) = \dim_{FS}^{\{A_{d,k} \mid k \neq a \}}(A_{d,a}) =1$. 
\end{lemma}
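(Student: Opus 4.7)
The lemma follows almost immediately from Corollary \ref{cor:apsubsequencedimensionboundcorollary} which is established just before its statement. My plan is the following straightforward specialization.

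Since $X$ is normal, the entropy characterization from \cite{bourke2005entropy} (invoked in the restatement of Wall's theorem just above the lemma) gives $\dim_{FS}(X) = 1$. In particular, for any $d \geq 2$,
\begin{align*}
\dim_{FS}(X) = 1 = \frac{d-1}{d} + \frac{1}{d},
\end{align*}
so the hypothesis of Corollary \ref{cor:apsubsequencedimensionboundcorollary} is satisfied with the choice $\epsilon = 1/d$. Applying the corollary with this $\epsilon$, I obtain for every $a \in \{0,1,\dots,d-1\}$ the lower bound
\begin{align*}
\dim_{FS}(A_{d,a}) \geq \dim_{FS}^{\{A_{d,k} \mid k \neq a\}}(A_{d,a}) \geq d \cdot \frac{1}{d} = 1.
\end{align*}

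For the matching upper bounds I would appeal to Lemma \ref{lem:basicproperties}(1), which gives $\dim_{FS}^{\{A_{d,k} \mid k \neq a\}}(A_{d,a}) \leq \dim_{FS}(A_{d,a})$, together with the trivial fact $\dim_{FS}(A_{d,a}) \leq 1$. Chaining the two inequalities forces equality throughout, yielding $\dim_{FS}(A_{d,a}) = \dim_{FS}^{\{A_{d,k} \mid k \neq a\}}(A_{d,a}) = 1$, as desired.

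There is no real obstacle here: the entire content of the lemma was absorbed into Lemma \ref{lem:mainupperboundinequality} via its corollary, and all that remains is the substitution $\epsilon = 1/d$ along with the observation that dimension is trivially bounded above by $1$. The proof is thus essentially a one-line application of the corollary.
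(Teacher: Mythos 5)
Your proof is correct and follows exactly the paper's route: the paper likewise obtains this lemma by applying Corollary \ref{cor:apsubsequencedimensionboundcorollary} with $\epsilon = 1/d$, and the matching upper bounds via Lemma \ref{lem:basicproperties} and the trivial bound $\dim_{FS} \leq 1$ are implicit there. No issues.
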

The above statement strengthens the forward direction of Theorem
\ref{thm:originalwallstheorem} because it claims that \emph{if a
sequence is normal then each of its A. P. subsequences are normal and are
also relatively normal with respect to the other A. P. subsequences
having the same common difference}.

The conclusion in Lemma \ref{lem:wallstheoremforwarddirection} is
strong enough that using our lower bound inequality (Lemma
\ref{lem:mainlowerboundinequality}), we get a \emph{perfect} converse
to Lemma \ref{lem:wallstheoremforwarddirection}.
\begin{lemma}
\label{lem:wallstheoremconverse}
Let $X \in \Sigma^\infty$. If for every $d \geq 2$ and $a \in
\{0,1,2,\dots d-1\}$, $\dim_{FS}^{\{A_{d,k} \mid k \neq a \}}(A_{d,a})
=1$, then $\dim_{FS}(X)=1$.
\end{lemma}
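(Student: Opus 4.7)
The plan is to reduce the statement to the simplest instance of Lemma \ref{lem:mainlowerboundinequality}, namely $d=2$ with the identity permutation $\sigma$. This gives the clean lower bound
\[\dim_{FS}(X) \geq \frac{1}{2}\bigl(\dim_{FS}(A_{2,0}) + \dim_{FS}^{A_{2,0}}(A_{2,1})\bigr),\]
so the task reduces to showing that each of the two terms on the right equals $1$.

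First I would use the hypothesis at $d=2$, $a=1$ directly: this yields $\dim_{FS}^{A_{2,0}}(A_{2,1}) = 1$, handling the second summand with no further work. Second, for the first summand, I would use the hypothesis at $d=2$, $a=0$, which gives $\dim_{FS}^{A_{2,1}}(A_{2,0}) = 1$, and then apply statement (1) of Lemma \ref{lem:basicproperties} (namely $\dim_{FS}^Y(Z) \leq \dim_{FS}(Z)$). This forces $\dim_{FS}(A_{2,0}) \geq 1$, and since finite-state dimension is always bounded above by $1$, equality holds.

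Combining the two evaluations, I obtain $\dim_{FS}(X) \geq \frac{1}{2}(1+1) = 1$, and the trivial upper bound $\dim_{FS}(X) \leq 1$ then gives $\dim_{FS}(X) = 1$, as required.

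As for difficulties, there are essentially none: the argument is a short deduction once Lemma \ref{lem:mainlowerboundinequality} is in hand. The only notable structural point is that although the hypothesis is stated universally over all $d \geq 2$, only the case $d=2$ is actually used, and within that case both values $a=0$ and $a=1$ are needed --- the $a=1$ instance feeds directly into the conditional summand, while the $a=0$ instance is used indirectly, via the monotonicity of dimension under oracle dropping, to pin down the unconditional summand. Equivalently, one could simply invoke Corollary \ref{cor:maininequalitycorollary1} with $r=1$ and $d=2$, verifying its two hypotheses exactly as above.
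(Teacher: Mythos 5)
Your proof is correct and takes essentially the same route as the paper: the paper's own argument is exactly an application of Corollary \ref{cor:maininequalitycorollary1} with $r=1$, using the fact that dropping oracles can only increase relative dimension to verify its hypotheses. Your specialization to $d=2$ (so that the oracle sets become singletons and Lemma \ref{lem:basicproperties}(1) applies verbatim) is a harmless simplification of the same idea.
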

Lemma \ref{lem:wallstheoremconverse} follows using $\dim_{FS}^{A_{d,0},A_{d,1},\dots A_{d,a-1}}(A_{d,a}) \geq
\dim_{FS}^{\{A_{d,k} \mid k \neq a \}}(A_{d,a})$ and Corollary
\ref{cor:maininequalitycorollary1} when $r=1$. 

Combining Lemma \ref{lem:wallstheoremforwarddirection} and
\ref{lem:wallstheoremconverse}, we get the following \emph{stronger}
Wall's theorem with a \emph{perfect} converse.
\begin{theorem}
\label{thm:strongerwallstheoremwithconverse}	
A sequence $X \in \Sigma^\infty$ is normal (equivalently
$\dim_{FS}(X)=1$) if and only if for every $d \geq 2$ and $a \in
\{0,1,2,\dots d-1\}$, $\dim_{FS}^{\{A_{d,k} \mid k \neq a \}}(A_{d,a})
=1$.
\end{theorem}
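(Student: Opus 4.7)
The plan is to package the theorem as an immediate combination of Lemma \ref{lem:wallstheoremforwarddirection} and Lemma \ref{lem:wallstheoremconverse}, both of which have essentially been set up by the inequalities of Section \ref{sec:fsrelativedimensionandapsubsequences}. So the work is really about reading the right consequences of Corollary \ref{cor:apsubsequencedimensionboundcorollary} and Corollary \ref{cor:maininequalitycorollary1} off the shelf, and noting one monotonicity fact about conditioning.

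For the forward direction, I would assume $\dim_{FS}(X) = 1$, fix any $d \geq 2$ and $a \in \{0,1,\dots,d-1\}$, and apply Corollary \ref{cor:apsubsequencedimensionboundcorollary} with $\epsilon = 1/d$. Since $1 = \dim_{FS}(X) \geq \frac{d-1}{d} + \frac{1}{d}$, the hypothesis holds, so the conclusion gives $\dim_{FS}^{\{A_{d,k} \mid k \neq a\}}(A_{d,a}) \geq d \cdot \tfrac{1}{d} = 1$. Combined with the trivial upper bound $\dim_{FS}^{Z}(A_{d,a}) \leq 1$ for any oracle $Z$ (inherited from statement 1 of Lemma \ref{lem:basicproperties} and $\dim_{FS}(A_{d,a}) \leq 1$), we conclude equality.

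For the converse, assume that for every $d \geq 2$ and every $a \in \{0,\dots,d-1\}$ we have $\dim_{FS}^{\{A_{d,k} \mid k \neq a\}}(A_{d,a}) = 1$. Fixing $d \geq 2$, I would note the monotonicity fact that conditioning on fewer oracle sequences can only increase the relative finite-state dimension, so for each $a$,
\[
\dim_{FS}^{A_{d,0},A_{d,1},\dots,A_{d,a-1}}(A_{d,a}) \;\geq\; \dim_{FS}^{\{A_{d,k} \mid k \neq a\}}(A_{d,a}) \;=\; 1,
\]
(and when $a=0$, by convention this is $\dim_{FS}(A_{d,0})$, which also equals $1$ by applying the forward-direction argument at the $d=1$ case or by specializing the hypothesis at a larger $d$). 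Since relative dimension is bounded above by $1$, every term is exactly $1$. Then Corollary \ref{cor:maininequalitycorollary1} with $\sigma$ equal to the identity and $r = 1$ immediately yields $\dim_{FS}(X) \geq 1$, hence $\dim_{FS}(X) = 1$. The $d=1$ case is vacuous since $A_{1,0} = X$.

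The only step that requires a small justification is the monotonicity of $\dim_{FS}^{\cdot}(A_{d,a})$ in the oracle: more oracles $\Rightarrow$ smaller (or equal) relative dimension. I expect this to be the main, though minor, obstacle to a fully rigorous write-up. It is proved by the standard simulation argument: given any FSRG with oracle access to $(A_{d,k})_{k \in T_1}$ for a smaller index set $T_1$, one builds an FSRG with oracle access to the larger product $(A_{d,k})_{k \in T_2}$, $T_1 \subseteq T_2$, that simply projects its oracle window onto the coordinates in $T_1$ and ignores the rest. This yields exactly the same induced $s$-gale and therefore the same success set, so any $s$ achievable over $T_1$ is also achievable over $T_2$, giving the required inequality. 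With that in hand, the theorem follows in one line by combining the two lemmas.
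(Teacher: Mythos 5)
Your proposal is correct and follows essentially the same route as the paper: the forward direction is Corollary \ref{cor:apsubsequencedimensionboundcorollary} with $\epsilon=1/d$ (the paper's Lemma \ref{lem:wallstheoremforwarddirection}), and the converse is the monotonicity of relative dimension under enlarging the oracle set combined with Corollary \ref{cor:maininequalitycorollary1} at $r=1$ (the paper's Lemma \ref{lem:wallstheoremconverse}), with the simulation argument you sketch being the right justification for that monotonicity. The only blemish is your parenthetical justification that $\dim_{FS}(A_{d,0})=1$ in the converse: neither ``the forward direction at $d=1$'' nor ``a larger $d$'' applies there, but the fact follows immediately from the same monotonicity, namely $\dim_{FS}(A_{d,0}) \geq \dim_{FS}^{\{A_{d,k} \mid k \neq 0\}}(A_{d,0}) = 1$ (Lemma \ref{lem:basicproperties}, statement 1).
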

In other words, we have shown that a sequence $X \in \Sigma^\infty$ is
normal if and only if for every $d \geq 2$, the A. P. subsequences of $X$
with common difference $d$ are normal and are also relatively normal
with respect to the other A. P. subsequences with the same common
difference $d$.

\section{van Lambalgen's Theorem for Finite-State Dimension}
\label{sec:fsvanlambalgentheorems}
van Lambalgen, in his thesis, \cite{lambalgen1987random} showed that
relative Martin-L\"of randomness is symmetric. 
\begin{theorem}[\cite{lambalgen1987random}] 
Let $A,B \in \Sigma^\infty$. $A$ is Martin-L\"of random and $B$ is
Martin-L\"of random relative to $A$ if and only if $A \oplus B$ is
Martin-L\"of random.
\end{theorem}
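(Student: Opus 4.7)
The plan is to establish the biconditional in two directions, with the reverse direction (randomness of $A \oplus B$ implying \MartinLof{} randomness of $A$ and $A$-relative \MartinLof{} randomness of $B$) being straightforward, and the forward direction being the substantive one requiring a Fubini-plus-Markov decomposition.

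For the reverse direction, I argue by contrapositive and lift failures of randomness from the components to the join. If $A$ is captured by a \MartinLof{} test $\{U_n\}_{n \in \N}$, then the pullback $W_n = \{z \in \Sigma^\infty : \text{the even-position subsequence of } z \text{ lies in } U_n\}$ is again a \MartinLof{} test: the pullback of each cylinder $[w]$ in $U_n$ is a cylinder of equal measure, so $W_n$ is c.e.\ open with $\mu(W_n) \leq 2^{-n}$, and $A \oplus B \in \bigcap_n W_n$. A symmetric lift handles failure of $A$-relative randomness of $B$: a uniformly $A$-c.e.\ test $\{V_n^A\}$ capturing $B$ yields the plain c.e.\ open test $W_n = \{a \oplus b : b \in V_n^a\}$, whose measure is at most $2^{-n}$ by Fubini applied to the product measure on $\Sigma^\infty \times \Sigma^\infty$.

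For the forward direction, suppose for contradiction that $\{U_n\}$ is a \MartinLof{} test with $A \oplus B \in \bigcap_n U_n$. For each $a \in \Sigma^\infty$, define the section $V_n^a := \{b : a \oplus b \in U_n\}$, which is c.e.\ open uniformly in $a$ and $n$ because $U_n$ is. By Fubini, $\int \mu(V_n^a)\, d\mu(a) = \mu(U_n) \leq 2^{-n}$, so Markov's inequality gives $\mu(S_n) \leq 2^{-n/2}$ for $S_n := \{a : \mu(V_n^a) > 2^{-n/2}\}$. A routine effectivity check shows $S_n$ is c.e.\ open: one enumerates finite prefixes $v$ of $a$ such that the stage-$s$ cylinders of $V_n^v$ already account for measure exceeding $2^{-n/2}$, for some $s$. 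Thus $\{S_{2n}\}_n$ (up to reindexing) is a plain \MartinLof{} test. If $A \in S_n$ for infinitely many $n$, then $A$ is not \MartinLof{} random; otherwise, for all but finitely many $n$, $\mu(V_n^A) \leq 2^{-n/2}$, so $\{V_n^A\}$ (after a finite index shift) is a genuine $A$-\MartinLof{} test capturing $B$, certifying that $B$ is not \MartinLof{} random relative to $A$.

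The main obstacle I expect is the effectivity step that $S_n$ is c.e.\ open. Lower bounds on measures of c.e.\ open sets are lower semicomputable, but here the set $V_n^a$ itself depends on the parameter $a$; one must verify that the ``at most $s$ oracle queries to $a$'' interpretation of the enumeration of $V_n^a$ allows the dependence on $a$ to be captured by a clopen witness determined by a finite prefix of $a$. Once this standard but nontrivial effectivity is verified, the two-case split via the Borel--Cantelli-style argument above closes the proof.
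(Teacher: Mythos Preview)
The paper does not prove this statement; it is quoted as a classical result of van Lambalgen and serves only as background motivation for the finite-state analogue proved immediately afterward. There is therefore no in-paper proof to compare your proposal against.

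For what it is worth, your proposal is a correct outline of the standard proof of van Lambalgen's theorem. One small remark: the two-case split in your forward direction (``$A \in S_n$ for infinitely many $n$'' versus ``for all but finitely many $n$'') implicitly uses Solovay's characterization of \MartinLof{} randomness --- being in infinitely many uniformly c.e.\ open sets of summable measure implies non-randomness --- rather than the literal nested-test definition. This is fine, but worth making explicit.
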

This symmetry fails in other randomness settings like Schnorr
randomness, computable randomness and resource-bounded randomness
\cite{Yu2007}, \cite{Bauwens2020}, \cite{DowneyHirschfeldt2010},
\cite{Chakraborty2017}. In this section, we show that relative
normality is symmetric, thus establishing an analogue of van
Lambalgen's theorem for normality. But, for finite-state dimensions
less than 1, both directions of the theorem fails to hold in general.
A sequence $X \in \Sigma^\infty$ is a \emph{regular} sequence if
$\dim_{FS}(X)=\Dim_{FS}(X)$ (see, for example, \cite{Miller2011},
where the notion is defined for effective dimension). We show that for
the class of regular sequences, the forward direction of van
Lambalgen's Theorem is true.

Utilizing the results we established in Sections
\ref{sec:fsrelativedimensionandapsubsequences} and
\ref{sec:strongerwallstheoremwithconverse}, we first show that an
analogue of van Lambalgen's theorem holds for normality, \emph{i.e.}
the case when finite-state dimensions are 1.

\begin{theorem}
  \label{thm:vanLambalgen}
Let $A$, $B$ $\in \Sigma^\infty$. $A$ is normal and $B$ is normal
relative to $A$ if and only if $A \oplus B$ is normal.
\end{theorem}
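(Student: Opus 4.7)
The plan is to derive Theorem \ref{thm:vanLambalgen} as an almost immediate consequence of the machinery built up in Sections \ref{sec:fsrelativedimensionandapsubsequences} and \ref{sec:strongerwallstheoremwithconverse}. The key observation is that if we set $X = A \oplus B$, then with common difference $d = 2$ we have exactly $A_{2,0}^X = A$ and $A_{2,1}^X = B$. Thus the statement ``$A$ is normal and $B$ is normal relative to $A$'' is precisely ``$\dim_{FS}(A_{2,0}^X) = 1$ and $\dim_{FS}^{A_{2,0}^X}(A_{2,1}^X) = 1$'', and ``$A \oplus B$ is normal'' is ``$\dim_{FS}(X) = 1$''. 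So both directions reduce to statements already proved.

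For the forward direction, I would invoke Corollary \ref{cor:maininequalitycorollary1} applied to $X = A \oplus B$ with $d = 2$ and $r = 1$. Assuming $\dim_{FS}(A) = 1$ and $\dim_{FS}^A(B) = 1$, the hypotheses of the corollary are met, so $\dim_{FS}(X) \geq 1$. Since finite-state dimension is always bounded above by $1$, equality holds, and $A \oplus B$ is normal.

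For the backward direction, I would invoke Lemma \ref{lem:wallstheoremforwarddirection} (the strong forward direction of Wall's theorem) with $d = 2$. If $A \oplus B$ is normal, then the lemma directly yields $\dim_{FS}(A_{2,0}^X) = 1$ and $\dim_{FS}^{\{A_{2,k} \mid k \neq 0\}}(A_{2,0}^X) = 1$, as well as the symmetric statements with the offsets swapped. In particular, $\dim_{FS}(A) = 1$ (so $A$ is normal) and $\dim_{FS}^A(B) = \dim_{FS}^{A_{2,0}^X}(A_{2,1}^X) = 1$ (so $B$ is normal relative to $A$).

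There is no real obstacle here: the entire content of the theorem has been absorbed into the A.P.~dimension inequalities of Section \ref{sec:fsrelativedimensionandapsubsequences} and the stronger Wall's theorem of Section \ref{sec:strongerwallstheoremwithconverse}. The only point that needs a sentence of justification is the identification of $A$ and $B$ as the two A.P.~subsequences of $A \oplus B$ with common difference $2$, which is immediate from the definition of $\oplus$ given in the Preliminaries. The proof is therefore essentially a two-line argument citing \ref{cor:maininequalitycorollary1} in one direction and \ref{lem:wallstheoremforwarddirection} in the other.
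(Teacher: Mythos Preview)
Your proposal is correct and follows essentially the same approach as the paper: both use Corollary \ref{cor:maininequalitycorollary1} for the forward direction and Lemma \ref{lem:wallstheoremforwarddirection} for the backward direction, after identifying $A$ and $B$ with the two A.P.\ subsequences of $A\oplus B$ with $d=2$. Your backward direction is in fact marginally more direct than the paper's, since you read off $\dim_{FS}(A)=1$ straight from the statement of Lemma \ref{lem:wallstheoremforwarddirection}, whereas the paper extracts only the relative-dimension conclusions and then appeals to Lemma \ref{lem:apsubsequencedimensionlowerbound} to recover $\dim_{FS}(A)\ge \dim_{FS}^B(A)=1$.
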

\begin{proof}
$A$ and $B$ are two A.P. subsequences of $A \oplus B$ with common
  difference 1. By Corollary \ref{cor:maininequalitycorollary1} it
  follows that $\dim_{FS}(A \oplus B)=1$, \emph{i.e.}, $A \oplus B$ is
  normal.

Conversely, suppose that $A \oplus B$ is normal, \emph{i.e.}
$\dim_{FS}(A \oplus B)=1$. By Lemma
\ref{lem:wallstheoremforwarddirection}, $\dim^A_{FS}(B)=1$ and
$\dim_{FS}^B(A)=1$. Since $\dim_{FS}(A) \ge \dim^B_{FS}(A)$ by Lemma
\ref{lem:apsubsequencedimensionlowerbound}, it follows that
$\dim_{FS}(A)=1$.
\end{proof}

Thus relative normality is symmetric. We may conjecture that this
generalizes in the following ideal form.

{\bf Ideal Claim.} Let $A$, $B$ $\in \Sigma^\infty$. Then for any $r
\in [0,1]$, $\dim_{FS}(A)=r$ and $\dim^A_{FS}(B)=r$ if and only if
$\dim_{FS}(A \oplus B)=r$.

Both the forward and converse directions in the above claim are false
for general sequences.  However, we conclude by showing that if $A$
and $B$ are ``regular sequences'', then the forward direction of the
ideal claim holds. This is analogous to the failure of the converse
direction of van Lambalgen's theorem in certain notions of randomness.

Setting $A$ and $B$ to be the two A. P. subsequences (with $d=2$) of $X$
from Lemma \ref{lem:upperboundistight}, it readily follows that the
forward direction in the above claim is false for general $A$ and
$B$. Now let $A$ be $0^\infty$ and $B$ be any normal number. Now, $A
\oplus B$ is the diluted sequence (\cite{Dai2001}) with dimension
equal to $1/2$. But, $\dim_{FS}(A)=0$ and it follows from Lemma \ref{lem:basicproperties} that
$\dim_{FS}^{A}(B)=1$. Therefore the converse direction of the claim is
also false for general sequences. It is easy to verify from the
construction in the proof of Lemma \ref{lem:upperboundistight} that
sequences $A$ and $B$ given in the counterexample for the forward
direction are both non-regular sequences. This leads to the question
whether the forward direction in the ideal claim is true for $A$ and
$B$ that are regular sequences. We answer this question in the
affirmative as a consequence of the following lemma.
\begin{lemma}
\label{lem:regularsequencesdimensionequation}
	Let $A,B \in \Sigma^\infty$. If $A$ is a regular sequence, then,
	\begin{align*}
		\dim_{FS}(A \oplus B) = \frac{1}{2} \left(\dim_{FS}(A)+\dim_{FS}^{A}(B) \right).
	\end{align*}
\end{lemma}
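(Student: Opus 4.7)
The plan is to sandwich $\dim_{FS}(A\oplus B)$ between matching lower and upper bounds obtained by applying the inequalities developed in Section \ref{sec:fsrelativedimensionandapsubsequences} to the sequence $X = A \oplus B$ with common difference $d=2$, and then use the regularity of $A$ to collapse the two bounds.

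First, observe that $X = A \oplus B$ has A.P.\ subsequences $A^X_{2,0} = A$ and $A^X_{2,1} = B$. Applying Lemma \ref{lem:mainlowerboundinequality} with $d=2$ and $\sigma$ the identity permutation, I get the lower bound
\begin{align*}
\dim_{FS}(A \oplus B) \;\geq\; \tfrac{1}{2}\bigl(\dim_{FS}(A) + \dim_{FS}^{A}(B)\bigr).
\end{align*}

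For the matching upper bound, I apply Lemma \ref{lem:mainupperboundinequality} with $d=2$, $\sigma$ the identity, and choose the index $m=1$ inside the minimum. This yields
\begin{align*}
\dim_{FS}(A \oplus B) \;\leq\; \tfrac{1}{2}\bigl(\dim_{FS}^{A}(B) + \Dim_{FS}(A)\bigr).
\end{align*}
Here is where the regularity hypothesis on $A$ enters: since $\dim_{FS}(A) = \Dim_{FS}(A)$, the $\Dim_{FS}(A)$ on the right collapses to $\dim_{FS}(A)$, matching the lower bound exactly.

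Combining the two inequalities gives the desired equality. I do not expect any real obstacle, since both directions are immediate specializations of the general $d$-term inequalities already proved; the only point that needs care is that the upper bound inequality genuinely requires the strong dimension $\Dim_{FS}$ in all but one of the subsequences, so the hypothesis cannot be weakened to a single regularity assumption on $B$ in place of $A$ without changing the choice of $m$. No separate argument is needed, and no computation beyond citing the two lemmas and using $\dim_{FS}(A)=\Dim_{FS}(A)$.
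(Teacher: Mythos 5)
Your proof is correct and is essentially identical to the paper's own argument: the lower bound comes from Lemma \ref{lem:mainlowerboundinequality} with $d=2$ and $\sigma$ the identity, the upper bound comes from the $m=1$ term of the minimum in Lemma \ref{lem:mainupperboundinequality}, and regularity of $A$ collapses $\Dim_{FS}(A)$ to $\dim_{FS}(A)$. No differences worth noting.
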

\begin{proof}
Since, $A$ and $B$ are the two A. P. subsequences (with $d=2$) of $A
\oplus B$, by choosing $\sigma$ to be the identity mapping, it follows
from Lemma \ref{lem:mainlowerboundinequality} that,
\begin{align*}
	\dim_{FS}(A \oplus B) \geq \frac{1}{2} \left(\dim_{FS}(A)+\dim_{FS}^{A}(B) \right).
\end{align*}
By setting $X= A \oplus B$ and choosing the identity mapping $\sigma$ from $\mathrm{Bij}(2)$, we obtain the following using Lemma \ref{lem:mainupperboundinequality},
\begin{align*}
\dim_{FS}(x)=\dim(A \oplus B) &\leq \frac{1}{2} \left(\Dim_{FS}(A)+\dim_{FS}^{A}(B) \right).
\end{align*}
The required inequality now follows by observing that $\Dim_{FS}(A)=\dim_{FS}(A)$ since $A$ is regular.
\end{proof}

Now we prove the forward direction of the ideal claim for regular sequences.
\begin{lemma}
	Let $A,B \in \Sigma^\infty$ such that $A$ is a regular sequence and let $r \in [0,1]$. If $\dim_{FS}(A)=r$ and $\dim_{FS}^{A}(B)=r$ then, $\dim_{FS}(A \oplus B)=r$.
\end{lemma}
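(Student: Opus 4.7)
The plan is to recognize that this statement is essentially an immediate corollary of Lemma \ref{lem:regularsequencesdimensionequation}, so the proof should be a brief computation rather than a construction. I would invoke Lemma \ref{lem:regularsequencesdimensionequation} directly on the pair $(A,B)$, using the hypothesis that $A$ is regular to get the exact equality
\[
\dim_{FS}(A \oplus B) \;=\; \tfrac{1}{2}\left(\dim_{FS}(A) + \dim_{FS}^{A}(B)\right),
\]
and then substitute $\dim_{FS}(A) = r$ and $\dim_{FS}^{A}(B) = r$ to conclude that $\dim_{FS}(A \oplus B) = \tfrac{1}{2}(r + r) = r$.

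Since Lemma \ref{lem:regularsequencesdimensionequation} has already been proved just above, there is no real obstacle here; all of the technical work has been done upstream in Section \ref{sec:fsrelativedimensionandapsubsequences} (namely, the lower bound from Lemma \ref{lem:mainlowerboundinequality}, the upper bound from Lemma \ref{lem:mainupperboundinequality}, and the fact that regularity of $A$ collapses $\Dim_{FS}(A)$ to $\dim_{FS}(A)$ so that the two bounds match). The only subtle point worth flagging in the write-up is that regularity is used exclusively on $A$, not on $B$, and it is used precisely to identify the $\Dim_{FS}(A)$ appearing in the upper bound with the $\dim_{FS}(A)$ appearing in the lower bound.

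I would keep the proof to two or three lines, framed as ``By Lemma \ref{lem:regularsequencesdimensionequation} and the hypotheses, $\dim_{FS}(A\oplus B) = \tfrac12(r+r) = r$,'' and if desired follow with a one-sentence remark that this establishes the forward direction of the Ideal Claim restricted to regular $A$, complementing the counterexamples already given for general (non-regular) sequences.
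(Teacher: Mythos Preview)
Your proposal is correct and matches the paper's approach exactly: the paper also treats this lemma as an immediate consequence of Lemma~\ref{lem:regularsequencesdimensionequation}, so the substitution $\dim_{FS}(A\oplus B)=\tfrac12(r+r)=r$ is all that is needed.
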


The converse of the ideal claim is however false even for $A$ and $B$
that are both regular sequences. Let $A$ be $0^\infty$ and $B$ be any normal number. Both of these sequences are regular sequences. However, as we noted above, $\dim_{FS}(A \oplus B)=1/2$, but $\dim_{FS}(A)=0$ and $\dim_{FS}^A(B)=1$.


\section{Relation to Mutual Dimension}
Case and Lutz \cite{case2021finite} introduce finite-state mutual
dimension as a finite state analogue of mutual dimension
\cite{DowneyHirschfeldt2010}. Intuitively, it represents the density
of finite-state information shared between two sequences.

 For two sequences $X,Y \in \Sigma^\infty$, the \emph{finite-state
 mutual dimension} between X and Y,
\begin{align*}
	\mathrm{mdim}_{FS}(X;Y) = 
	\lim_{\ell \rightarrow \infty} \liminf\limits_{k\rightarrow \infty} I_\ell(X[:k\ell]\;;\;Y[:k\ell])
\end{align*}
where for two finite sequences $X,Y \in \Sigma^{k\ell}$,
\begin{equation}
	I_\ell(X\;;\;Y) = H_\ell(X) - H_\ell(X\;|\;Y).	\label{eq:n1}
\end{equation}

 The \emph{finite-state strong mutual dimension} $\mathrm{Mdim}_{FS}(X;Y)$ between $X$ and $Y$ is defined similarly as $\mathrm{mdim}_{FS}(X;Y)$ by replacing the $\liminf$ in the definition with a $\limsup$. Using \eqref{eq:n1} and routine real analytic arguments it follows that the following relationships between finite-state relative dimension and finite-state mutual dimension hold:

\begin{enumerate}
	\item $\dim_{FS}(X) - \Dim_{FS}^Y(X) \leq \mathrm{mdim}_{FS}(X:Y) \leq \dim_{FS}(X) - \dim_{FS}^Y(X)$  
	\item $\Dim_{FS}(X) - \Dim_{FS}^Y(X) \leq \mathrm{Mdim}_{FS}(X:Y) \leq \Dim_{FS}(X) - \dim_{FS}^Y(X)$  
\end{enumerate}

If the sequences $X$ and $Y$ are regular, then by definition, $\dim_{FS}(X) = \Dim_{FS}(X)$ and $\dim_{FS}^Y(X) = \Dim_{FS}^Y(X)$ and so,

\begin{align*}
\dim_{FS}^Y(X) =  \dim_{FS}(X) -  \mathrm{mdim}_{FS}(X:Y) 
\end{align*}

\section{Relation to finite-state independence}
  Becher, Carton and Heiber \cite{Becher2018} define finite-state
  independence between two strings. They use a joint compression model
  composing of a one-to-one finite-state transducer with auxiliary
  input.  Two strings $X$ and $Y$ are said to be finite-state
  independent when one does not help to compress the other in this
  model.  However, due to the presence of $\varepsilon$-transitions,
  the input tape of $X$ and $Y$ need not be read in tandem. Our model
  is more restrictive, hence more pairs of sequences are relatively
  finite-state random in our model. This is a consequence of the
  theorems in the previous section.
  
  \begin{lemma}
There are sequences $X$ and $Y$ that are not finite-state independent
but $X$ and $Y$ are relatively normal, and $X \oplus Y$ is normal.
  \end{lemma}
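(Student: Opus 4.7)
The plan is to exhibit explicit sequences $X, Y \in \Sigma^\infty$ satisfying (i) $X \oplus Y$ is normal and (ii) $(X, Y)$ is not finite-state independent in the Becher-Carton-Heiber sense. By Theorem~\ref{thm:vanLambalgen}, condition (i) automatically forces $X$ and $Y$ to be relatively normal in our finite-state sense, so these two properties together deliver the lemma.

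For the construction, fix two jointly normal sequences $X, R \in \Sigma^\infty$, meaning that $X \oplus R$ is normal; such a pair is obtained by projecting a normal sequence over the product alphabet $\Sigma \times \Sigma$ onto its two coordinates. Now define $Y \in \Sigma^\infty$ by $Y[2j] = R[j]$ and $Y[2j+1] = X[j]$ for every $j \in \mathbb{N}$. The key idea is that $Y$ carries a delayed copy of $X$ at its odd positions: a BCH transducer with auxiliary tape $Y$ can, using $\varepsilon$-transitions, advance $Y$ by two symbols before reading each $X[j]$, thereby learning $Y[2j+1] = X[j]$ in advance. Consequently $X$ becomes compressible to $o(|X|)$ output bits relative to $Y$, which establishes (ii).

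To verify (i), observe that each $X[n]$ occurs in $X \oplus Y$ at exactly the two positions $2n$ and $4n+3$. A short arithmetic check shows that both positions can lie inside a length-$L$ window starting at $p$ only when $p \leq L - 4$, a finite (hence density-zero) set of starting positions. At every other $p$, the length-$L$ block at $p$ is a bijective rearrangement of $L$ distinct bits drawn from $X$ and $R$ at positions determined by $p$; the joint normality of $(X, R)$ then forces each length-$L$ pattern to occur with the uniform frequency $|\Sigma|^{-L}$, so $X \oplus Y$ is normal.

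The principal technical obstacle is the block-frequency argument in (i): one must verify carefully that the density-one collection of good starting positions indeed produces the uniform distribution over length-$L$ patterns, which requires the joint normality of $(X, R)$ and a careful accounting of the two arithmetic progressions in $X$ and one in $R$ whose bits appear in each block. By way of contrast, the synchronous finite-state model is inherently unable to harness the correlation $Y[2j+1] = X[j]$: a finite-state gambler cannot retain the value of $X[j]$ in its bounded state for the $j$ input steps separating its original reading from its reuse; only the asynchronous $\varepsilon$-transitions of the BCH model enable this alignment, which is precisely why FS-independence in their sense is a strictly stronger notion than the synchronous relative normality supplied by Theorem~\ref{thm:vanLambalgen}.
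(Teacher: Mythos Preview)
The paper's proof is two lines: it cites Theorem~4.3 of Becher, Carton and Heiber \cite{Becher2018}, which already furnishes normal $X,Y$ with $X\oplus Y$ normal but $(X,Y)$ not finite-state independent, and then invokes Theorem~\ref{thm:vanLambalgen} to obtain relative normality. You instead try to build such a pair explicitly, which is a legitimate and more ambitious route, but your argument for the normality of $X\oplus Y$ has a genuine gap.

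With $Y[2j]=R[j]$ and $Y[2j+1]=X[j]$, the $j$-th length-$4$ disjoint block of $X\oplus Y$ is
\[
\bigl(X[2j],\,R[j],\,X[2j+1],\,X[j]\bigr).
\]
For $X\oplus Y$ to be normal one needs, already at this block length, that the quadruple above is asymptotically uniform on $\Sigma^{4}$; in particular the pair $(X[j],X[2j])$ must be asymptotically uniform on $\Sigma^{2}$. This is a statement about the correlation of $X$ with itself at two different scales, and it is \emph{not} a consequence of normality of $X$ (nor of the joint normality of $(X,R)$, which only controls the joint distribution of $X$ and $R$ along a \emph{single} common index). Your ``distinct bits'' observation is correct --- for $p>L-4$ no individual $X[n]$ is read twice in a length-$L$ window --- but distinctness of indices does not yield uniform joint distribution: the window simultaneously samples $X$ at indices near $p/2$ (from the even positions of $X\oplus Y$) and at indices near $p/4$ (from the positions $\equiv 3 \pmod 4$), and nothing in the hypothesis ``$X\oplus R$ normal'' constrains how these two far-apart ranges of $X$ correlate. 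Producing $X,Y$ with $X\oplus Y$ normal while $Y$ carries a delayed copy of $X$ is precisely the non-trivial content of the Becher--Carton--Heiber construction; your sketch does not supply the missing multi-scale equidistribution argument, so as written the proposal does not establish~(i). The non-independence part~(ii) is fine: one can indeed build, for each fixed block length $k$, a finite-state one-to-one transducer that reads $2k$ auxiliary symbols per $k$ input symbols and outputs a single flag bit when the block of $X$ matches the predicted values $Y[2j+1]$, giving compression ratio $1/k$ on this particular pair.
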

  \begin{proof}
By Theorem 4.3 in \cite{Becher2018}, there exists two normal strings
$X$ and $Y$ such that $X \oplus Y$ is normal but $X$ and $Y$ are not
finite-state independent. Since $\dim_{FS}(X \oplus Y) = 1$, by Theorem
\ref{thm:vanLambalgen}, we have $\dim^X(Y) = 1$.
  \end{proof}

The converse question, \emph{i.e.} whether finite-state independence
implies relative randomness, remains open.

\bibliographystyle{plain} 
\bibliography{main}

\end{document}